\theoremstyle{plain}
\newtheorem{thm}{Theorem}
\numberwithin{thm}{section}
\newtheorem{prop}[thm]{Proposition}
\newtheorem{lem}[thm]{Lemma}
\theoremstyle{definition}
\newtheorem{df}{Definition}
\numberwithin{df}{section}
\theoremstyle{remark}
\newtheorem*{rem}{Remark}
\newcommand{\size}{\operatorname{size}}
  \newcommand{\Res}{\operatorname{Res}}
\newcommand{\PCSP}{\operatorname{PCSP}}
\newcommand{\CSP}{\operatorname{CSP}}
\newcommand{\Pol}{\operatorname{Pol}}
\newcommand{\poly}{\operatorname{poly}}
\newcommand{\Ham}{\operatorname{Ham}}
\newcommand{\AND}{\operatorname{AND}}
\newcommand{\OR}{\operatorname{OR}}
\newcommand{\PAR}{\operatorname{PAR}}
\newcommand{\quot}{\operatorname{quot}}
\newcommand{\MAJ}{\operatorname{MAJ}}
\newcommand{\MALT}{\operatorname{AT}}
\newcommand{\AT}{\MALT}
\newcommand{\NEQ}{\operatorname{NEQ}}
\newcommand{\id}{\operatorname{id}}
\newcommand{\THR}{\operatorname{THR}}
\newcommand{\PER}{\operatorname{PER}}
\newcommand{\Conv}{\operatorname{Conv}}
\newcommand{\Aff}{\operatorname{Aff}}
\newcommand{\THRPER}{\operatorname{THR-PER}}
\newcommand{\lcm}{\operatorname{lcm}}
\newcommand{\ar}{\operatorname{ar}}
\newcommand{\REG}{\operatorname{REG}}
\newcommand{\REGPER}{\operatorname{REG-PER}}
\newcommand{\LP}{\operatorname{LP}}
\newcommand{\LE}{\operatorname{LE}}
\newcommand{\LPLE}{\operatorname{LPLE}}
\newcommand{\ot}{\leftarrow}
\title{An Algorithmic Blend of LPs and Ring Equations for Promise CSPs}
\author{Joshua Brakensiek\thanks{Carnegie Mellon University, Pittsburgh, PA 15213, USA. Email: {\tt jbrakens@andrew.cmu.edu}. Research supported in part by an REU supplement to NSF CCF-1526092.} \and Venkatesan Guruswami\thanks{Computer Science Department, Carnegie Mellon University, Pittsburgh, PA 15213. Email: {\tt venkatg@cs.cmu.edu}. Research supported in part by NSF grants CCF-1422045 and CCF-1526092.} }
\date{}
\begin{document}

\maketitle
\thispagestyle{empty}
\begin{abstract}
Promise CSPs are a relaxation of constraint satisfaction problems where the goal is to find an assignment satisfying a relaxed version of the constraints. Several well known problems can be cast as promise CSPs including approximate graph and hypergraph coloring, discrepancy minimization, and interesting variants of satisfiability. Similar to CSPs, the tractability of promise CSPs can be tied to the structure of associated  operations on the solution space called (weak) polymorphisms. However, compared to CSPs whose polymorphisms are well-structured algebraic objects called clones, polymorphisms in the promise world are much less constrained --- essentially any infinite family of functions obeying mild conditions can arise as polymorphisms. Under the thesis that non-trivial polymorphisms govern tractability, promise CSPs therefore provide a fertile ground for the discovery of novel algorithms.

\smallskip
In previous work, we classified all tractable cases of Boolean promise CSPs when the constraint predicates are symmetric. The algorithms were governed by three kinds of polymorphism families: (i) parity functions, (ii) majority functions, or (iii) a non-symmetric (albeit block-symmetric) family we called alternating threshold. In this work, we provide a vast generalization of these algorithmic results. Specifically, we show that promise CSPs that admit a family of ``regional-periodic" polymorphisms are solvable in polynomial time, assuming that determining which region a point is in can be computed in polynomial time. Such polymorphisms are quite general and are obtained by gluing together several functions that are periodic in the Hamming weights in different blocks of the input. For example, we can have functions that equal parity for relative Hamming weights up to 1/2, and Majority (so identically 1) for weights above 1/2. 

\smallskip
Our algorithm is based on a novel combination of linear programming and solving linear systems over rings. We also abstract a framework based on reducing a promise CSP to a CSP over an infinite domain, solving it there (via the said combination of LPs and ring equations), and then rounding the solution to an assignment for the promise CSP instance. The rounding step is intimately tied to the family of polymorphisms, and clarifies the connection between polymorphisms and algorithms in this context. As a key ingredient, we introduce the technique of finding a solution to a linear program with integer coefficients that lies in a different ring (such as $\mathbb Z[\sqrt{2}]$) to bypass ad-hoc adjustments for lying on a rounding boundary.
\end{abstract}
\newpage

\thispagestyle{empty}
\tableofcontents

\newpage

\section{Introduction}\label{sec:intro}

Constraint satisfaction problems (CSPs) have driven some of the most influential developments in theoretical computer science, 
from NP-completeness to the PCP theorem to semidefinite programming algorithms to the Unique Games conjecture. The (recently settled~\cite{DBLP:conf/focs/Bulatov17,DBLP:conf/focs/Zhuk17})
algebraic dichotomy conjecture~\cite{DBLP:journals/siamcomp/FederV98,Bulatov2005} 
establishes that all CSPs are either NP-complete or decidable in polynomial time. Further, this line of work pinpoints the mathematical structure that allows for efficient algorithms: when the
solution space admits certain non-trivial closure operations called \emph{polymorphisms}, the CSP is tractable, and
otherwise it is NP-hard. For instance, for linear equations, if $v_1,v_2,v_3$ are three solutions, then so is $v_1-v_2+v_3$,
and the underlying polymorphism is $f(x,y,z) = x-y+z$. 

Such polymorphisms and resulting CSP algorithms are unfortunately relatively rare. For instance, in the Boolean case, where the dichotomy has been long known~\cite{Schaefer:1978}, there are only three non-trivial tractable cases: Horn SAT (along with its complement dual Horn SAT), 2-CNF satisfiability, and Linear Equations mod 2. The situation for larger domains is similar, with even arity two CSPs like graph $k$-colorability being NP-hard for $k \ge 3$. One well-studied approach to cope with the prevalent intractability of CSPs is to settle for approximation algorithms that satisfy a guaranteed fraction of constraints (the \emph{Max CSP} problem). This has been a very fruitful avenue of research from both the algorithmic and hardness sides.
In this context, a general algorithm based on semidefinite programming is known to deliver approximation guarantees matching the performance of a variant of polymorphisms tailored to optimization (namely ``low-influence approximate polymorphisms'')~\cite{DBLP:journals/corr/Brown-CohenR15}, and the Unique Games conjecture implies this cannot be improved upon~\cite{DBLP:journals/siamcomp/KhotKMO07,prasad,RaghavendraThesis}. Thus, at least conjecturally, we have a link between mathematical structure and the existence of efficient approximation algorithms, although notably such work does not apply to the approximation of satisfiable CSP instances.

\subsection{Promise CSPs and Polymorphisms}

The Max CSP framework, however, does not capture problems like approximate graph coloring where one is allowed more colors than the chromatic number of the graph, for example $10$-coloring a $3$-colorable graph. An extension of CSPs, called promise CSPs, captures such problems. Informally, a promise CSP asks for an assignment to a CSP instance that satisfies a relaxed version of the CSP instance. For instance, given a $k$-SAT instance promised to have an assignment satisfying $3$ literals per clause, we might settle for an assignment satisfying an odd number of literals in each clause. (We will give formal and more general definitions in Section~\ref{sec:prelim}, but briefly a promise CSP is defined by pairs of predicates $(P_i,Q_i)$ with $P_i \subseteq Q_i$, and given an instance of CSP with defining predicates $\{P_i\}$, we would like to find an assignment that satisfies the instance when $P_i$ is replaced with $Q_i$.) A promise CSP called $(2+\epsilon)$-SAT (and a variant related to $2$-coloring low-discrepancy hypergraphs) was studied in \cite{DBLP:journals/siamcomp/AustrinGH17}. This work also brought to the fore the concept of polymorphisms\footnote{Previous literature (e.g., \cite{DBLP:journals/siamcomp/AustrinGH17}) called polymorphisms of promise CSPs ``weak'' polymorphisms. When confusion may arise with their CSP counterparts, we refer to them as ``polymorphisms of promise CSPs''} associated with the promise CSP, which are functions that are guaranteed to map tuples in $P_i$ into $Q_i$ for every $i$, generalizing the concept of polymorphisms from the case when $P_i=Q_i$ (again, see Section~\ref{sec:prelim} for formal definitions). Some new hardness results for graph and hypergraph coloring were then obtained using the polymorphism framework in \cite{DBLP:conf/coco/BrakensiekG16}, and was also used to settle the notorious 3 vs. 5 coloring hardness in \cite{BKO18}.

In \cite{DBLP:conf/soda/BrakensiekG18}, we undertook a systematic investigation of promise CSPs via the lens of polymorphisms, building some theory of their structure and interplay with both algorithms and complexity. For the latter, there is a Galois correspondence implying that the complexity of a promise CSP is completely dictated by its polymorphisms~\cite{Pippenger2002}. Thus, from the perspective of classifying the complexity of promise CSPs, one can just focus on polymorphisms and forget about the relations defining the CSP. 

Our work, however, revealed that the space of polymorphisms for promise CSPs is very rich. Therefore, the program of classifying the complexity of promise CSPs via polymorphisms (along the lines of the successful theory establishing a dichotomy in the case of CSPs) must overcome significant challenges that go well beyond the CSP case.
The polymorphisms associated with CSPs are closed under compositions (since the output belongs to the same relation as the inputs), and as a result they belong to a well-structured class of objects in universal algebra called clones. Polymorphisms for promise CSPs inherently lose this closure under composition (as the output no longer belongs to the same relation as the inputs). They are therefore much less constrained --- essentially any family of functions obeying mild conditions (projection-closed and finitizable) can arise as polymorphisms~\cite{DBLP:conf/soda/BrakensiekG18, Pippenger2002}. Further, whereas a single non-trivial polymorphism can suffice for tractability (as it can be composed with itself to give more complex and higher arity functions), in the case of polymorphisms we really need an \emph{infinite family} of them in order to develop algorithms for the associated promise CSP. Indeed, the hardness results of \cite{DBLP:journals/siamcomp/AustrinGH17,DBLP:conf/soda/BrakensiekG18} proceed by establishing a junta-like structure for the polymorphisms, and thus the lack of a rich infinite family of them. 

The vast variety of possible families of polymorphisms means that there are still numerous algorithms, and possibly whole new algorithmic paradigms, yet to be discovered in the promise CSP framework. This is the broad agenda driving this work. Our main result in \cite{DBLP:conf/soda/BrakensiekG18} classified all tractable cases of Boolean promise CSPs whose defining predicates ($P_i,Q_i$) are symmetric.\footnote{A predicate $P$ is symmetric if for all $(x_1, \hdots, x_m) \in P$ and all permutations $\pi : [m]\to [m]$, we have that $(x_{\pi(1)}, \hdots, x_{\pi(m)}) \in P.$ We say that ($P_i, Q_i$) is symmetric if both $P_i$ and $Q_i$ are symmetric.} The algorithms were governed by (essentially) three nicely structured polymorphism families: (i) parity functions, (ii) majority functions, or (iii) a non-symmetric (albeit block-symmetric) family we called alternating threshold (see Theorem~\ref{thm:BG18-main} for the precise statement). 

\subsection{Our results}

This work is motivated by the program of more systematically leveraging families of polymorphisms toward the development of new algorithmic approaches to promise CSPs. In this vein, we provide a vast generalization of the above-mentioned algorithmic results for the case of symmetric Boolean promise CSPs, by exhibiting algorithms based on rather general (albeit still structured) families of polymorphisms.\footnote{One possible concern is that no promise CSPs (or only ``trivial'' promise CSPs) admit such a family $\mathcal F$ of polymorphisms. 
  A detailed discussion of why this is not the case is available in Appendix~E of the full version of \cite{DBLP:conf/soda/BrakensiekG18}.} Specifically, we show that promise CSPs that admit a family of ``regional-periodic'' polymorphisms are polynomial time solvable. Such polymorphisms are quite general; their precise description is a bit technical but at a high level they are obtained by gluing together, for various ranges of Hamming weights in prescribed blocks of the input, functions that are periodic in the Hamming weights in their respective block.\footnote{While we focus on the case that the domain of the $P_i$'s is Boolean (although the $Q_i$'s can be over any finite domain), this is mostly for notational simplicity. Our methods are general enough to be readily adapted to any finite domain; see Section~\ref{sec:larger}.} 

The algorithms require a novel combination of linear programming and solving equations over rings. At a high level, the algorithms consist of some combination of the following three steps
\begin{enumerate}
\item Relaxing the promise CSP as a linear program over the rationals and solving it (possibly multiple times).
\item Relaxing the promise CSP as a system of linear equations over a commutative ring and solving it.
\item Combining these solutions using a rounding rule to obtain a solution to the promise CSP. 
\end{enumerate}

Although Step (1) may seem standard, solving the LP naively may encounter issues in Step (3). For example, if the rounding rule is of the form ``Round to 0 if less than $1/2$, and Round to 1 if greater than $1/2$,'' the algorithm will fail on the case the LP solver assigns $1/2$ to a variable. Previous works encountering these issues \cite{DBLP:journals/siamcomp/AustrinGH17, DBLP:conf/soda/BrakensiekG18} circumvent the problem with ad-hoc techniques (like adjusting the linear program solution slightly). We propose to work around this issue with a more principled technique: finding a solution to the linear program that lies in a completely different \emph{ring} like $\mathbb Z[\sqrt{2}]$. Since $1/2 \not\in \mathbb Z[\sqrt{2}]$ the issue disappears! We believe that this is the first application of such an idea to approximation algorithms.\footnote{A previous version of this paper falsely claimed that finding solutions to linear programs which lie in rings such as $\mathbb Z[\sqrt{2}]$ was previously done by Adler and Beling~\cite{DBLP:journals/algorithmica/AdlerB94,DBLP:journals/mp/AdlerB92}, but in those works the solutions lie in fields (like $\mathbb Q[\sqrt{2}]$).}

 Below we state a special case of this result when there is only one block, so that the polymorphisms are ``threshold-periodic'' \emph{symmetric} functions (for simplicity, this case is treated first in Section~\ref{sec:boolean}, before the more general block-symmetric case in Section~\ref{sec:Boolean-block}). 
 Namely, such polymorphisms look at the range of the Hamming weight of its input, based on which it applies a certain periodic function of the Hamming weight.  We stress that imposing a symmetry requirement on the polymorphisms is very different from imposing a symmetry condition on the predicates. At least for the Boolean domain, the latter was solved\footnote{This classification used an additional assumption that the predicates can be applied to negations of variables.} in our earlier work~\cite{DBLP:conf/soda/BrakensiekG18}, whereas the general dichotomy for symmetric polymorphisms is still open\footnote{In the case of CSPs, it is known that having a symmetric polymorphism of \emph{every} arity allows for LP relaxations \cite{DBLP:conf/innovations/KunOTYZ12, DBLP:conf/dagstuhl/BartoKW17}. Such results also generalize to promise CSPs, but they are only a special case of having infinitely many symmetric polymorphisms. In fact, almost none of the tractable examples in \cite{DBLP:conf/soda/BrakensiekG18} have symmetric polymorphisms of all arities.}, although this work is a substantial step in that direction.

\begin{thm}[Informal version of Theorem~\ref{thm:boolean-threshold-periodic}]\label{thm:inf}
  Let $E$ be a finite set, let $0 = \tau_0< \tau_1 < \dots < \tau_{k-1} < \tau_k=1$ be a sequence of rationals, let $M = (M_1, M_2, \hdots, M_k)$ a sequence of positive integers, and let $\eta_i : \mathbb Z/M_i\mathbb Z \to E$ be periodic functions for $i=1,\dots,k$.  Consider a promise CSP $\{(P_i, Q_i)\}$ with the $P_i$'s and $Q_i$'s defined over the domains $\{0, 1\}$ and $E$, respectively. Further suppose the promise CSP  admits a family of polymorphisms $f_L : \{0,1\}^L \to E$ for infinitely many $L$ such that
  \[ f_L(x) = \begin{cases}
      \eta_1(0) & \Ham(x) = 0\\
      \eta_i(\Ham(x)\!\!\!\!\!\!\mod M_i) & L\tau_{i-1} < \Ham(x) < L\tau_i, \ i=1,2,\dots,k \\
      \eta_{k}(L) & \Ham(x) = L.
    \end{cases}\] 
  where $\Ham(x)$ denotes the Hamming weight of $x$. Then the promise CSP can be solved in polynomial time.
\end{thm}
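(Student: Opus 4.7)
The plan is to instantiate the three-step schema from the introduction: an LP stage, a ring stage, and a rounding stage. In the LP stage, I would solve a relaxation of the $P$-side of the instance to determine, for each variable $v$, which \emph{region} $i(v) \in \{1,\dots,k\}$ the variable sits in (i.e., which interval $(\tau_{i-1},\tau_i)$ contains its LP value). In the ring stage, with $M = \lcm(M_1,\dots,M_k)$, I would solve a linear system over $\mathbb{Z}/M\mathbb{Z}$ recovering, for each $v$, a residue $r_v$ representing the intended input of $\eta_{i(v)}$. Finally, for $L$ chosen from the polymorphism family divisible by both $M$ and a common denominator of the $\tau_j$, I would output $\phi(v) := \eta_{i(v)}(r_v \bmod M_{i(v)})$ and invoke the polymorphism property of $f_L$ to certify that $\phi$ satisfies every $Q$-constraint. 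As a sanity check, this degenerates to Gaussian elimination in the parity case ($k=1$, $M_1 = 2$, $\eta_1 = \id$) and to pure LP rounding in the majority case ($k=2$, $M_i = 1$).

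For the LP stage I would use the basic LP relaxation: a variable $x_v \in [0,1]$ per $v$, and for each constraint $P(v_1,\dots,v_m)$ auxiliary distribution variables forcing $(x_{v_1},\dots,x_{v_m}) \in \Conv(P)$. Feasibility is implied by satisfiability of the promise CSP, and $x_v^\star$ is meant to determine $i(v)$. The delicate point is that a rational $x_v^\star$ can land exactly on some rational threshold $\tau_j$, leaving $i(v)$ ambiguous. Following the paper's novel idea, I would instead compute a feasible solution $x^\star \in \mathbb{Z}[\sqrt{2}]^V$; since $\mathbb{Z}[\sqrt{2}] \cap \mathbb{Q} = \mathbb{Z}$ and the interior $\tau_j$ are rational non-integers, a generic $x_v^\star$ avoids every $\tau_j$ and unambiguously picks out $i(v)$, with the endpoint cases $x_v^\star \in \{0,1\}$ treated separately.

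For the ring stage, after fixing $i(v)$, I would introduce an unknown $r_v \in \mathbb{Z}/M\mathbb{Z}$ meant to equal the column Hamming weight modulo $M$ obtained by stacking $L$ satisfying $P$-assignments in the $v$-th column. For each $P$-constraint on $v_1,\dots,v_m$, the polymorphism property forces $(r_{v_1},\dots,r_{v_m})$ to lie in the image of the $L$-fold Minkowski sum $L \cdot P_c$ modulo $M$; a key lemma would be that for $L$ sufficiently divisible this image is an affine coset of $(\mathbb{Z}/M\mathbb{Z})^m$, so the aggregate system is linear and solvable by Gaussian elimination. Given such a solution, the assignment $\phi(v) := \eta_{i(v)}(r_v \bmod M_{i(v)})$ satisfies all $Q$-constraints, because the LP regions together with the ring residues certify the existence of $L$ ``virtual'' $P$-satisfying rows whose $v$-column Hamming weights land in $(L\tau_{i(v)-1},L\tau_{i(v)})$ with residue $r_v$, and $f_L$ maps any such stack into $Q_c$.

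The hardest part will be the interface between the two stages. Three subtle claims are needed: (i) one can efficiently find an LP-feasible point in $\mathbb{Z}[\sqrt{2}]$ avoiding every threshold, (ii) the ring system built from the LP's region choices is always consistent whenever the LP is feasible, and (iii) a valid ring solution combined with the LP's region data can actually be realized as the column sums of genuine $P$-satisfying rows, so that $f_L$'s guarantee applies. Claim (ii) appears the most delicate, since the regions are determined by \emph{strict} LP inequalities and one must check that the Minkowski cosets $L \cdot P_c \pmod M$ are rich enough to realize any LP-consistent assignment of regions. I expect the proof to proceed via a structural lemma bridging the stages: $\mathbb{Z}[\sqrt{2}]$-feasibility of the LP plus sufficient divisibility of $L$ imply solvability of the ring system, after which the rounding becomes a direct application of the polymorphism property of $f_L$.
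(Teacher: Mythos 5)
Your high-level architecture is the paper's: a Basic LP over $\mathbb Z[\sqrt 2]$ to pick the interval, an affine system over $\mathbb Z/M\mathbb Z$ with $M=\lcm(M_1,\dots,M_k)$ to pick the residue, and a rounding $\eta_{i(v)}(r_v \bmod M_{i(v)})$ certified by applying $f_L$ to a stack of $P$-rows. (One small point in your favor: you do not need a ``generic'' LP point --- \emph{every} element of $\mathbb Z[\sqrt2]$ avoids the interior thresholds, since those are non-integer rationals.) However, there are two genuine gaps. First, your prescription for $L$ fails: the hypothesis only gives the polymorphism for \emph{infinitely many} $L$, so you cannot assume any of them is divisible by $M$; worse, an $L$ divisible by a common denominator of the interior $\tau_j$ makes $L\tau_j$ an integer, and $f_L$ is then not even defined at Hamming weight $L\tau_j$ (the case analysis in the theorem is not exhaustive there), so such $L$ are never in the family. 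The paper's fix is to pigeonhole on the residue $r := L \bmod M$ that occurs for infinitely many admissible $L$ and to bake $r$ into the ring embedding, sending $1 \mapsto r$, so the affine relaxation is of $r\cdot P$ rather than of $P$. This also repairs your ``key lemma'': the $L$-fold sumset $L\cdot P_c \bmod M$ does stabilize for large $L$, but to the coset $Ls_0 + \langle P_c - s_0\rangle$, whose shift depends on $L \bmod M$ --- exactly the dependence the scaling by $r$ absorbs.

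Second, your claim (iii) --- that an LP solution and a ring solution found independently can be jointly realized by genuine $P$-rows --- is the heart of the proof and is left as a hoped-for ``structural lemma.'' The paper's argument is per constraint: write the LP point as $\sum_j \alpha_j X^j$ and the ring point as $\sum_j r_j\,(r X^j)$ over the \emph{same} enumeration $X^1,\dots,X^m$ of $P_c$, then choose integer multiplicities $w_j$ with $\sum_j w_j = L$, $w_j \equiv r_j r \pmod M$, and $|w_j - \alpha_j L| \le M$; the first two conditions are compatible precisely because $\sum_j r_j r \equiv r \equiv L \pmod M$, and the third makes the column-normalized Hamming weights land in the same open interval as the LP values once $L \gg mM/\epsilon$. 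Stacking $w_j$ copies of $X^j$ and applying $f_L$ then yields the rounding guarantee. Without this explicit simultaneous selection (and the residue bookkeeping above), your claim (ii)/(iii) interface does not close; with them, your plan becomes the paper's proof of Theorem~\ref{thm:boolean-threshold-periodic} via Theorem~\ref{thm:basiclp+affine}.
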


Although we still do not have a general dichotomy for symmetric polymorphisms, we boldly conjecture that \emph{every} finite Boolean Promise CSP with infinitely many symmetric polymorphisms has a threshold-periodic family of polymorphisms. This conjecture would imply that every finite Boolean Promise CSP with infinitely many symmetric polymorphisms is polynomial-time tractable. The main hurdle to proving this conjecture seems topological (or combinatorial) rather than algorithmic: showing that every infinite family of symmetric polymorphisms corresponding to a finite Boolean Promise CSP implies the existence of an infinite threshold-periodic family.

\smallskip \noindent \textbf{Linear Programming Result.} As discussed earlier, since we desire to have a rich family of rounding functions, we need to ensure that our linear program does not output a solution that lies on the rounding boundaries. A novel technique to avoid this is to find a solution to the linear program that lies in a different ring, such as $\mathbb Z[\sqrt{2}]$. This is proved in the following result.

\begin{thm}\label{thm:Z2}
  Let $Mx \le b$ be a linear program (without an objective function\footnote{Any result on solving a linear program without an objective can be extended to a result on approximately maximizing an objective by using a binary search protocol.}) over the rational numbers, with $M \in \mathbb Q^{m \times n}, b \in \mathbb Q^m$. Then, one can decide whether there is a solution $x \in \mathbb Z[\sqrt{q}]$, where $q$ is a non-square positive integer, in $\poly(n + m + \size)$ time, where $\size$ is the number of bits needed to represent the input.
\end{thm}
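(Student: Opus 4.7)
The plan is to characterize feasibility over $\mathbb Z[\sqrt{q}]$ in terms of two classical feasibility problems and then to build a solution by a perturbation argument. Writing $x = a + \sqrt{q}\,c$ with $a,c \in \mathbb Z^n$, any rational linear equation $\langle m, x\rangle = \beta$ decouples along the $\mathbb Q$-basis $\{1,\sqrt q\}$ of $\mathbb Q[\sqrt q]$ into $\langle m, a\rangle = \beta$ and $\langle m,c\rangle = 0$. Hence if the polyhedron $P := \{x\in\mathbb R^n : Mx \le b\}$ has implicit equalities $M_= x = b_=$ (the rows of $M$ tight at every point of $P$), any $\mathbb Z[\sqrt{q}]$-feasible $x$ forces $a\in\mathbb Z^n$ to satisfy $M_= a = b_=$ over the integers. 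My main claim is that the converse holds too: $P \cap \mathbb Z[\sqrt{q}]^n \neq \emptyset$ iff $P \neq \emptyset$ \emph{and} $M_= a = b_=$ has some integer solution $a_0$.

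For the nontrivial direction I would compute an integer basis $N \in \mathbb Z^{n\times d}$ for $\ker_{\mathbb Z} M_=$ (via the Hermite/Smith normal form of $M_=$), where $d = n - \operatorname{rank} M_=$. The map $w \mapsto a_0 + Nw$ is then a bijection $\mathbb Z[\sqrt q]^d \to \{x \in \mathbb Z[\sqrt q]^n : M_= x = b_=\}$, and the remaining inequalities $M_{\ne} x \le b_{\ne}$ pull back to a reduced LP $N' w \le b'$ with $N' = M_{\ne} N$, $b' = b_{\ne} - M_{\ne} a_0$. By the very definition of implicit equalities, the reduced LP is \emph{full-dimensional} in $\mathbb R^d$, so it contains a rational point $w^* \in \mathbb Q^d$ with strict slack at least $\epsilon := 2^{-\poly(\size)}$ in every constraint (produced by maximizing a common slack variable, whose optimum admits the standard bit-complexity lower bound for rational LPs). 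Since $\sqrt q$ is irrational, $\mathbb Z[\sqrt q]$ is dense in $\mathbb R$, so the Euclidean ball of radius $\epsilon$ around $w^*$ contains a $\mathbb Z[\sqrt q]^d$ point, which lifts back to a feasible $x \in \mathbb Z[\sqrt q]^n$.

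The resulting algorithm runs: (i) solve $Mx \le b$ over $\mathbb Q$ in polynomial time; (ii) identify $M_=,b_=$ by minimizing each row of $M$ over $P$; (iii) decide whether $M_= a = b_=$ has an integer solution via its Smith normal form, returning NO if not; and, if yes, (iv) compute $a_0, N, w^*$ and round $w^*$ coordinatewise to some $w \in \mathbb Z[\sqrt q]^d$ within $\epsilon$ of $w^*$. The rounding exploits the continued-fraction convergents $p_k/q_k$ of $\sqrt q$: the elements $\delta_k := q_k\sqrt q - p_k \in \mathbb Z[\sqrt q]$ satisfy $|\delta_k| < 1/q_{k+1}$, and the denominators grow at least like Fibonacci numbers, so only $O(\log(1/\epsilon))$ convergents are needed; for any rational target $r$ one then chooses integers $j$ and $a$ to cancel the ``mod $\delta_k$'' and integer parts of $r$ in succession, producing $a + j\delta_k \in \mathbb Z[\sqrt q]$ within $\epsilon$ of $r$ in $\poly(\log(1/\epsilon) + \size(r) + \size(q))$ time.

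The main obstacle is quantitative rather than structural: one must check that the common-slack LP actually achieves optimum at least $2^{-\poly(\size)}$ (so coordinatewise $\epsilon$-perturbations of $w^*$ remain feasible) and that every intermediate object --- the entries of $N$, the particular solution $a_0$, the convergents $p_k,q_k$, and $w^*$ itself --- has bit-length $\poly(n+m+\size)$. These all follow from standard tools (polynomial bit bounds on vertices and interior points of rational polyhedra; polynomial-time Hermite/Smith normal form algorithms with polynomially bounded outputs; geometric growth of continued-fraction denominators of any irrational), but the bookkeeping to carry such bounds through the reduction from $P$ to the $w$-LP, and then through the coordinatewise rounding, is the substantive step.
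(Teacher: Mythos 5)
Your proposal is correct and follows essentially the same route as the paper's proof: restrict to the affine hull of the polyhedron (the implicit equalities), solve the resulting linear system over the ring via the decomposition $x = a + c\sqrt{q}$ into integer linear systems (this is exactly the paper's proof that $\mathbb Z[\sqrt{q}]$ is LE-solvable), and then exploit density of $\mathbb Z[\sqrt{q}]$ in $\mathbb R$ to perturb a rational relative-interior point into the ring while staying feasible, with all the same polynomial bit-size bookkeeping. The only differences are implementation details: you change variables via a lattice basis of $\ker_{\mathbb Z} M_{=}$ and round coordinates using continued-fraction convergents of $\sqrt{q}$, whereas the paper perturbs the coefficients of $y_0 - x_0$ in an orthogonal integer basis of the direction space and realizes the density step with powers of a fixed element of $A \cap (1/2, 2/3)$.
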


The theory required to prove this result is developed in Section~\ref{sec:LP}.

\smallskip \noindent \textbf{An Example.}
Returning to the algorithmic result, Theorem~\ref{thm:inf}, we examine a didactic example. Let $E = \{0, 1, 2, 3\}$ and consider a promise CSP with a single pair of predicates $(P, Q)$, which are defined to be
\begin{align*}
  P &= \{x \in \{0, 1\}^6 : \Ham(x) = 3\}\\
  Q &= \{y \in \{0, 1, 2, 3\}^6 : y_i \not\in \{0, 3\}^6 \cup \{1, 2\}^6\text{ and }\sum_{i=1}^6 y_i \equiv 1\mod 2.\}
\end{align*}
Note that $P \subseteq Q$, so $(P, Q)$ is a valid pair of predicates for a promise CSP.\footnote{In Section~\ref{sec:prelim}, we allow for a more general mapping $\phi : \{0, 1\} \to E$ such that $\phi(P) \subseteq Q$.} At first, it is unclear what algebraic structure $(P, Q)$ has, but it turns out for all odd $L$ to have the following polymorphism $g_L : \{0, 1\}^L \to \{0, 1, 2, 3\}$.
\begin{align*}
  g_L(x) &= \begin{cases}
    0 & \Ham(x) < L/2 \text{ and } \Ham(x) \equiv 0 \mod 2\\
    3 & \Ham(x) < L/2 \text{ and } \Ham(x) \equiv 1 \mod 2\\
    2 & \Ham(x) > L/2 \text{ and } \Ham(x) \equiv 0 \mod 2\\
    1 & \Ham(x) > L/2 \text{ and } \Ham(x) \equiv 1 \mod 2.
  \end{cases}
\end{align*}
In Theorem~\ref{thm:inf}, this corresponds to the choices $k = 2$, $\tau_1 = 1/2$, $M_1 = M_2 = 2$, $\eta_1(0) = 0$, $\eta_1(1) = 3$, $\eta_2(0) = 2$, and $\eta_2(1) = 1$. We leave as an exercise to the reader to check why this family of $g_L$'s are polymorphisms of $(P, Q)$. Below, we give an overview of our algorithm for this special case. This serves as an illustration of the crux of our strategy, which involves blending together two broad approaches underlying efficient CSP algorithms, namely linear programming and solving linear systems over rings.

This combining of linear programs and linear systems has important parallels in the CSP literature. At a high level, CSPs solvable by linear programming relaxations have a connection to ``bounded width'' constraint satisfaction problems (e.g., \cite{DBLP:conf/innovations/KunOTYZ12}) and CSPs representable as ring equations have Mal'tsev polymorphisms (e.g., \cite{DBLP:conf/dagstuhl/BartoKW17}). Thus, by ``synthesizing'' these two techniques, we are understanding promise CSPs (like the $(P, Q)$ just mentioned) which neither method by itself would resolve. Although the authors are currently unaware of a technical connection, this combining of bounded width and Mal'tsev techniques was the last hurdle that was needed to be overcome to resolve the CSP dichotomy~\cite{DBLP:conf/focs/Bulatov17,DBLP:conf/focs/Zhuk17}.

\subsection{Overview of ideas for a special case}

To give insight into the proof of Theorem~\ref{thm:boolean-threshold-periodic}, we give a high-level overview of how to solve promise CSPs using the predicate $(P, Q)$ mentioned in the previous subsection with $P \subset \{0, 1\}^6$ and $Q \subset \{0, 1, 2, 3\}^6$. As stated, there is an infinite family of threshold-periodic polymorphisms $g_L : \{0, 1\}^L \to \{0, 1, 2, 3\}$ (where $L$ is odd).

Imagine we have an instance of a CSP with constraints from $P$ on Boolean variables $x_1, \hdots, x_n$. We seek to find $y_1, \hdots, y_n \in \{0, 1, 2, 3\}^m$ which satisfies the corresponding CSP instance with respect to $Q$. We first construct a Basic LP relaxation.

\textbf{Basic LP Relaxation.} In the Basic LP relaxation, for each $x_i \in \{0, 1\}$ we consider a relaxed version $v_i \in [0, 1]$. For every constraint $P(x_{i_1}, \hdots, x_{i_6})$, we specify $(v_{i_1}, \hdots, v_{i_6})$ must live in the convex hull of $P$. We can find real-valued $v_i$'s which satisfy these conditions in polynomial time.

Now consider if we try to round the $v_i$'s right away. Consider a constraint $P(x_{i_1}, \hdots, x_{i_6})$, then we know there is a convex combination of elements of $P$ which equals $(v_{i_1}, \hdots, v_{i_6})$. A key idea introduced in our previous work \cite{DBLP:conf/soda/BrakensiekG18} was that the weights of the convex combination can, in the limit, be approximated by an average of the elements of $P$ using integer weights which sum to an odd number. Imagine this weighted average being arranged as a matrix
\begin{center}
  \begin{tabular}{cccccccc}
    & $x_{i_1}^{(1)}$ & $x_{i_2}^{(1)}$ & $x_{i_3}^{(1)}$ & $x_{i_4}^{(1)}$ & $x_{i_5}^{(1)}$ & $x_{i_6}^{(1)}$ & $\in P$\\
    & $x_{i_1}^{(2)}$ & $x_{i_2}^{(2)}$ & $x_{i_3}^{(2)}$ & $x_{i_4}^{(2)}$ & $x_{i_5}^{(2)}$ & $x_{i_6}^{(2)}$ & $\in P$\\
    & $\vdots$ & $\vdots$ & $\vdots$ & $\vdots$ & $\vdots$ & $\vdots$\\
    & $x_{i_1}^{(L)}$ & $x_{i_2}^{(L)}$ & $x_{i_3}^{(L)}$ & $x_{i_4}^{(L)}$ & $x_{i_5}^{(L)}$ & $x_{i_6}^{(L)}$ & $\in P$\\\hline
    Average & $\approx v_{i_1}$ & $\approx v_{i_2}$ & $\approx v_{i_3}$ & $\approx v_{i_4}$ & $\approx v_{i_5}$ & $\approx v_{i_6}$ &\\
    $g_L$ & $\hat{y}_{i_1}$ &$\hat{y}_{i_2}$ &$\hat{y}_{i_3}$ &$\hat{y}_{i_4}$ &$\hat{y}_{i_5}$ &$\hat{y}_{i_6}$ &
 \end{tabular}
\end{center}

The key observation is that since the $L$ rows have elements of $P$, we can apply the polymorphism $g_L$ to get an element of $(\hat{y}_{i_1}, \hdots, \hat{y}_{i_6}) \in Q_i$.

Now think about what happens to $x_{i_1}$. If $v_{i_1} > 1/2$ then if $L$ is sufficiently large and the integer weights sufficiently accurate, then the Hamming weight of the column $(x_{i_1}^{(1)}, \hdots, x_{i_1}^{(L)})$ will be greater than $L/2$, guaranteeing that $\hat{y}_{i_1}$ is $1$ or $2$. Likewise, if $v_{i_1} < 1/2$, then we can guarantee that $\hat{y}_{i_1}$ is either $0$ or $3$. We can deftly avoid the case $v_{i_1} = 1/2$ from ever happening, by solving the linear program over a subring of $\mathbb R$ that is dense but does not contain $1/2$, such as $\mathbb Z[\sqrt{2}]$.

Since the same variable can appear in many predicates in the instance, issues can arise. For the variable $x_{i_1}$ note that the Basic LP made a \emph{global} choice that either $v_{i_1} > 1/2$ or $v_{i_1} < 1/2$. Thus, for every clause that $x_{i_1}$ appears in, the corresponding $\hat{y}_{i_1}$ will always be in $\{0, 3\}$ (if $v_{i_1} < 1/2$) or $\{1, 2\}$ (if $v_{i_1} > 1/2$). However, this approach on its own cannot globally ensure that $\hat{y}_{i_1}$ is always equal to, say, $0$ instead of $3$. This due to the current lack of control on the \emph{parity} of how many times each element of $P$ shows up in the matrix above, since this parity is what the polymorphism $g_L$ looks at when deciding whether $\hat{y}_{i_1}$ is $0$ or $3$. Naive attempts to force a certain parity fail, as the same variable needs the same parity assigned across all the constraints it appears in. To repair this, we also need to consider the affine relaxation.

\textbf{Affine Relaxation.} Here, we let $V$ be the smallest affine subspace (with respect to $\mathbb F_2$) which contains $P$. Then, each constraint $P(x_{i_1}, \hdots, x_{i_m})$ is relaxed to $(r_{i_1}, \hdots, r_{i_m}) \in V$ where $r_i \in \mathbb F_2$. Solving these relaxed constraints can be done in polynomial time using Gaussian Elimination over $\mathbb F_2$.

The beauty of utilizing this second relaxation is that whenever we run into the dilemma of $\hat{y}_{i_1} \in \{0, 3\}$ or $\hat{y}_{i_1} \in \{1, 2\}$, we can break the uncertainty by always setting $y_{i_1}$ to be element with the same parity as $r_{i_1}$! The reason this works is subtle but powerful. When picking the integer weights of the elements of $P$, we also require that the Hamming weight of each column modulo $2$ is equal to the $r_{i_1}$'s. When $L$ is really large, changing the parity does not harm the approximation, so the ``binning'' of $\hat{y}_{i_1} \in \{0, 3\}$ or $\hat{y}_{i_1} \in \{1, 2\}$ still works via the Basic LP. But now the addition of these $r_{i_1}$'s via the affine relaxation further guarantees that across clauses, the $\hat{y}_{i_1}$ chosen always has consistent parity with $r_{i_1}$. Thus, the $\hat{y}_{i_1}$'s do indeed satisfy all the $Q$ constraints. This completes the proof that $(P, Q)$ is a tractable promise CSP template.

Note that each of these two relaxations was a ``lifting''  of the $(P, Q)$ problem into the Boolean-domain Gaussian elimination problem and the infinite-domain Basic LP relaxation. Section~\ref{sec:sandwich} more formally defines how this lifting process works (where we call it \emph{sandwiching}). 

\subsection{Organization}

In Section~\ref{sec:prelim}, we describe the notation used for CSPs and promise CSPs, particularly for polymorphisms. In Section~\ref{sec:sandwich}, we formally define the Basic LP and affine relaxations (and combined relaxations) of a promise CSP via a notion we call a \emph{homomorphic sandwich}. In Section~\ref{sec:LP}, describe how to find solutions in a rich family of rings, such as $\mathbb Z[\sqrt{2}]$, to linear programs with integer coefficients. In Section~\ref{sec:boolean}, we prove that having an infinite family of threshold-periodic polymorphisms implies tractability, proving ``warm up'' results for threshold polymorphisms and periodic polymorphisms along the way. In Section~\ref{sec:Boolean-block}, we show how to extend these reductions to block-symmetric functions known as \emph{regional} and \emph{regional-periodic} polymorphisms. In Section~\ref{sec:larger}, we extended these results to larger domains. In Section~\ref{sec:conclusion}, we describe the challenges in further developing the theory of promise CSPs. Appendix~\ref{app:hom-proofs} proves that the reductions to finite and infinite domains described in Section~\ref{sec:sandwich} are correct and efficient.

On a first reading, we recommend focusing on Section~\ref{sec:boolean} after skimming Sections~\ref{sec:prelim}, \ref{sec:sandwich} and \ref{sec:LP}.

\section{Preliminaries}\label{sec:prelim}

In this section, we include the important definitions and results in the constraint satisfaction literature. In order to accommodate both the theorist and the logician, we give the definitions from multiple perspectives.

\subsection{Constraint Satisfaction}\label{subsec:CSP}

In this paper, a constraint satisfaction problem consists of a \emph{domain} $D$ and a set $\Gamma = \{P_i \subseteq D^{\ar_i}: i \in I\}$ of \emph{constraints} or \emph{relations}. Each $\ar_i$ is called the \emph{arity} of constraint $P_i$ and the collection $\sigma = \{(i, \ar_i) : i \in I\}$ is called a \emph{signature}. We say that $(x_1, \hdots, x_{\ar_i})$ \emph{satisfies} a constraint $P_i$ if $(x_1, \hdots, x_{\ar_i}) \in P_i$. This is written as $P_i(x_1, \hdots, x_{\ar_i}).$  This indexed set of constraints $\Gamma$ is often referred to as the \emph{template}.\footnote{The tuple $(D, \sigma, \Gamma)$ of the domain, signature and template is known as a \emph{structure}.}

A \emph{$\Gamma$-CSP} is a formula written in conjunctive normal form (CNF) with constraints from $\Gamma$. That is, for some index set $J$
\[
  \Psi(x_1, \hdots, x_n) = \bigwedge_{j \in J} P_{i_j}(x_{j_1}, \hdots, x_{j_{\ar_{i_j}}}).
\]
We say that the formula is satisfiable if there is an assignment of variables which satisfies every clause. The decision problem $\CSP(\Gamma)$ corresponds to the language $\{\Phi : \text{$\Phi$ is a satisfiable $\Gamma$-CSP}\}.$ In other words, given $\Phi$, is it satisfiable?

\begin{rem}
  In the CSP literature, another common way to define a $\Gamma$-CSP is consider the domain $X = \{x_1, \hdots, x_n\}$ and a template $\Psi$ with signature $\sigma$. We say that $\Psi$ is satisfiable, if there is a \emph{homomorphism} (to be defined soon) $f : X \to D$ from $\Psi$ to $\Gamma$.
\end{rem}

The famous Dichotomy Conjecture of Feder and Vardi~\cite{DBLP:journals/siamcomp/FederV98} conjectured that for every finite domain $D$ and template $\Gamma$, $\Gamma$-CSP is either in $\mathsf P$ or in $\mathsf {NP}$-complete.

The case $|D| = 2$ was first fully solved by Schaefer~\cite{Schaefer:1978}. This was later extended to the case $|D| = 3$ by Bulatov~\cite{DBLP:journals/jacm/Bulatov06}, and finally general finite $D$ in the recent independent works by Bulatov~\cite{DBLP:conf/focs/Bulatov17} and Zhuk~\cite{DBLP:conf/focs/Zhuk17}. An extraordinarily important tool in the resolution of the Dichotomy conjecture is \emph{polymorphisms} (e.g., \cite{Chen2009, DBLP:conf/dagstuhl/BartoKW17}).

Given a relation $P \subseteq D^{\ar}$ and a function $f : D^L \to E$ (where $D$ and $E$ may be equal), we define $f(P)$ to be\footnote{This corresponds to the $O_f(P)$ notation from \cite{DBLP:conf/soda/BrakensiekG18}.}
\begin{align*}
  \{(f(x_1^{(1)}, \hdots, x_1^{(L)}), \hdots, f(x_{\ar}^{(1)}, \hdots, x_{\ar}^{(L)})) :
  x^{(1)}, \hdots, x^{(L)} \in P\}.
\end{align*}
More pictorially (c.f., \cite{DBLP:conf/dagstuhl/BartoKW17})
\begin{center}
  \begin{tabular}{lllll}
    $(x_1^{(1)},$ & $x_2^{(1)},$ & $\hdots,$ & $x_{\ar}^{(1)})$ & $\in P$\\
    $(x_1^{(2)},$ & $x_2^{(2)},$ & $\hdots,$ & $x_{\ar}^{(2)})$ & $\in P$\\
    $\vdots$ & $\vdots$ & & $\vdots$ &\\
    $(x_1^{(L)},$ & $x_2^{(L)},$ & $\hdots,$ & $x_{\ar}^{(L)})$ & $\in P$\\
    $\Downarrow f$ & $\Downarrow f$ & $\hdots$ & $\Downarrow f$ &\\
    $y_1$ & $y_2$ & $\hdots$ & $y_k$ & $\in f(P)$
  \end{tabular}
\end{center}

Given this notion, we can now define both what a \emph{homomorphism} and what a \emph{polymorphism} are.

\begin{df}
  Let $D$ and $E$ be domains and $\sigma = \{(i, \ar_i) : i \in I\}$ be a signature. Let $\Gamma = \{P_i \subseteq D^{\ar_i} : i \in I\}$ and $\Gamma' = \{P_i' \subseteq E^{\ar_i} : i \in I\}$ be templates with signature $\sigma$. A map $f : D \to E$ is a \emph{homomorphism} from $\Gamma$ to $\Gamma'$ if $f(P_i) \subseteq P_i'$ for all $i$.
\end{df}

As an example, consider $D = \{0, 1\}$ and $E = \{0, 1, 2\}$ and $\sigma = \{(1, 2)\}$. Consider $\Gamma_{2\text{--col}} = \{P_1 = \{(0, 1), (1, 0)\} \in D^2\}$ and $\Gamma_{3\text{--col}} = \{Q_1 = \{(0, 1), (0, 2), (1, 0), (1, 2), (2, 0), (2, 1)\} \in E^2\}$, which are the templates for $2$-coloring and $3$-coloring respectively. Then, the map $\id_D$ is a homomorphism from $\Gamma_{2\text{--col}}$ to $\Gamma_{3\text{--col}}$. In other words, any $2$-colorable graph is also $3$-colorable.

\begin{df}
  Let $D$ be a domain and  $\Gamma = \{P_i \subseteq D^{\ar_i}: i \in I\}$ be a template. A \emph{polymorphism} of a CSP is a function $f : D^L \to D$ for some positive integer $L$ such that $f(P_i) \subseteq P_i$ for all $i \in I$. We let $\Pol(\Gamma)$ denote the set of a polymorphisms of $\Gamma$.
\end{df}

Intuitively, polymorphisms are algebraic objects which combine solutions of CSPs to produce another solution. We now give a few standard examples.

\begin{enumerate}
\item Consider any template $\Gamma$. A trivial example of such an $f$ is a \emph{projection} function: for some $i \in [L] := \{1, \hdots, L\}$, for all $x_1, \hdots, x_L \in D$, we let $f(x_1, \hdots, x_L) = x_i$. This function is a polymorphism for every $\Gamma$. More generally, we say that a polymorphism is \emph{essentially unary} (or a \emph{dictator}) if $f$ depends on exactly one coordinate (in particular, this does not include constant functions).
\item Consider a polymorphism $f : D^L \to D$ such that $f \in \Pol(\Gamma)$. Let $\pi : [L] \to [R]$ be any surjective map, where $R \le L$ is a positive integer. Then, $f^{\pi} : D^R \to D$ is defined to be
  \[
    f^{\pi}(x_1, \hdots, x_R) = f(y_1, \hdots, y_L)\text{ where $y_j = x_{\pi(j)}$ for all $j \in L$.}
  \]
  We have that $f^{\pi} \in \Pol(\Gamma)$ (e.g., \cite{DBLP:conf/soda/BrakensiekG18}).
\item Linear Equations. Consider any finite field $\mathbb F$. Let \[\Gamma_{\mathbb F\text{--lin}} = \{P_i \subset \mathbb F^{\ar_i} : P_i\text{ affine subspace}\}\] be a template of linear constraints. Then, the map $f(x, y, z) = x - y + z$ is a polymorphism. In the case $\mathbb F = \mathbb F_2$, this is called $\PAR_3$.
\item 2-SAT. The template for 2-SAT can be expressed in a few ways, one is \[\Gamma_{2\text{--SAT}} = \{P_1 = \{(1, 1), (1, 0), (0, 1)\}, P_2 = \{(1, 0), (0, 1)\}\}.\] Then $\MAJ_{3}$, the majority function on $3$ bits, is a polymorphism (e.g., \cite{Chen2009}). 
\end{enumerate}

One reason polymorphisms are so fundamental, is due to an elegant property known as a \emph{Galois correspondence} (or \emph{Galois connection}). From a computational complexity perspective\footnote{From a logic perspective, there is a \emph{primitive-positive reduction} from $\Gamma_2$ to $\Gamma_1$.}, if two finite CSP templates $\Gamma_1$ and $\Gamma_2$ of the same domain, but not necessarily of the same signature, satisfy $\Pol(\Gamma_1) \subseteq \Pol(\Gamma_2)$, then there is a polynomial-time reduction from $\CSP(\Gamma_2)$ to $\CSP(\Gamma_1)$. Thus, from a computational complexity perspective, it is sufficient to think about the polymorphisms of a CSP rather than the individual constraints. We can now state Schaefer's theorem rather elegantly.

\begin{thm}[\cite{Schaefer:1978}, as stated in, e.g, \cite{Bulatov2005}]
  Let $D = \{0, 1\}$ and let $\Gamma$ be a template. $\CSP(\Gamma) \in \mathsf P$ if and only if\footnote{We assume in this paper that $\mathsf P \neq \mathsf {NP}$.} $\Pol(\Gamma)$ has a non-dictator polymorphism. Otherwise, $\CSP(\Gamma)$ is $\mathsf {NP}$-complete.
\end{thm}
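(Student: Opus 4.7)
The plan is to combine the Galois correspondence stated just above the theorem with Post's classification of all Boolean clones, which enumerates every clone of operations on $\{0,1\}$. Since $\Pol(\Gamma)$ is itself a clone (closed under composition and containing all projections), it must equal one of the clones in Post's lattice. The proof then splits into an $\NP$-hardness direction, handling the case that $\Pol(\Gamma)$ consists only of dictators, and a tractability direction, exhibiting a polynomial-time algorithm whenever any non-dictator polymorphism is present.

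For the hard direction, suppose every $f \in \Pol(\Gamma)$ is a dictator, i.e., essentially unary and non-constant. Then $\Pol(\Gamma)$ is contained in one of the essentially-unary Boolean clones (projections alone, or the clone generated by the identity and negation). The template $\Gamma_{3\text{-SAT}}$ likewise admits only dictator polymorphisms, so $\Pol(\Gamma) \subseteq \Pol(\Gamma_{3\text{-SAT}})$, and the Galois correspondence produces a polynomial-time reduction from $\CSP(\Gamma_{3\text{-SAT}})$ to $\CSP(\Gamma)$. Membership in $\NP$ is immediate (guess an assignment, verify clauses), giving $\NP$-completeness.

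For the tractable direction, Post's lattice guarantees that any clone strictly above the essentially-unary one contains one of six canonical operations, and I would handle each in turn. If a constant $c$ is a polymorphism, then applying it to any tuple in $P_i$ forces $(c,\dots,c) \in P_i$, so the all-$c$ assignment satisfies every $\CSP(\Gamma)$ instance. If $\AND \in \Pol(\Gamma)$, every $P_i$ is closed under coordinatewise $\AND$, hence is a Horn relation, and $\CSP(\Gamma)$ reduces to Horn-SAT, solvable in linear time by unit propagation; dually for $\OR$. If $\MAJ_3 \in \Pol(\Gamma)$, every $P_i$ is \emph{bijunctive}, i.e., definable by a conjunction of $2$-clauses, and $\CSP(\Gamma)$ reduces to 2-SAT, solvable via the implication digraph. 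Finally, if $\PAR_3(x,y,z) = x \oplus y \oplus z$ lies in $\Pol(\Gamma)$, every $P_i$ is an affine subspace of $\F_2^{\ar_i}$, so $\CSP(\Gamma)$ reduces to solving a linear system over $\F_2$ by Gaussian elimination.

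The main obstacle is the structural step that a non-dictator polymorphism forces one of these six canonical operations into $\Pol(\Gamma)$; this is Post's full classification of Boolean clones, a delicate combinatorial result, and all the algorithmic work is downstream of it. The relational characterizations invoked above (closed under $\AND$ $\Leftrightarrow$ Horn, closed under $\MAJ_3$ $\Leftrightarrow$ bijunctive, closed under $\PAR_3$ $\Leftrightarrow$ affine) are themselves Galois-type statements that must be verified separately, typically by expressing an arbitrary relation in $\Pol\inv$ of the given operation as a conjunction of the canonical atomic forms. Once these algebraic ingredients are in hand, the six algorithmic pieces are short and essentially independent.
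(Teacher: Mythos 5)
The paper does not prove this theorem; it is quoted verbatim from Schaefer (in the algebraic formulation of Bulatov--Jeavons--Krokhin), so there is no in-paper argument to compare against. Your proposal is the standard modern proof --- Galois correspondence plus Post's lattice plus the four/six relational characterizations --- and its overall architecture is sound.

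One step in the hardness direction would fail as written. You claim that if every polymorphism of $\Gamma$ is a dictator then $\Pol(\Gamma) \subseteq \Pol(\Gamma_{3\text{-SAT}})$. The polymorphism clone of the $3$-SAT template consists of projections only, so this containment is false in the subcase where $\Pol(\Gamma)$ is the clone generated by negation (projections together with negated projections), which your own case analysis explicitly allows. In that subcase you must reduce from an $\NP$-hard template whose polymorphism clone contains negation, e.g.\ the Not-All-Equal-$3$-SAT template $\{(x,y,z) \in \{0,1\}^3 : x,y,z \text{ not all equal}\}$, which is complement-closed and $\NP$-complete; the projections-only subcase can then use $\Gamma_{3\text{-SAT}}$ as you propose. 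This is a routine fix, but as stated the single reduction does not cover both essentially-unary clones. The rest --- that a non-essentially-unary operation in a Boolean clone forces a constant, $\AND$, $\OR$, $\MAJ_3$, or $\PAR_3$ into the clone, and the corresponding Horn/dual-Horn/bijunctive/affine/trivial algorithms --- is correctly identified as the load-bearing content, and you rightly flag that the Post-lattice step and the relational characterizations are the parts that require real work.
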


\subsection{Promise Constraint Satisfaction}\label{subsec:PCSP}

Next, we discuss an approximation variant of CSPs known as \emph{promise CSPs} (or PCSPs), first studied systematically by the authors in \cite{DBLP:conf/soda/BrakensiekG18}. Intuitively, a promise CSP is just like a CSP except that the constraints have ``slack'' to them which allows for an algebraic form of approximation.

\begin{df}
  A \emph{promise domain} is a triple $(D, E, \phi)$, where $\phi$ is a map from $D$ to $E$.
\end{df}

The most commonly used promise domain in this article will be $D = E = \{0, 1\}$ and $\phi = \id_{D}$ is the identity map.

\begin{df} 
  Let $(D, E, \phi)$ be a promise domain and let $\sigma = \{(i, \ar_i) : i \in I\}$ be a signature. A \emph{promise template} $\Gamma = (\Gamma_P, \Gamma_Q)$is a pair of templates $\Gamma_P = \{P_i \in D^{\ar_i}\}$ and $\Gamma_Q = \{Q_i \in E^{\ar_i}\}$ each with signature $\sigma$ such that $\phi$ is a homomorphism from $\Gamma_P$ to $\Gamma_Q$. Each pair $(P_i, Q_i)$ is called a \emph{promise constraint}.
\end{df}

In the simplest case, $D = E$ and $\phi = \id_D$, then the homomorphism condition is equivalent to $P_i \subseteq Q_i$. Note that in general $\phi$ could be an injection, surjection or neither. 

\begin{df}
  Let $\Gamma = \{\Gamma_P, \Gamma_Q\}$ be a promise template over the promise domain $(D, E, \phi)$. A \emph{$\Gamma$-PCSP} is a pair of CNF formulae $\Psi_P$ and $\Psi_Q$ with identical structure. That is, there is an index set $J$ such that 
  \begin{align*}
    \Psi_P(x_1, \hdots, x_n) &= \bigwedge_{j \in J} P_{i_j}(x_{j_1}, \hdots, x_{j_{\ar_{i_j}}})\\
    \Psi_Q(y_1, \hdots, y_n) &= \bigwedge_{j \in J} Q_{i_j}(y_{j_1}, \hdots, y_{j_{\ar_{i_j}}})
  \end{align*}
\end{df}

\begin{rem}
  Just like a $\Gamma$-CSP, a $\Gamma$-PCSP can be expressed in the language of homomorphisms. Our domain again $X = \{x_1, \hdots, x_n\}$, and $\Psi$ is a template over the domain $X$ with the same signature as $\Gamma_P$ and $\Gamma_Q$. Satisfying $\Psi_P$ and $\Psi_Q$ corresponds to finding homomorphisms from $\Psi$ to $\Gamma_P$ and $\Gamma_Q$, respectively.
\end{rem}

Note that if $x_1, \hdots, x_n$ satisfies $\Psi_P$, then $\phi(x_1), \hdots, \phi(x_n)$ satisfies $\Psi_Q$. In particular, satisfying $\Psi_Q$ is ``easier'' (in a logical, not algorithmic, sense) than satisfying $\Psi_P$. Thus, we can define a promise problem.

\begin{df}[Promise CSP--decision version]
  Let $\Gamma$ be a promise CSP. We define $\PCSP(\Gamma)$ to be the following promise decision problem on promise formulae $(\Psi_P, \Psi_Q)$. 
  \begin{itemize}
  \item \textbf{ACCEPT:} $\Psi_P$ is satisfiable.
  \item \textbf{REJECT:} $\Psi_Q$ is not satisfiable.
  \end{itemize}
\end{df}

This has a corresponding search variant

\begin{df}[Promise CSP--search version]
  Let $\Gamma$ be a promise CSP. We define Search-$\PCSP(\Gamma)$ to be the following promise search problem on promise formulae $(\Psi_P, \Psi_Q)$.
  \begin{itemize}
  \item Given that $\Psi_P$ is satisfiable, output a satisfying assignment to $\Psi_Q$.
  \end{itemize}
\end{df}

Unlike classical CSPs, in which the decision and search versions are often\footnote{This requires that fixing a variable to a specific value leads to a constraint with the same polymorphisms.} polynomial-time equivalent, it is not clear that the decision and search variants of $\PCSP(\Gamma)$ have the same computational complexity, although there no known $\Gamma$ for which the complexity differs. Even so, there is a reduction from the decision version to the search version: run the algorithm for the search version, and check if it satisfies $\Psi_Q$.

The following are interesting examples of promise CSPs, (c.f., \cite{DBLP:conf/soda/BrakensiekG18}). In the first five examples, the domain is Boolean: $(D = \{0, 1\}, D, \id_D)$.

\begin{enumerate}
\item \textbf{CSPs.} Let $\Gamma = \{P_i \in D^{k_i}\}$ be a CSP over the domain $D$, then $(D, D, \id_D)$ is a promise domain and $\Lambda = (\Gamma, \Gamma)$ is a promise CSP.
\item \textbf{(2+$\epsilon$)-SAT.} Let $\NEQ = \{(0, 1), (1, 0)\}$. Fix, a positive integer $k$, and let $P_1 = \{x \in D^{2k+1} : \Ham(x) \ge k\}$ and $Q_1 = \{x \in D^{2k+1} : \Ham(x) \ge 1\}$. Then, $\Gamma = (\{P_1, \NEQ\}, \{Q_1, \NEQ\})$ corresponds to a promise variant of $(2k+1)$-SAT: if every clause in a $(2k+1)$-SAT instance is true for at least $k$ variables, can one find a ``normal'' satisfying assignment. This problem was shown to be $\mathsf{NP}$-hard by Austrin, Guruswami, and H{\aa}stad~\cite{DBLP:journals/siamcomp/AustrinGH17}. 
\item \textbf{Threshold conditions.} Fix $\alpha, \beta$ such that $1/\alpha + 1/\beta = 1$. Let $\Gamma_{P} = \{P_i \subset D^{\ar_i} : i \in \{1, 2\}\}$ and $\Gamma_{Q} = \{Q_i \subseteq D^{\ar_i} : \{1, 2\}\}$, where the promise constraints are as follows (the choices of $\ar_1, \ar_2, s, t$ are arbitrary).
  \begin{align*}
    P_1 &= \{x \in D^{\ar_1} : \Ham(x) \le s\} & Q_1 &= \{x \in D^{\ar_1} : \Ham(x) \le \alpha s\}\\
    P_2 &= \{x \in D^{\ar_2} : \Ham(x) \ge \ar_2 - t\} & Q_2 &= \{x \in D^{\ar_2} : \Ham(x) \ge \ar_2 - \beta t\}.
  \end{align*}
  We have that $\PCSP(\Gamma_P, \Gamma_Q)$ is tractable (this is also alluded to in \cite{DBLP:conf/soda/BrakensiekG18}).
\item \textbf{Sandwiching Linear Equations.} Let $A \subseteq \mathbb F_7^{\ar}$ be an affine subspace. Specify a map $h : \mathbb F_7 \to \{0, 1\}$ such that $h(0) = 0$ and $h(1) = 1$. Let $\Gamma_P = \{A \cap D^{\ar}\}$ and $\Gamma_Q = \{h(A)\}$. Then, $(\Gamma_P, \Gamma_Q)$ is tractable by performing Gaussian elimination over $\mathbb F_7$, even though the domain is Boolean! This type of promise CSP was briefly alluded to in~\cite{DBLP:conf/soda/BrakensiekG18}, and in this work we systematically study such examples though the theory of \emph{homomorphic sandwiches} (see Section~\ref{subsec:sandwich}).
\item \textbf{Hitting Set.} Let $\Gamma_P := \{P_i \in D^{\ar_i} : i \in I\},$ where $P_i := \{x \in D^{\ar_i} : \Ham(x) = \ell_i\}$ where $\ell_i \in \{1, \hdots, \ar_i-1\}$. In other words, $\CSP(\Gamma_P)$ corresponds to a generalized hitting set problem: given a collection of hyperedges $S_i$ and targets $\ell_i$, find a subset of the vertices $S$ such that $|S \cap S_i| = \ell_i$. Let $\Gamma_Q := \{Q_i = D^{\ar_i} \setminus \{0^{\ar_i}, 1^{\ar_i}\}\}$. Then $\CSP(\Gamma_Q)$ is hypergraph two-coloring (each color appears at least once per hyperedge). Although neither $\CSP(\Gamma_P)$ or $\CSP(\Gamma_Q)$ is tractable in general, $\PCSP(\Gamma_P, \Gamma_Q)$ is tractable~\cite{DBLP:conf/soda/BrakensiekG18}.
\item \textbf{Approximate Graph Coloring.} Let $k \le \ell$ be positive integer. Let $D = [k]$ and $E = [\ell]$. Then, $(D, E, \id_D)$ is a promise domain. Let $\Gamma_{k\text{--col}} = \{P = \{(x, y) \in D^2 : x \neq y\}\}$ and $\Gamma_{\ell\text{--col}} = \{Q = \{(x, y) \in E^2 : x \neq y\}\}$. Then, $\Gamma = (\Gamma_{k\text{--col}}, \Gamma_{\ell\text{--col}})$ is then the promise template for the well-studied \emph{approximate graph coloring problem}: given a graph of chromatic number $k$, find an $\ell$-coloring. This problem has been studied for decades, and it is still unsolved in many cases (e.g., \cite{DBLP:journals/siamdm/GuruswamiK04,DBLP:journals/combinatorica/KhannaLS00,DBLP:conf/approx/Huang13,DBLP:conf/coco/BrakensiekG16}).
\item \textbf{Rainbow Coloring.} Consider $D = [k]$ and $E = \{0, 1\}$, with $\phi : D \to E$ being an arbitrary, non-constant map. If $\Gamma_P = \{P = \{x \in D^k : x\text{ is a permutation of }D\}\}$ and $\Gamma_Q = \{Q = E^k \setminus \{0^k, 1^k\}\}$, then $\PCSP(\Gamma_P, \Gamma_Q)$ is the following hypergraph problem: given a $k$-uniform hypergraph such that there is a $k$-coloring in which every color appears in every edge, find a hypergraph $2$-coloring. A random-walk-based polynomial-time algorithm is reported in \cite{McDiarmid1993}; semidefinite programming gives another folklore algorithm. A deterministic algorithm based on solving linear programming was found by Alon [personal communication].
\end{enumerate}

Just as polymorphisms are useful for studying CSPs, they are a powerful tool for understanding promise CSPs. The first formal definition of a polymorphism of a promise CSP appeared in \cite{DBLP:journals/siamcomp/AustrinGH17}.

\begin{df}
  Let $(D, E, \phi)$ be a promise domain and $\sigma = \{(i, \ar_i) : i \in I\}$ be a signature. and $\Gamma = (\Gamma_P = \{P_i \subseteq D^{\ar_i}\}, \Gamma_Q = \{Q_i \subseteq E^{\ar_i}\})$ be a promise template over this domain. A \emph{polymorphism} of a promise CSP is a function $f : D^L \to E$ for some positive integer $L$ such that for all $i \in I$, $f(P_i) \subseteq Q_i$. We let $\Pol(\Gamma)$ denote the set of polymorphisms of $\Gamma$.
\end{df}

Like in the case of constraint satisfaction problems, there is a Galois correspondence for promise CSPs (\cite{DBLP:conf/soda/BrakensiekG18}, following from a result of \cite{Pippenger2002}). Thus, like for CSPs, it suffices to consider the collection of polymorphisms and not the particulars of the constraints.

As pointed out in \cite{DBLP:journals/siamcomp/AustrinGH17}, unlike the polymorphisms for ``ordinary'' CSPs, due to the change in domain of polymorphisms for promise CSPs, they cannot be composed. Thus, unlike CSPs which deal with families of CSPs closed under compositions and projections\footnote{Here a projection, also known as a \emph{minor}, is an identification of some subsets of the coordinates. } (known as \emph{clones}), promise CSPs are determined by families of CSPs closed under only projections\footnote{There is an additional constraint known as \emph{finitization}, which comes as a technicality when $\Gamma$ is finite. See the Conclusion for a discussion on how this could be utilized to understand promise CSPs from a topological perspective.}. Thus, while the techniques for studying CSPs are (universal) \emph{algebraic}, the necessary techniques for studying promise CSPs are \emph{topological}. In particular, unlike results such as Schaefer's theorem for which the existence of one nontrivial polymorphism (e.g., $\PAR_3$) is enough to imply tractability of a CSP, \cite{DBLP:journals/siamcomp/AustrinGH17} showed that an infinite sequence of polymorphisms of a promise CSP is necessary to imply tractability. One contribution of this work is that we show that having an infinite sequences of polymorphisms which ``converge'' with respect to a particular topology is sufficient to imply efficient algorithms.  We now give a polymorphic reason for why each of the above examples is tractable/non-tractable.

\begin{enumerate}
\item \textbf{CSPs.} The polymorphisms of the CSP template $\Gamma$ are exactly the same as the polymorphisms of the promise template $(\Gamma, \Gamma)$. 

\item \textbf{(2+$\epsilon$)-SAT.} \cite{DBLP:journals/siamcomp/AustrinGH17} showed that $\MAJ_{2k-1}$ is a polymorphism, but $\MAJ_{2k+1}$ (or any function that essentially depends on at least $2k+1$ variables) is not. They exploited this fact to show $\mathsf{NP}$-hardness via a reduction from the PCP theorem.
\item \textbf{Threshold conditions.} Consider $L$ such that $L/\alpha$ is not an integer. Then,
  \[
    f_L(x_1, \hdots, x_L) = \begin{cases}
      0 & \Ham(x) < L/\alpha\\
      1 & \Ham(x) > L/\alpha
    \end{cases}
  \]
  is a polymorphism of this problem. This is known as a \emph{threshold polymorphism} and is studied in Section~\ref{subsec:threshold}.
\item \textbf{Sandwiching Linear Equations.} Consider $L \equiv 1 \mod 7$ then
  \[
    f_L(x_1, \hdots, x_L) = h\left(\sum_{i=1}^L x_i \mod 7\right)
  \]
  is a family of polymorphisms. This is a \emph{periodic polymorphism}, and it is studied in Section~\ref{subsec:periodic}.
\item \textbf{Hitting Set.} \cite{DBLP:conf/soda/BrakensiekG18} showed that for all odd integers $L$
  \[
    \AT_L(x_1, \hdots, x_L) = \mathbf 1[x_1 - x_2 + x_3 - \cdots - x_{L-1} + x_L \ge 1],
  \]
  is a polymorphism for this problem. In this paper, we have generalized polymorphisms like these to \emph{regional polymorphisms}, which are studied in Section~\ref{sec:Boolean-block}.

\item \textbf{Approximate Graph Coloring.} As this question is still open, much is not yet understood about the polymorphisms. \cite{DBLP:conf/coco/BrakensiekG16} showed that when $\ell \le 2k - 2$, then the polymorphisms ``look'' dictatorial when restricted to some subset of the outputs. These polymorphisms are closely connected to the independent sets of \emph{tensor powers of cliques} (e.g., \cite{Alon}).
\item \textbf{Rainbow Coloring.} This problem has many, many nontrivial polymorphisms. For example, for odd $L$, $f_L : [k]^L \to \{0, 1\}$ defined to be
  \[
    f_L(x_1, \hdots, x_L) = \mathbf 1\left[\sum_{i=1}^L x_i \le \frac{2k+3}{4}\cdot L\right],
  \]
  is a family of polymorphisms for this problem. This is an example of a non-Boolean \emph{regional polymorphism}, which is studied in Section~\ref{sec:larger}.
\end{enumerate}

The main result of our previous work~\cite{DBLP:conf/soda/BrakensiekG18} is as follows.
  
\begin{thm}[\cite{DBLP:conf/soda/BrakensiekG18}]
\label{thm:BG18-main}
  Consider the promise domain $(D = \{0, 1\}, D, \id_D)$. Let $\Gamma = (\Gamma_P = \{P_i \in D^{\ar_i}\}, \Gamma_Q = \{Q_i \in D^{\ar_i}\})$ be a CSP with the following technical conditions.
  \begin{itemize}
  \item $P_1 = Q_1 = \{(0, 1), (1, 0)\}$. In other words, variables can be negated.
  \item For all $i \in I$, $P_i$ and $Q_i$ are \emph{symmetric}: if $(x_1, \hdots, x_{ar_i}) \in R_i$ then $(x_{\pi(1)}, \hdots, x_{\pi(ar_i)}) \in R_i$ for all permutations $\pi$. 
  \end{itemize}
  Then, either $\PCSP(\Gamma)$ is (promise) $\mathsf{NP}$-hard or $\PCSP(\Gamma)$ is in $\mathsf{P}$ and has one of the following $6$ infinite sequences of polymorphisms (coming in $3$ pairs).
  \begin{enumerate}
  \item $\MAJ_L$ for all odd $L \ge 3$ or\footnote{The negation symbol $(\neg)$ in front a polymorphism merely means to negate the output.} $\neg \MAJ_L$ for all odd $L \ge 3$.
  \item $\PAR_L$ for all odd $L \ge 3$ or $\neg \PAR_L$ for all odd $L \ge 3$.
  \item $\AT_L$ for all odd $L \ge 3$ or $\neg \AT_L$ for all odd $L \ge 3$.
  \end{enumerate}
\end{thm}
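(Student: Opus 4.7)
The plan is to prove this dichotomy in two complementary directions: an algorithmic direction, showing that each of the six families implies $\PCSP(\Gamma) \in \mathsf{P}$, and a hardness direction, showing that if $\Gamma$ admits none of these families as polymorphisms then $\PCSP(\Gamma)$ is promise-$\mathsf{NP}$-hard. A preliminary observation simplifies both directions: the hypothesis $P_1 = Q_1 = \NEQ$ forces every polymorphism $f : \{0,1\}^L \to \{0,1\}$ of $\Gamma$ to commute with negation, i.e.\ $f(\neg x) = \neg f(x)$; this is why the tractable families come in three pairs related by negation of the output. Combined with the symmetry of the $P_i$ and $Q_i$, this means that for each polymorphism $f$, every coordinate-permutation of $f$ is also a polymorphism, so the behavior of $f$ is controlled by its values on Hamming slices.

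For the algorithmic direction, each of the three tractable families is handled by a dedicated relaxation. The $\MAJ_L$ family is handled by the Basic LP relaxation with threshold rounding ($v_i < 1/2 \mapsto 0$, $v_i > 1/2 \mapsto 1$), where the boundary $v_i = 1/2$ is eliminated by a small perturbation (or, in the spirit of Theorem~\ref{thm:Z2}, by finding an LP solution in $\mathbb Z[\sqrt{2}]$). The $\PAR_L$ family is handled by Gaussian elimination: relax each $P_i$ to the smallest affine $\mathbb F_2$-subspace containing it and solve the induced linear system. The $\AT_L$ family is the most delicate and requires combining the LP relaxation with an affine relaxation so that thresholding can be coordinated with parity information; this hybrid is morally the special case of the sandwiching paradigm developed in Section~\ref{sec:sandwich}.

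For the hardness direction the core step is a structure theorem: any polymorphism $f$ of $\Gamma$ that does not lie in one of the three families must be essentially unary (a dictator, possibly composed with negation). I would prove this by restricting $f$ to Hamming slices and using the negation-commuting property together with symmetry of the predicates to classify its possible ``shapes'' as a function of Hamming weight $w$. The symmetric ones collapse to threshold (giving $\MAJ_L$) or periodic/parity (giving $\PAR_L$) behavior, and the non-symmetric ones — after averaging over coordinate orbits — must exhibit the block-symmetric structure of $\AT_L$. Once every high-arity polymorphism is shown to be essentially unary outside these families, a Label-Cover-style reduction in the style of \cite{DBLP:journals/siamcomp/AustrinGH17, DBLP:conf/coco/BrakensiekG16} gives $\mathsf{NP}$-hardness.

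The main obstacle is the structure theorem itself. It is not enough to rule out purely symmetric ``exotic'' polymorphisms (e.g.\ shapes that are periodic in one regime and threshold in another, like those appearing in Theorem~\ref{thm:inf}); one must also rule out block-symmetric and otherwise asymmetric polymorphisms at \emph{all} sufficiently large arities $L$. The argument must therefore trade off local constraints from individual pairs $(P_i, Q_i)$ with the need for an infinite uniform family, and this combinatorial classification of slice-profiles under the negation and symmetry constraints is where the bulk of the work lies.
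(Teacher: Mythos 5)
First, note that the paper does not prove Theorem~\ref{thm:BG18-main} here at all: it is imported verbatim from \cite{DBLP:conf/soda/BrakensiekG18}, so there is no in-paper proof to compare against, and any acceptable ``proof'' would have to reconstruct the main result of that earlier paper. Your outline does track the known strategy at a high level (folding via $\NEQ$, Basic LP for $\MAJ$, Gaussian elimination over $\mathbb F_2$ for $\PAR$, a combined LP-plus-affine relaxation for $\AT$, and a Label-Cover-style reduction for hardness), but it is an outline, not a proof: the entire hardness direction rests on a ``structure theorem'' that you state and then explicitly defer, and that deferral is exactly where all of the difficulty of the theorem lives.

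Moreover, the structure theorem as you state it is too strong and is not what the actual argument establishes. It is false that every polymorphism outside the three families must be essentially unary: for instance, in $(2+\epsilon)$-SAT the function $\MAJ_{2k-1}$ is a genuine non-dictatorial polymorphism even though the problem is $\mathsf{NP}$-hard; what fails there is the existence of the \emph{infinite} family. The hardness argument of \cite{DBLP:journals/siamcomp/AustrinGH17,DBLP:conf/soda/BrakensiekG18} instead shows that when none of the six families is entirely contained in $\Pol(\Gamma)$, every polymorphism admits a \emph{bounded-size} set of special (e.g.\ influential or ``fixing'') coordinates --- a junta-like, list-decoding-compatible structure --- and the Label-Cover reduction then decodes each polymorphism to this small set rather than to a single dictator coordinate. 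Your plan of ``restricting to Hamming slices and classifying shapes'' also quietly assumes the polymorphisms are symmetric or block-symmetric; the predicates being symmetric only gives you that the coordinate-permutation orbit of a polymorphism consists of polymorphisms, not that any individual polymorphism is symmetric, and bridging that gap (ruling out arbitrary asymmetric polymorphisms at all large arities) is a substantial combinatorial argument that your proposal does not supply.
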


Although that paper tried to jointly understand both algorithmic and hardness results (with most of the work coming in on the hardness side), the aim of this paper is to develop the algorithmic tools for understanding tractable cases of promise CSPs. In particular, we more deeply explore the power of LP and affine (i.e., linear equations over a commutative ring) relaxations for solving promise CSPs.

\section{Relaxing Promise CSPs with Homomorphic Sandwiches}\label{sec:sandwich}

In this section, we build on the theory described in Section~\ref{sec:prelim} to rigorously connect promise CSPs with \emph{relaxations} of these problems (e.g., linear programming relaxations).

\subsection{The Homomorphic Sandwich}\label{subsec:sandwich}

Often it is useful to reduce promise CSP to a tractable (promise) CSP in another domain, and then map the result back to the original domain. We call this procedure a \emph{homomorphic sandwich}.

\begin{df}
  Let $(D, E, \phi)$ be a promise domain and $\sigma = \{(i, \ar_i) : i \in I\}$ be a signature. Let $\Gamma = (\Gamma_P = \{P_i \subseteq E^{\ar_i}\}, \Gamma_Q = \{Q_i \subseteq D^{\ar_i}\})$ be a promise template with this signature. Let $F$ be a (possibly infinite) domain. Let $\Lambda$ be a template over $F$ with signature $\sigma$. We say that $\Gamma$ \emph{homomorphically sandwiches} $\Lambda$ if there exist maps $g : D \to F$, $h : F \to E$ such that $g$ is a homomorphism from $\Gamma_P$ to $\Lambda$ and $h$ is a homomorphism from $\Lambda$ to $\Gamma_Q$.
\end{df}

\begin{rem}
  Note that we can also say that a promise template $(\Gamma_P, \Gamma_Q)$ homomorphically sandwiches another promise template $(\Lambda_P, \Lambda_Q)$ if there are homomorphisms from $\Gamma_P$ to $\Lambda_P$ and $\Lambda_Q$ to $\Gamma_Q$. In the remainder of the paper, we assume that $\Lambda$ is a CSP template unless otherwise specified.
\end{rem}

\begin{rem}
For brevity, we often say that $\Gamma$ sandwiches $\Lambda$ if $\Gamma$ homomorphically sandwiches $\Lambda$. Furthermore, we often do not want to a-priori restrict $\Lambda$ to have a particular signature, so we more generally say that $\Gamma$ sandwiches a set of relations $\Lambda$ if there is a subset $\Lambda'$ with the same signature as $\Gamma$ so that $\Gamma$ sandwiches $\Lambda'$. 
\end{rem}

To exemplify this definition, we give a few examples of homomorphic sandwiches.

\begin{enumerate}
\item Let $\Gamma$ be any CSP over $D$, then the promise CSP $(\Gamma, \Gamma)$ sandwiches $\Gamma$ via $(g, h) = (\id_D, \id_D)$
\item Recall from Example 6 of Section~\ref{subsec:PCSP} that $\Gamma = (\Gamma_{k\text{--col}}, \Gamma_{\ell\text{--col}})$ is the promise template for the $k$ vs. $\ell$ approximate graph coloring problem. For any $m \in \{k, \hdots, \ell\}$, we have that $\Gamma$ sandwiches $\Gamma_{m\text{--col}}$ via $(\id_{[k]}, \id_{[m]})$. In other words, any algorithm which solves the $m$-coloring problem can also solve the $k$ vs. $\ell$ approximate graph coloring problem.
\item Consider $\Gamma = (\Gamma_P, \Gamma_Q)$ from Example 4 of Section~\ref{subsec:PCSP} with affine subspace $A \le \mathbb F^{\ar}_7$ and the map $h : \mathbb F_7 \to \{0, 1\}$. Then, $\Gamma$ sandwiches $\Lambda$ via $(\id_{\{0, 1\}}, h)$.
\end{enumerate}

In practice, $D$ and $E$ will both be finite, but $F$ is likely infinite. Thus, $g : D \to F$ is a finite map (often something canonical, like the identity function) which tells how to express our promise CSP in the new domain. On the other hand, $h : F \to E$, which we call the \emph{rounding function}, is where the ``algorithmic magic'' takes place.\footnote{Or, keeping with the sandwich theme, $h$ is the \emph{special sauce}.} When $F$ is infinite, it is not \emph{a priori} obvious that $h$ has a computationally efficient description, so we often assume that we have \emph{oracle access} to this rounding function. Furthermore, the choice of rounding function $h$ is crucially tied to the polymorphisms of $\Gamma$. In fact, one can consider $h$ to be the ``limit'' of a sequence of polymorphisms of $\Gamma$, or alternatively $\Pol(\Gamma)$ is a \emph{discretization} of $h$. This connection between $h$ and polymorphisms is made more clear in Section~\ref{sec:boolean}.

To be the best of the authors' knowledge, all known tractable promise CSPs have the property that the given promise template $\Gamma$ sandwiches a judiciously chosen $\Lambda$ for which $\CSP(\Lambda)$ is polynomial-time tractable. In fact, the authors conjecture that any tractable promise CSP must sandwich some (possibly infinite) tractable CSP.

However, even though $\Gamma$ has a finite domain, $\Lambda$ often necessarily has infinite domain, even when $\Gamma$ is Boolean. Very recently, Barto [unpublished] showed that $\Lambda$ must be infinite in the tractable case that $\Gamma = (\text{1-in-3-SAT}, \text{NAE-3SAT})$ studied in \cite{DBLP:conf/soda/BrakensiekG18}. The Basic LP and affine relaxations, explained in the following sections, are instances of sandwiching an infinite-domain $\Lambda$. This is another reason why classifying the tractability of promise CSPs is so much more difficult than for ordinary CSPs, and perhaps partially explains the difficulty of theory community's struggle to resolve the approximate graph coloring problem.

\subsection{Basic LP Relaxation}\label{subsec:basiclp}

The Basic LP relaxation is a widespread tool in approximation algorithms, often giving optimal results. For example, the resolution of the Finite-Valued CSP (VCSP) dichotomy due to Thapper and {\v Z}ivn{\'y}~\cite{DBLP:journals/jacm/ThapperZ16} showed that all tractable instances can be solved with a Basic LP relaxation.  In \cite{DBLP:conf/soda/BrakensiekG18}, the Basic LP was one of the classes of algorithms exhibited in tractable promise CSPs (used for the $\MAJ$ and $\AT$ families). In this work, we vastly generalize the usage of such an algorithm.

Fix a template $\Gamma = \{P_i \subseteq D^{\ar_i}\}$ and consider an instance of $\CSP(\Gamma)$

\[
  \Psi(x_1, \hdots, x_n) := \bigwedge_{j \in J} P_{i_j}(x_{j_1}, \hdots, x_{j_{\ar_{i_j}}}).
\]

Fix a subring\footnote{In this paper, all subrings will have the element $1$. In particular, every subring of $\mathbb R$ contains $\mathbb Z$.} $A \subset \mathbb R$ which is to be the domain of our Basic LP. (Typically, $A = \mathbb Q$, but for reasons we are soon to see, other commutative rings are useful.)

Fix a positive integer $k \ge 1$ and a map $g : D \to A^k$. Then, for each $P_i \in \Gamma$, $g(P_i)$ is a cloud of points in $(A^k)^{\ar_i} \subseteq \mathbb R^{k\ar_i}$. Recall the notion of a \emph{convex hull} of a set of points $S \in \mathbb R^n$
\[
  \Conv(S) = \left\{\sum_{i=1}^\ell \alpha_iz_i : \alpha_i \in [0, 1], z_i \in S, \sum_{i=1}^{\ell}\alpha_i = 1\right\}.
\]
We let $\Conv_A(S) = \Conv(S) \cap A^n$. If we assume that each $P_i$ has constant size, then $\Conv(g(P_i))$ can be specified by a constant number of linear inequalities.

The following is the Basic LP relaxation.

\begin{framed}
    \begin{itemize}
    \item Input: $\Psi(x_1, \hdots, x_n) := \bigwedge_{j \in J} P_{i_j}(x_{j_1}, \hdots, x_{j_{\ar_{i_j}}})$, an instance of $\CSP(\Gamma)$.
    \item Variables: each $x_i$ is replaced by $v_i\in A^k$.
    \item Constraints:
      \begin{itemize}
      \item For each $x_i$, specify that $v_i \in \Conv_{A^k}(g(D))$.
      \item For each constraint $P_{i_j}(x_{j_1}, \hdots, x_{j_{\ar_{i_j}}})$ specify that
        \[
          (v_{j_1}, \hdots, v_{j_{\ar_{i_j}}}) \in \Conv_{A^k}(g(P_{i_j})).
        \]
      \end{itemize}
    \end{itemize}
   \centering{\textbf{Relaxation \thesection.1.} The Basic LP relaxation of $\CSP(\Gamma)$.}
\end{framed}
\begin{rem}
Since our primary goal is feasibility, our LP relaxations do not have objective functions.
\end{rem}

\subsubsection{LP-solvable rings}

If we assume that $\Gamma$ is finite, each $P_i$ has constant size, the size of this LP relaxation is linear in the size of the input $\Psi$, with constant factors depending on the specific $\Gamma$.  Note that if $A = \mathbb Q$, we can test feasibility and output a solution in polynomial time. Like most uses of linear programming in approximation algorithms, an LP solution, once found, is \emph{rounded} to solve the problem at hand. Due to technical restrictions of the rounding algorithms in this paper, there are often edge cases, for which rounding will not work.  For example, the procedure ``round to the nearest integer'' does not work for $v_{i,j} = 1/2$. In \cite{DBLP:conf/soda/BrakensiekG18}, the authors used an ad-hoc approach for avoiding these $1/2$ situations, but it turns out these can be solved in a more principled manner by solving the LP over a different ring other than $\mathbb Q$! Of course, linear programs over certain rings such as $A = \mathbb Z$, are not solvable in polynomial time unless $\mathsf{P} = \mathsf{NP}$, so we need to look at so-called \emph{LP-solvable} rings.

\begin{df}\label{df:LP-solv}
  Let $A \subset \mathbb R$ be an efficiently computable subring (defined in Section~\ref{sec:LP}). Let $Mx \le b$ be a system of linear inequalities with $M \in \mathbb Z^{m \times n}, b \in \mathbb Z^n$. If a feasible point $x \in A^n$ can be computed in $\poly(n + m + \size)$ time, where $\size$ is the representation complexity of the system of inequalities, then $A$ is \emph{LP-solvable}.
\end{df}

Note that we only consider linear programs with integer coefficients (equivalently rational coefficients), as all Basic LP reductions considered in this paper will have that form.

Thus, $A = \mathbb Q$ is LP-solvable, but $A = \mathbb Z$ is not. For technical reasons, we assume that $\mathbb R$ is \emph{not} LP-solvable because, in general, elements of $\mathbb R$ do not have a computable description. Even with these restrictions, there is still a plethora of $A$ which are LP-solvable. Results of Adler and Beling~\cite{DBLP:journals/mp/AdlerB92,DBLP:journals/algorithmica/Beling01}, show that fields such as bounded-degree algebraic extensions of $\mathbb Q$, such as $\mathbb Q[\sqrt{q}]$ for $q$ non-square are LP-solvable, although this may only be for the unit-cost arithmetic model.\footnote{An earlier version of this manuscript incorrectly interpreted one of their results to mean that $\mathbb Z[\sqrt{q}]$ is LP-solvable, but this conflated the coefficients of the linear equations ($\mathbb Z[\sqrt{q}]$ or equivalently $\mathbb Q[\sqrt{q}]$) with the domain the feasible points reside in ($\mathbb Q[\sqrt{q}]$).}

In Section~\ref{sec:LP}, we prove Theorem~\ref{thm:Z2} follows from the fact that $\mathbb Z[\sqrt{q}]$ is LP-solvable for non-square $q > 0$. The usefulness of this fact is that edge cases like rounding $1/2$ can be avoided by solving the LP over, say, the ring $\mathbb Z[\sqrt{2}]$. The authors are unaware of a previous application of this fact to approximation algorithms of CSPs.

\subsubsection{Sandwich interpretation}

Returning to main task of solving Promise CSPs, this procedure of rewriting a promise CSP as a Basic LP and then rounding is really a homomorphic sandwich. Let $A \subset \mathbb R$ be an LP-solvable ring and let $\Gamma$ be a promise relation over the promise domain $(D, E, \phi)$. As first introduced in Section~\ref{subsec:sandwich}, let $g : D \to \mathbb Z^k$ be any map, and let $h : A^k \to E$ be our rounding function. Let $\LP_{A^k} = \{R : R = \Conv_{A^k}(S), S \subset \mathbb Z^{k\ell}\text{ finite}, \ell \ge 1\}$ be the family of convex subsets.

\begin{restatable}{thm}{basiclp}\label{thm:basiclp}
  Let $A \subset \mathbb R^k$ be an LP-solvable ring. Let $(D, E, \phi)$ be a finite promise domain. Let $\Gamma = (\Gamma_P = \{P_i \in D^{\ar_i} : i \in I\}, \Gamma_Q = \{Q_i \in E^{\ar_i}\})$ be a finite promise CSP. Assume that $\Gamma$ sandwiches $\LP_{A^k}$ via $(g : D \to \mathbb Z^k, h : A^k \to E)$. Then $\PCSP(\Gamma) \in \mathsf{P}^h$, in which $\mathsf{P}^h$ is the family promise languages which can be computed in polynomial time given oracle access to $h$.
\end{restatable}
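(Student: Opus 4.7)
The plan is to implement the Basic LP relaxation exactly as described in Relaxation 3.1, then round using the homomorphism $h$. Concretely, given an instance $(\Psi_P,\Psi_Q)$ of $\PCSP(\Gamma)$ with
\[
  \Psi_P(x_1,\dots,x_n) = \bigwedge_{j\in J} P_{i_j}(x_{j_1},\dots,x_{j_{\ar_{i_j}}}),
\]
I will introduce variables $v_i\in A^k$ (i.e., $nk$ scalar $A$-valued unknowns) and, for each clause $j\in J$, impose the constraint
\(
(v_{j_1},\dots,v_{j_{\ar_{i_j}}})\in \Lambda_{i_j},
\)
where $\Lambda_{i_j}=\Conv_{A^k}(S_{i_j})\in\LP_{A^k}$ is the relation, with $S_{i_j}\subset\mathbb Z^{k\ar_{i_j}}$ finite, furnished by the sandwich hypothesis. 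Because $\Gamma$ is finite, the collection $\{\Lambda_i\}_{i\in I}$ is finite and each $\Lambda_i$ admits (once and for all, as preprocessing) a description as a system $M_i x\le b_i$ with $M_i,b_i$ having integer entries of constant bit size; clearing denominators and concatenating these descriptions over all clauses yields an overall system $Mv\le b$ with $M\in\mathbb Z^{m\times nk}$, $b\in\mathbb Z^m$, whose size is $\poly(|\Psi_P|)$.

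Next, feasibility: I claim the LP has a solution whenever $\Psi_P$ is satisfiable. Indeed, if $x_1,\dots,x_n\in D$ satisfies $\Psi_P$, set $v_i:=g(x_i)\in\mathbb Z^k\subseteq A^k$. Since $g$ is a homomorphism from $\Gamma_P$ to $\Lambda':=\{\Lambda_i\}$, for every clause
\(
(g(x_{j_1}),\dots,g(x_{j_{\ar_{i_j}}}))\in g(P_{i_j})\subseteq \Lambda_{i_j},
\)
so the $v_i$'s are feasible. Invoking LP-solvability of $A$ (Definition~\ref{df:LP-solv}), I can therefore produce in polynomial time a feasible assignment $v_1,\dots,v_n\in A^k$.

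Finally, I round by calling the oracle for $h$ on each $v_i$: set $y_i:=h(v_i)\in E$. For each clause $j\in J$, $(v_{j_1},\dots,v_{j_{\ar_{i_j}}})\in \Lambda_{i_j}$ together with $h$ being a homomorphism from $\Lambda'$ to $\Gamma_Q$ (so $h(\Lambda_{i_j})\subseteq Q_{i_j}$) yields
\[
  (y_{j_1},\dots,y_{j_{\ar_{i_j}}}) = \bigl(h(v_{j_1}),\dots,h(v_{j_{\ar_{i_j}}})\bigr)\in h(\Lambda_{i_j})\subseteq Q_{i_j},
\]
so $y_1,\dots,y_n$ satisfies $\Psi_Q$. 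The total running time is $\poly(|\Psi_P|)$ plus $n$ oracle calls to $h$, giving $\PCSP(\Gamma)\in\mathsf{P}^h$.

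The only subtle point, and the one I'd spend the most care on, is the preprocessing step: one must verify that the convex hulls $\Conv_{A^k}(g(P_i))$ and, more broadly, the relations $\Lambda_i$ needed for the sandwich can be written as polyhedra $M_ix\le b_i$ with \emph{integer} coefficients of size $O(1)$, so that Definition~\ref{df:LP-solv} actually applies. This is automatic since each $\Lambda_i$ is the convex hull of a fixed finite subset of $\mathbb Z^{k\ar_i}$ (finiteness of $\Gamma$ makes the number and sizes of such hulls constants depending only on $\Gamma$), and any rational H-representation can be scaled to integer form. Once this is in hand, the remainder is a straightforward bookkeeping exercise chaining $g$, the LP solve, and $h$.
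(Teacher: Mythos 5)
Your proposal is correct and follows essentially the same route as the paper's proof in Appendix~A: form the Basic LP relaxation, solve it over $A$ using LP-solvability (with the integrality of the defining point sets ensuring integer coefficients), and round coordinatewise with the oracle for $h$, with feasibility from $g$ being a homomorphism and soundness from $h$ being a homomorphism. The only cosmetic difference is that you constrain directly to the sandwich relations $\Lambda_i$ rather than to $\Conv_{A^k}(g(P_i))\subseteq\Lambda_i$ as in Relaxation~3.1, which changes nothing in the argument.
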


This result is proven in Appendix~\ref{app:hom-proofs}. Intuitively, Theorem~\ref{thm:basiclp} abstracts away the fine details of working with the Basic LP and reduces the task to showing the existence of a sandwich.

\subsection{Affine Relaxation}\label{subsec:affine}

Another broad class of algorithms studied in the CSP literature correspond to solving system of linear equations over some commutative ring. Linear-equation-solving algorithms are captured in the CSP literature under the broader class of CSPs with a \emph{Mal'tsev polymorphism}: a function on three variables such that $\varphi(x, y, y) = \varphi(y, y, x) = x$ always. Such CSPs are known to be tractable (e.g., \cite{DBLP:conf/dagstuhl/BartoKW17}). For commutative rings, the canonical Mal'tsev polymorphism is $\varphi(x, y, z) = x - y + z$, when the domain is a finite ring. Such algorithms are not restricted to finite domains: linear equations over $\mathbb Q$ can be solved in polynomial time using Gaussian elimination, and linear equations over $\mathbb Z$ can be solved in polynomial time by computing the Hermite Normal Form (e.g., \cite{frumkin1976algorithm, DBLP:journals/siamcomp/KannanB79, grotschel2012geometric}). This leads to the natural notion of \emph{LE-solvable} rings.

\begin{df}
  Define a commutative ring $R$ to be \emph{LE-solvable}, if systems of linear equations over $R$ can be efficiently solved in (weakly) polynomial time.
\end{df}

By the discussion above, all finite commutative rings are LE-solvable, as well as the infinite rings $\mathbb Z^k$ for any natural number $k$. Furthermore, every LP-solvable ring is LE-solvable as LPs are more expressive than LEs. Just as LPs relax sets to their convex hulls, linear equations relax sets to their \emph{affine hulls}. Given a subset $S \subseteq R$ and a subring $R' \subset R$, define the \emph{affine hull} to be
\[
  \Aff_{R'}(S) = \left\{r_1s^1 + \cdots + r_ks^k : \forall j, s^j \in S, r_j \in R'\text{ and }\sum_{j=1}^k r_j = 1\right\}.
\]
The reason we want the $r_j$'s to be restricted to a subring is made apparent in Section~\ref{sec:larger}, where we need to ensure that the $r_j$'s are integers (i.e., multiples of $1 \in R$). Note that by design $S \subseteq \Aff_{R'}(S)$. Furthermore, if $S$ is finite, checking whether $x \in \Aff_{R'}(S)$ can constrained by two linear conditions:
\begin{align*}
  x &= r_1s^1 + \cdots + r_ks^k\\
  1 &= r_1 + \cdots + r_k,
\end{align*}
This system of equations is a bit strange in that we are specifying that $x \in R$ but $r_1, \hdots, r_k \in R'$. As long as both $R'$ and $R$ are finite, this system still has the Mal'tsev polymorphism as the operator $(x, y, z) \mapsto x - y + z$ is closed under every ring. More generally, we say that the pair $(R', R)$ is LE-solvable.

We can now define the \emph{affine relaxation} of any CSP. Fix a finite domain $D$ and a ring $R$ and a subring $R'$. Also pick a map $g : D \to R$ to be our homomorphism.\footnote{Note that $g$ is not necessarily a \emph{ring} homomorphism.} Let $\Gamma = \{P_i \subseteq D^{\ar_i}\}$ be any template over $D$. Note that $g(P_i)$ is some finite subset of $R^{\ar_i}$. This leads to our description of an affine relaxation.

\begin{framed}
\begin{itemize}
\item Input: $\Psi(x_1, \hdots, x_n) := \bigwedge_{j \in J} P_{i_j}(x_{j_1}, \hdots, x_{j_{\ar_{i_j}}})$, instance of $\CSP(\Gamma)$, 
\item Variables: each $x_i$ is replaced by $w_{i} \in R$.
\item Constraints: For each constraint $P_{i_j}(x_{j_1}, \hdots, x_{j_{\ar_{i_j}}})$ of $\Psi$, specify that
    \[
      (w_{j_1}, \hdots, w_{j_{\ar_{i_j}}}) \in \Aff_{R'}(g(P_{i_j})).
    \]
  \end{itemize}
  \centering{\textbf{Relaxation \thesection.2.} The affine relaxation of $\CSP(\Gamma)$ with respect to a pair of rings $R' \subset R$.}
\end{framed}

By definition, $\Aff_{R'}(g(P_i))$ has a constant-sized description, since each $P_i$ is of constant size, and there are finitely many possible values for $g(P_i)$, a lookup table of the linear constraints can be formed. Thus, the system can be generated in linear time, and so it can be solved in polynomial time whenever $(R', R)$ is LE-solvable. Let

\[\LE_{R' \subset R} = \{\Aff_{R'}(S) : \exists k, S \subseteq R^k, S\text{ finite}\}.\]

If $R' = R$, then we denote this as $\LE_{R}$. This leads to an analogue of Theorem~\ref{thm:basiclp} for the infinite template over $R$

\begin{restatable}{thm}{affine}\label{thm:affine}
  Let $(R', R)$ be an LE-solvable pair. Let $(D, E, \phi)$ be a finite promise domain, and let $\Gamma = (\Gamma_P = \{P_i \in D^{\ar_i}\}, \Gamma_Q = \{Q_i \in E^{\ar_i}\})$ be a finite promise CSP over this promise domain. Assume that $\Gamma$ sandwiches $\LE_{R' \subset R}$ via $(g : D \to R, h : R \to E)$. Then, $\PCSP(\Gamma) \in \mathsf{P}^h$.
\end{restatable}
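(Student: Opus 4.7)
The plan is to mirror the structure of Theorem~\ref{thm:basiclp}, replacing the Basic LP relaxation by the affine relaxation (Relaxation~3.2) and appealing to LE-solvability in place of LP-solvability. Given a PCSP instance $(\Psi_P,\Psi_Q)$ with the promise that $\Psi_P$ is satisfiable, the algorithm I would run is: (1) build the affine relaxation of $\Psi_P$ using the sandwiching map $g:D\to R$, with one variable $w_i\in R$ per variable $x_i$ of $\Psi$ and one constraint $(w_{j_1},\dots,w_{j_{\ar_{i_j}}})\in \Aff_{R'}(g(P_{i_j}))$ per clause; (2) solve this system of linear equations over $(R',R)$ using the LE-solver; (3) call the rounding oracle to output $y_i=h(w_i)\in E$ for each variable.

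The first step to justify is feasibility of the relaxation. By hypothesis, $\Psi_P$ has a satisfying assignment $a:X\to D$. Setting $w_i=g(a(x_i))\in R$ gives a solution to the relaxation, because for every clause $(a(x_{j_1}),\dots,a(x_{j_{\ar_{i_j}}}))\in P_{i_j}$, so $(g(a(x_{j_1})),\dots,g(a(x_{j_{\ar_{i_j}}})))\in g(P_{i_j})\subseteq\Aff_{R'}(g(P_{i_j}))$. Hence the LE-solver returns some solution $(w_1,\dots,w_n)\in R^n$ in time polynomial in the size of the system. Since $\Gamma$ and each $P_i$ are of constant size, the affine hulls $\Aff_{R'}(g(P_i))$ can each be presented by a fixed finite system of linear equations (computed once from $\Gamma$), so the total system has size linear in $|\Psi|$, and the whole step runs in polynomial time by LE-solvability of $(R',R)$.

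The second step to justify is correctness of rounding. The sandwich hypothesis gives a template $\Lambda=\{\Lambda_i\}$ over $R$, with $\Lambda_i\in \LE_{R'\subset R}$, together with the homomorphisms $g:\Gamma_P\to\Lambda$ and $h:\Lambda\to\Gamma_Q$. Since $g(P_i)\subseteq\Lambda_i$ and $\Lambda_i$ is itself an $R'$-affine hull of some finite set (hence closed under $R'$-affine combinations), we get $\Aff_{R'}(g(P_i))\subseteq\Lambda_i$. Therefore, for any solution $(w_{j_1},\dots,w_{j_{\ar_{i_j}}})$ of the affine relaxation, the tuple lies in $\Lambda_{i_j}$, and applying $h$ coordinatewise yields a tuple in $h(\Lambda_{i_j})\subseteq Q_{i_j}$. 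Thus $(y_1,\dots,y_n)$ satisfies $\Psi_Q$, as required.

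The main potentially subtle point is the interaction between $R'$ and $R$: we need that $\Aff_{R'}(g(P_i))$ is actually cut out by linear equations over $(R',R)$ whose total bit-size is bounded, and that the LE-solver handles variables in $R$ with coefficients in $R'$ (this is exactly what ``LE-solvable pair'' encodes, so it reduces to the definition). A small technicality is also that the reduction from the search to the decision version of $\PCSP(\Gamma)$ is immediate here, since after rounding we have an explicit assignment to $\Psi_Q$ which can be verified in polynomial time, matching the $\mathsf{P}^h$ conclusion. Modulo these bookkeeping points, the proof is essentially the one-line observation that affine-hull containment commutes with homomorphisms, combined with polynomial-time solvability of the relaxation.
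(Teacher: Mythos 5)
Your proof is correct and follows essentially the same route as the paper's: build the affine relaxation via $g$, solve it using LE-solvability of $(R',R)$, and round with $h$, with correctness resting on the observation that $g(P_i)\subseteq\Lambda_i$ forces $\Aff_{R'}(g(P_i))\subseteq\Lambda_i$ since $\Lambda_i$ is itself an $R'$-affine hull. The only difference is that you spell out the feasibility and closure-under-affine-combinations steps slightly more explicitly than the paper does.
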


This result is proven in Appendix~\ref{app:hom-proofs}.

\subsection{Combined Relaxation}\label{subsec:combined}

The true power of these homomorphic sandwiches is revealed when these relaxations are combined using \emph{direct products}.

Consider CSP templates $\Lambda_1$ and $\Lambda_2$ over domains $F_1$ and $F_2$ (not necessarily finite), respectively. We define the \emph{direct product} $\Lambda_1 \times \Lambda_2$ to be the CSP template over the domain $F_1 \times F_2$ such that
\[
  \Lambda_1 \times \Lambda_2 = \{R_1 \times R_2 \subseteq (F_1\times F_2)^{\ar}: R_1\in \Lambda_1, R_2 \in \Lambda_2\text{ same arity }\ar\},
\]
where $(R_1 \times R_2)((x_1, y_1), \hdots, (x_{\ar}, y_{\ar})) = R_1(x_1, \hdots, x_{\ar}) \wedge R_2(y_1, \hdots, y_{\ar}).$ 

Note that up to relabeling coordinates, the direct product is commutative and associative, allowing the seamless combination of two or more CSP templates.

Fix a sequence of LP-solvable rings $\mathcal A := (A_1, \hdots, A_\ell)$ and a sequence of LE-solvable pairs $\mathcal R := (R'_1 \subset R_1, \hdots, R'_m \subset R_m)$. Now define the template
\[
  \LPLE_{\mathcal A, \mathcal R} := \LP_{A_1^{k_1}} \times \cdots \times \LP_{A_{\ell}^{k_{\ell}}}  \times \LE_{R'_1 \subset R_1} \times \cdots \times \LE_{R'_m \subset R_m}.
\]
It turns out promise homomorphisms to this template correspond to algorithms which \emph{combine} linear programming and affine equation solving.

\begin{restatable}{thm}{basiclpplusaffine}\label{thm:basiclp+affine}
  Let $\mathcal A := (A_1^{k_1}, \hdots, A_{\ell}^{k_{\ell}})$ be a sequence of LP-solvable rings, and let $\mathcal R := (R'_1 \subset R_1, \hdots, R'_m \subset R_m)$ be a sequence of LE-solvable pairs. Let $(D, E, \phi)$ be a finite promise domain, and let $\Gamma = (\Gamma_P, \Gamma_Q)$ be a finite promise CSP over this domain. Assume that $\Gamma$ sandwiches $\LPLE_{\mathcal A, \mathcal R}$ via $(g : D \to \mathbb Z^{k_1} \times \cdots \times \mathbb Z^{k_{\ell}} \times R_1 \times \cdots \times R_m,\,\,\, h : A_1^{k_1} \times \cdots \times A_{\ell}^{k_{\ell}} \times R_1 \times \cdots \times R_m \to E)$. Then, $\PCSP(\Gamma) \in \mathsf{P}^h$.
\end{restatable}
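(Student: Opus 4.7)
The plan is to package together Theorems~\ref{thm:basiclp} and~\ref{thm:affine} by exploiting the direct-product structure of $\LPLE_{\mathcal A, \mathcal R}$. By the sandwich hypothesis, there is a CSP template $\Lambda$ over the product domain $F := A_1^{k_1} \times \cdots \times A_\ell^{k_\ell} \times R_1 \times \cdots \times R_m$ each of whose relations factors as
\[
  R = R^{(1)} \times \cdots \times R^{(\ell)} \times R^{(\ell+1)} \times \cdots \times R^{(\ell+m)},
\]
with $R^{(s)} \in \LP_{A_s^{k_s}}$ for $s \le \ell$ and $R^{(\ell+t)} \in \LE_{R'_t \subset R_t}$ for $t \le m$, together with homomorphisms $g$ from $\Gamma_P$ to $\Lambda$ and $h$ from $\Lambda$ to $\Gamma_Q$.

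Given an instance $(\Psi_P,\Psi_Q)$ of $\PCSP(\Gamma)$ with $\Psi_P$ satisfiable, I would first lift each variable $x$ to a tuple $w_x = (v_x^{(1)},\ldots,v_x^{(\ell)},u_x^{(1)},\ldots,u_x^{(m)}) \in F$, replacing each constraint of $\Psi_P$ by its $\ell+m$ factor constraints. The first $\ell$ groups form a Basic LP over each $A_s^{k_s}$, for which a feasible point can be produced in polynomial time by LP-solvability; the remaining $m$ groups form affine equation systems over the pairs $(R'_t,R_t)$, solvable in polynomial time by LE-solvability. Joint feasibility across all factors is guaranteed by applying $g$ coordinatewise to a satisfying assignment of $\Psi_P$ (which exists by the promise): since $g$ is a homomorphism from $\Gamma_P$ to $\Lambda$, this witness satisfies every lifted constraint, hence each of the $\ell+m$ projected subproblems is feasible, so they can be solved independently.

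Finally, I would apply the oracle $h$ coordinatewise to the computed $w_x$'s to obtain $y_x = h(w_x) \in E$ and output $y$ as an assignment to $\Psi_Q$. Correctness follows because $h$ is a homomorphism from $\Lambda$ to $\Gamma_Q$: for each original constraint, the lifted tuple lies in the corresponding relation $R$ of $\Lambda$ (by combining the solutions from the factors), so its image under $h$ lies in the corresponding $Q_i$. The running time is polynomial: the lifted instance has linear size because each $P_i$ has constant arity and size (so $\Conv_{A_s^{k_s}}(g(P_i))$ and $\Aff_{R'_t}(g(P_i))$ admit constantly many linear conditions, obtainable by a finite lookup), each of the $\ell+m$ solvers runs in polynomial time by hypothesis, and the rounding step makes one oracle call per variable.

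There is no genuine obstacle beyond the bookkeeping: the content of the theorem is the observation that the direct product cleanly decouples the factors, so that feasibility in each factor jointly produces a tuple in the product relation, and thus the LP and affine relaxations can be solved in parallel and merged only at the rounding step. The two subtle points to verify carefully, both immediate from the definitions, are (i) that direct-product relations decompose exactly as $\wedge$ of factor constraints (so independent solving suffices), and (ii) that $g$'s codomain being the integer/ring part of $F$ is compatible with the LP solver returning points in the larger ring $A_s^{k_s}$ — but this just says the feasibility witness from $g$ lives inside the LP's feasible region, which is all that is needed.
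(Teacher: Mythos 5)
Your proposal is correct and follows essentially the same route as the paper's proof: solve the $\ell$ Basic LP relaxations and the $m$ affine relaxations independently (each feasible because $g$ maps a satisfying assignment of $\Psi_P$ into every factor), then combine the per-variable solutions into a tuple and round once with the oracle $h$, whose homomorphism property into $\Gamma_Q$ gives correctness. The two subtleties you flag — that direct-product relations decompose as conjunctions of factor constraints, and that $g$'s integer codomain keeps the LP coefficients integral while the solver returns points in the larger ring — are exactly the points the paper also notes, so nothing is missing.
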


This result is proven in Appendix~\ref{app:hom-proofs}. Theorems \ref{thm:basiclp}, \ref{thm:affine}, and \ref{thm:basiclp+affine} are useful in that if we can show for a particular promise template $\Gamma$ sandwiches a suitable $\LP_{A}, \LE_{R'\subset R}$ or $\LPLE_{\mathcal A, \mathcal R}$ via $(g, h)$ \emph{and} $h$ is proven to be polynomial-time computable, then we can show that $\PCSP(\Gamma) \in \mathsf{P}$.

Before we establish some homomorphic sandwiches from the perspective of polymorphisms in Sections~\ref{sec:boolean}-\ref{sec:larger}, we take a closer look at LP-solvability.

\section{Understanding LP-solvability}\label{sec:LP}

In this section, we try to understand LP-solvability, particularly for the rings $\mathbb Z[\sqrt{q}]$. First, we need to discuss the subrings of $\mathbb R$ to which our results apply.

\subsection{Efficiently computable rings}

Let $A \subset \mathbb R$ be the subring which we are examining. We assume that each element $a \in A$ has some underlying representation as bits. We let $\size_A(a)$ denote the number of bits needed to represent $a$ with respect to the ring $A$. It may be the case that a element has multiple representations, (e.g., non-simplified fractions), in which case $\size_A$ refers to the bit representation given as input.

For example, if $A = \mathbb Z$, then $\size_{\mathbb Z}(n) = O(\log (|n|+1))$. Or, if $A = \mathbb Q$, a common choice is to have $\size_{\mathbb Q}(a/b) = O(\log (|a| + |b| + 1)).$ For a more detailed discussed about representation complexity (in the context of the rationals), see Chapter~1 of \cite{grotschel2012geometric}.

\begin{df}
  A subring $A \subset \mathbb R$ is an \emph{efficiently computable subring} if the following conditions hold.

  \begin{enumerate}
  \item For all $n \in \mathbb Z$, there is an efficiently computable representation of $n$ (with respect to $A$) such that $\size_A(n) \le \poly(\log (|n| + 1))$.
  \item For all $a \in A$ with $a > 0$,
    \[
      2^{\poly(\size_A(a))} > a > 2^{-\poly(\size_A(a))}.
    \]
  \item Each of the binary operations $\{+, -, \times,  =, <\}$ is computable in polynomial time (as a function of the input bit complexities).
  \item There is a universal constant $C > 0$ such that For each $\circ \in \{+, -, \times\}$ and for all $a, b \in \mathbb R$, $\size_A(a \circ b) \le C(\size_A(a) + \size_A(b))$.
  \end{enumerate}
  
  Likewise, we say that a subfield $F \subset \mathbb R$ is an \emph{efficiently computable subfield}, if the operator $\div$ can be added to conditions (3) and (4).
\end{df}

We quickly note that any efficiently computable subring can be extended to an efficiently computable subfield.

\begin{prop}\label{prop:quot}
  Let $A \subset \mathbb R$ be an efficiently computable subring. Then, $\quot(A) := \{a/b \mid a,b \in A, b \neq 0\}$ is an efficiently computable subfield.
\end{prop}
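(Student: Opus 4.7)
The plan is to represent every element of $\quot(A)$ by a pair $(a,b) \in A \times (A \setminus \{0\})$ standing for $a/b$, with bit-complexity $\size_{\quot(A)}(a/b) := \size_A(a) + \size_A(b) + O(1)$. Since a pair representation is not canonical (e.g.\ $(a,b)$ and $(2a,2b)$ denote the same element), $\size_{\quot(A)}$ should be read as the size of the representation actually given. With this in hand I would verify the four defining conditions of an efficiently computable subring, and then additionally the two conditions for the operator $\div$, exactly as in the definition.

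For condition (1) I would embed $n \in \mathbb Z$ as the pair $(n,1)$; since by assumption $n$ and $1$ both admit polylogarithmic-size $A$-representations, so does $(n,1)$. For condition (2), normalize so $b > 0$ (using the sign test in $A$, which is part of condition (3) for $A$). Then $a/b \leq a \cdot b^{-1}$ is bounded above by $2^{\poly(\size_A(a))} \cdot 2^{\poly(\size_A(b))}$ using the magnitude bounds for $A$, which is $2^{\poly(\size_{\quot(A)}(a/b))}$; the analogous lower bound for $a/b > 0$ follows symmetrically. Condition (3) is handled by the textbook formulas
\[
  \tfrac{a}{b} \pm \tfrac{c}{d} = \tfrac{ad \pm bc}{bd}, \qquad \tfrac{a}{b} \cdot \tfrac{c}{d} = \tfrac{ac}{bd}, \qquad \tfrac{a}{b} \div \tfrac{c}{d} = \tfrac{ad}{bc} \ (c \neq 0),
\]
where each component requires only a constant number of $+, -, \times$ operations in $A$; these run in polynomial time by condition (3) for $A$. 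The equality test $a/b = c/d$ reduces to testing $ad = bc$ in $A$, and the order test $a/b < c/d$ reduces to comparing $ad$ with $bc$ after flipping if necessary according to the sign of $bd$, which is computable using the sign test and multiplication in $A$. Condition (4) follows immediately from condition (4) for $A$: each of the above formulas produces a new pair whose components are obtained by a constant number of ring operations in $A$, so $\size_{\quot(A)}$ grows by at most a constant factor per operation.

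I expect the main (mild) obstacle to be bookkeeping for the order relation: unlike in $A$, the sign of a quotient depends on the signs of \emph{both} numerator and denominator, so the comparison $a/b < c/d$ must be broken into cases according to $\operatorname{sgn}(bd)$ (computed by one multiplication and one comparison in $A$) before being reduced to a single comparison of $ad$ and $bc$ in $A$. Once this sign-tracking is made explicit, all four conditions — plus the analogous conditions for $\div$ — hold, and the standard verification that $\quot(A)$ is a field (using that $A$ is an integral domain, being a subring of $\mathbb R$) completes the argument.
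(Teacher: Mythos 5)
Your proof is correct and follows essentially the same route as the paper: represent elements of $\quot(A)$ as pairs over $A$, implement the field operations and comparisons via the standard cross-multiplication identities, and check the conditions of an efficiently computable subfield by inspection. Your explicit sign bookkeeping for the order test is a slightly more careful version of the paper's caveat ``(if $b,d>0$)'', which amounts to normalizing denominators to be positive.
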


\begin{proof}
  Represent elements of $\quot(A)$ as pairs of elements of $A$ (where the second element of the pair is always nonzero). The operators can then be implemented using the following identities.
  \begin{align*}
    \frac{a}{b} + \frac{c}{d} &= \frac{a\times d + b\times c}{b\times d}\\
    \frac{a}{b} - \frac{c}{d} &= \frac{a\times d - b\times c}{b\times d}\\
    \frac{a}{b} \times \frac{c}{d} &= \frac{a\times c}{b\times d}\\
    \frac{a}{b} \div \frac{c}{d} &= \frac{a\times d}{b\times c} \text{ (if $c \neq 0$)}\\
    \frac{a}{b} = \frac{c}{d} &\iff a \times d = b \times c\\
    \frac{a}{b} < \frac{c}{d} &\iff a \times d < b \times c \text{ (if $b, d > 0$)}.
  \end{align*}
  By inspection, these meet the conditions of an efficiently computable subfield.
\end{proof}

From this we can now state the main result of this section.

\begin{thm}\label{thm:ecds}
  Let $\mathbb Z \subsetneq A \subset \mathbb R$ be an efficiently computable subring. If $A$ is also LE-solvable, then $A$ is LP-solvable.
\end{thm}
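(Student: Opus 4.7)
The plan is to reduce LP-solvability over $A$ to three ingredients: standard polynomial-time LP solving over $\mathbb Q$, the LE-solvability of $A$, and an algorithmic density argument enabled by the strict inclusion $\mathbb Z \subsetneq A$. The high-level idea is that the feasible polytope $P = \{x : Mx \le b\}$ decomposes into an affine subspace (carved out by \emph{implicit equalities}) and a strictly-satisfied open region inside it; the former can be handled by applying the LE-solver to an integer linear system, and the latter by rational LP plus a rounding step.

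Concretely, I would first solve $Mx \le b$ over $\mathbb Q$ via the ellipsoid method, returning infeasible if no rational solution exists. Otherwise, for each $i$, solve $\max M_i x$ subject to $Mx \le b$ to detect whether constraint $i$ is an implicit equality, partitioning the constraints into $M_E x = b_E$ (tight on all of $P$) and $M_S x < b_S$ (strict on the relative interior). Applying the LE-solver for $A$ to the integer system $M_E x = b_E$ either certifies that no $A$-solution exists (in which case the LP has no $A$-feasible point, since every feasible $x$ must satisfy the implicit equalities), or produces some $x_0 \in A^n$. I would then compute an integer basis $v_1, \ldots, v_k \in \mathbb Z^n$ for $\ker M_E$ via Hermite normal form; letting $V = [v_1 \mid \cdots \mid v_k]$, the family $\{x_0 + V t : t \in A^k\}$ is a subset of $A^n$ that is dense in the affine hull of $P$, so it suffices to find $t \in A^k$ with $M' t < b'$, where $M' := M_S V$ is an integer matrix and $b' := b_S - M_S x_0 \in A^m$.

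Since the real solution set to $M' t < b'$ is nonempty and open, density of $A^k$ in $\mathbb R^k$ guarantees such a $t$ exists. To produce it algorithmically, I would use efficient computability of $A$ to compute rational lower bounds $\hat b' < b'$ by binary search, solve the rational LP $M' t \le \hat b' - \epsilon \mathbf 1$ to obtain $t^{\mathbb Q} \in \mathbb Q^k$, and then round $t^{\mathbb Q}$ coordinate-wise to an $A$-valued $t$ with rounding error smaller than the remaining slack. The hardest part will be making this last rounding step algorithmic: given $r \in \mathbb Q$ and $\delta > 0$, produce $a \in A$ with $|a - r| < \delta$ in time polynomial in the bit sizes. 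This requires a dichotomy on how $A$ extends $\mathbb Z$: if $A$ contains a non-integer rational $p/q$ with $\gcd(p, q) = 1$ and $q > 1$, then Bezout yields $1/q \in A$ and truncated base-$q$ expansions provide polynomial-size approximations to any rational; otherwise $A$ contains an irrational $\alpha$, so $\mathbb Z[\alpha] \subseteq A$ and algorithmic forms of Kronecker's theorem (via continued fractions of $\alpha$) produce comparable approximations. Efficient computability of $A$ then ensures that all arithmetic stays polynomial, delivering LP-solvability.
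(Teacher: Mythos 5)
Your proposal is correct and follows essentially the same route as the paper's Algorithm 4.1: compute the affine hull of the polytope (the implicit equalities), use the LE-solver to find a point of $A^n$ on it, parametrize the hull by an integer basis of the direction space so that $A$-linear combinations stay in $A^n$, and then perturb toward the relative interior using density of $A$. The only substantive difference is your density/rounding subroutine: the paper's Lemma~\ref{lem:dense} handles both of your cases uniformly by hardcoding a single constant $\alpha_0 \in A \cap (1/2, 2/3)$ and greedily summing its powers, which sidesteps the continued-fraction machinery and convergent-size bookkeeping of your irrational case; also note that the feasibility of your slackened LP $M't \le \hat b' - \epsilon \mathbf 1$ for an $\epsilon$ of polynomial bit size rests on the same standard polyhedral-complexity fact (a relative-interior point with a polynomial-size ball around it) that the paper imports from Chapter~6 of \cite{grotschel2012geometric}.
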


We show in Section~\ref{subsec:sqrt} that $\mathbb Z[\sqrt{q}]$ is LE-solvable and is an efficiently computable dense subring, whenever $q$ is nonsquare, implying that it is LP-solvable.

\subsection{Proof of Theorem~\ref{thm:ecds}}

In this subsection, we prove Theorem~\ref{thm:ecds}. First, we start off by proving/stating a few folklore ingredients.

\subsubsection{Affine hulls and interior points}

Let $Mx \le b$ be the original linear program with coefficients in $\mathbb Q$. Let $K = \{x \in \mathbb R^n : Mx \le b\}$. One important step in our algorithm will be to compute a system $M'x = b'$ whose solution space is $\Aff_{\mathbb R}(K)$. Assuming that $K \neq \emptyset$, then $K$ has a nonzero interior with respect to the topology induced by $\Aff_{\mathbb R}(K)$. This is known as the \emph{relative interior}. The following result shows that both computing the affine hull and finding a point in the relative interior can be done in polynomial time on a Turing machine.

\begin{lem}[Lemma~6.5.3+Remark~6.5.4~\cite{grotschel2012geometric}]\label{lem:aff+int}
  Given a linear system $Mx \le b$ with $M \in \mathbb Q^{m\times n}, b \in \mathbb Q^{m}$ whose set of solutions is $K$, one can compute in $\poly(n + m + \size_\mathbb Q(M) + \size_\mathbb Q(b))$-time a system $M'y = b'$ whose solution space is $\Aff(K)$ as well as a point $y_0$ in the relative interior of $K$.
\end{lem}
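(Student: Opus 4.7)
The plan is to combine a polynomial-time linear programming oracle over $\mathbb{Q}$ (for instance Khachiyan's ellipsoid algorithm) with the standard notion of an \emph{implicit equality} among the inequalities of $Mx\le b$. The strategy is to identify exactly those rows that are tight on every $x\in K$, declare these to constitute $M'y=b'$, and then assemble a relative-interior witness by averaging over points that strictly slacken each non-implicit inequality.

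First I would test feasibility of $Mx\le b$ in polynomial time; if $K=\emptyset$ the statement is vacuous, so assume $K\ne\emptyset$. Next, for each $i\in\{1,\dots,m\}$, solve the LP
\[
  \mu_i \;:=\; \min\{M_i x : Mx\le b\},
\]
where $M_i$ denotes the $i$-th row of $M$, and record an optimizer $x^{(i)}\in K$. Define $I := \{i : \mu_i = b_i\}$. For $i\in I$, every $x\in K$ satisfies $b_i \ge M_i x \ge \mu_i = b_i$, so $M_i x = b_i$ holds identically on $K$; for $i\notin I$, the optimizer $x^{(i)}$ satisfies $M_i x^{(i)} = \mu_i < b_i$ while lying in $K$. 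Set $M'$ and $b'$ to the submatrix and subvector indexed by $I$. Running $m$ LPs of the original size stays within $\poly(n+m+\size_{\mathbb Q}(M)+\size_{\mathbb Q}(b))$ time, and standard LP algorithms produce rational optima of polynomially bounded bit complexity.

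To verify that $\Aff(K) = \{y : M'y = b'\}$: every $x\in K$ satisfies $M'x=b'$, so $K$, and hence $\Aff(K)$, sits inside this affine subspace. Conversely, any linear equality $c^\top x = d$ that holds on all of $K$ must be a consequence of the implicit equalities already in $I$: if not, one of the halfspaces $M_i x \le b_i$ with $i\notin I$ would be forced tight on $K$, contradicting the strict inequality witnessed by $x^{(i)}$. Hence $\{y : M'y=b'\}$ is already the smallest affine subspace containing $K$, which is $\Aff(K)$.

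For the relative-interior point, put
\[
  y_0 \;:=\; \frac{1}{|\{1,\dots,m\}\setminus I|}\sum_{i\notin I} x^{(i)}
\]
(or $y_0 := x^{(1)}$ if $I=\{1,\dots,m\}$). Then $y_0\in K$ as a convex combination of feasible points; $M_i y_0 = b_i$ for all $i\in I$ since each summand is tight there; and for any $j\notin I$, $M_j y_0 < b_j$ because $M_j x^{(i)}\le b_j$ for every summand and $M_j x^{(j)} < b_j$ strictly. This is precisely membership in the relative interior of $K$ relative to $\Aff(K)$. The bit complexity of $y_0$ is polynomial as it is the mean of $O(m)$ polynomially-sized rationals. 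The main obstacle, and really the only delicate point, is controlling the bit sizes of the LP outputs and their average; invoking a standard polynomial-time LP subroutine over $\mathbb Q$ discharges this, and the remainder is elementary polyhedral reasoning.
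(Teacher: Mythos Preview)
The paper does not actually prove this lemma; it is imported wholesale from Gr\"otschel--Lov\'asz--Schrijver (their Lemma~6.5.3 and Remark~6.5.4), and the only commentary the paper adds is that ``it is easy to come up with ad-hoc approaches to prove this theorem (using linear programming over $\mathbb Q$ as a black-box) [but] it is rather nontrivial to ensure that the bit complexity of the intermediate calculations is polynomial.'' Your write-up is exactly one of those ad-hoc approaches, and it is essentially sound, so there is nothing to compare against on the paper's side.

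Two small gaps worth patching. First, the LP $\min\{M_i x : Mx\le b\}$ may be unbounded below when $K$ is unbounded, so ``record an optimizer $x^{(i)}$'' is not always well-defined. The easy fix is to replace the optimization by a feasibility test for $\{Mx\le b,\; M_i x \le b_i - \delta\}$ with a polynomially-sized $\delta>0$ (or simply detect unboundedness and then produce any witness with $M_i x < b_i$). Second, your reverse-containment argument (``any linear equality valid on $K$ must be a consequence of the rows in $I$'') is asserted rather than proved. The clean argument uses the very point $y_0$ you construct: for any $y$ with $M'y=b'$, the segment $y_0 + t(y-y_0)$ satisfies the equalities in $I$ exactly and the inequalities outside $I$ strictly for all small $|t|$, hence lies in $K$; taking an affine combination recovers $y\in\Aff(K)$. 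Finally, you are right that bit complexity is the only genuinely delicate point---that is precisely what the paper flags---and you discharge it by appeal to a polynomial-time rational LP oracle, which is fine but should be stated as the load-bearing assumption.
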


Although it is easy to come up with ad-hoc approaches to prove this theorem (using linear programming over $\mathbb Q$ as a black-box) it is rather nontrivial to ensure that the bit complexity of the intermediate calculations is polynomial.

\subsubsection{More on efficiently computable subrings}

Since we started with an atomic description of what is efficiently computable, we need a lemma which shows that more complex arithmetic formulas are efficiently computable.

\begin{lem}\label{lem:dot-prod}
  Let $A$ be an efficiently computable ring. Let $a_1, \hdots, a_n, b_1, \hdots, b_n \in A$. Then, there exists $C' \ge 0$ such that
  \begin{enumerate}
  \item $\size_A(a_1 + \cdots + a_n) \le O(n^{C'}\sum_i \size_A(a_i))$.
  \item $\size_A(a_1 \times \cdots \times a_n) \le O(n^{C'}\sum_i \size_A(a_i))$.
  \item $\size_A(a_1 \times b_1 + \cdots + a_n\times  b_n) \le O(n^{C'}\sum_i [\size_A(a_i) + \size_A(b_i)])$.
  \end{enumerate}
  Thus, these expressions can be computed in polynomial time.
\end{lem}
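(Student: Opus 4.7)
The plan is to exploit the fact that condition (4) of efficient computability gives a constant-factor size bound per binary operation. If one computes $a_1+\cdots+a_n$ left-to-right, the cascade $\size_A(S_k) \le C(\size_A(S_{k-1})+\size_A(a_k))$ unrolls to a factor $C^{n}$, which is exponential. To avoid this, I would use a balanced divide-and-conquer: split the $n$ inputs into two halves, compute each half recursively, then combine with one ring operation. This turns the multiplicative $C$ into depth $O(\log n)$.

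For part (1), let $T(s_1,\ldots,s_{2^k})$ denote the size bound for summing $2^k$ terms of respective sizes $s_i$ via a balanced tree. Then $T(s) = s$ at the leaves, and
\[
T(s_1,\ldots,s_{2^k}) \;\le\; C\bigl(T(s_1,\ldots,s_{2^{k-1}}) + T(s_{2^{k-1}+1},\ldots,s_{2^k})\bigr).
\]
A straightforward induction on $k$ gives $T(s_1,\ldots,s_{2^k}) \le C^{k}\sum_i s_i$, so for arbitrary $n$ we may pad up to the next power of two (with zeros, costing only a factor of $2$) to obtain
\[
\size_A(a_1+\cdots+a_n) \;\le\; C^{\lceil\log_2 n\rceil}\!\sum_i \size_A(a_i) \;\le\; O\!\bigl(n^{\log_2 C}\,\textstyle\sum_i \size_A(a_i)\bigr),
\]
so $C' := \log_2 C$ works. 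Part (2) is identical with $+$ replaced by $\times$, using condition (4) for multiplication. For part (3), first compute the $n$ products $a_ib_i$; each one obeys $\size_A(a_ib_i) \le C(\size_A(a_i)+\size_A(b_i))$. Then apply the balanced-tree sum bound from (1) to these products; the extra constant factor $C$ is absorbed into the $O(\cdot)$.

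For the running time, note that the tree in each case has $O(n)$ internal nodes and each node performs exactly one ring operation on operands of bit-size at most the bound just derived, which is polynomial in $n$ and $\max_i \size_A(a_i)$ (plus $\size_A(b_i)$ for the dot product). Condition (3) of efficient computability guarantees each such operation runs in time polynomial in the bit-sizes of its inputs. Summing over the $O(n)$ nodes gives a polynomial overall running time.

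The only real subtlety is being disciplined about the order of operations: the lemma as stated gives a size bound independent of evaluation order, but an explicit algorithm must fix a balanced associativity to achieve the claimed polynomial bound — a left-associative implementation would still produce a correct value but with size possibly exponential in $n$. Once balanced summation is the agreed procedure, the induction is routine and the rest of the argument writes itself; no further obstacle is expected.
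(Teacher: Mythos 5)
Your proposal is correct and matches the paper's proof essentially verbatim: pad to a power of two, sum via a balanced binary tree so that the per-operation constant $C$ compounds only over depth $\lceil \log_2 n\rceil$, yielding $C^{\lceil\log_2 n\rceil} = O(n^{\log_2 C})$ and hence $C' = \log_2 C$, with parts (2) and (3) handled identically. Your closing remark about needing a balanced (rather than left-associative) evaluation order is also the same caveat the paper records in the remark following Lemma~\ref{lem:dense}.
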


\begin{proof}
  We give a proof of the first bound, the remaining cases are similar. By padding with $a_i = 0$, we may assume that $n = 2^k$ for some integer $k$ ($n$ will increase by at most a factor of two, so all bounds still hold, possibly with a worse constant factor).

  Recall that for all $a, b \in A$, $\size_A(a + b) \le C(\size_A(a) + \size_A(b))$. Thus, we inductively argue that
  \begin{align*}
    \size_A(a_1 + \cdots + a_{2^k}) &\le C\size_A(a_1 + \cdots + a_{2^{k-1}}) + C\size_A(a_{2^{k-1}+1} + \cdots + a_{2^k})\\
                                    &\le C^k\sum_{i=1}^{2^k}\size_A(a_i).
  \end{align*}
  Note that $C^k = (2^k)^{\log_2 C}$. Thus, setting $C' = \log_2 C$, we get the desired bound.
\end{proof}

Another important ingredient is showing that for any interval $(p, q)$ where $p$ and $q$ are rational, we can efficiently find $a \in A \cap (p, q)$. The following lemma shows that this is possible.

\begin{lem}\label{lem:dense}
  Let $A$ be an efficiently computable subring which is not $\mathbb Z$. Then, for all $p, q \in \quot(A)$ with $p < q$, one can efficiently compute $a \in A \cap (p, q)$ in $O(\size_\mathbb Q(p) + \size_{\mathbb Q}(q))$-time. 
\end{lem}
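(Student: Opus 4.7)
The plan is to exploit the fact that $A \supsetneq \mathbb Z$ forces $A$ to contain some element $\beta \in (0,1)$: starting from any fixed $\alpha \in A \setminus \mathbb Z$ (a witness bundled into the description of the ring), I set $\beta := \alpha - \lfloor \alpha \rfloor$, where $\lfloor \alpha \rfloor$ is computed once-and-for-all by binary search using the polynomial-time comparison operator of $A$. Thus $\beta \in A \cap (0,1)$ is available as a precomputed constant depending only on $A$, and every quantity $k \beta^n$ with $k \in \mathbb Z$ and $n \geq 0$ lies in $A$.

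Given the input $p < q$ in $\quot(A)$, the first step is to find an exponent $n \geq 1$ with $\beta^n < q - p$. I would compute the sequence $\beta, \beta^2, \beta^4, \ldots$ by repeated squaring, comparing each successive value against $q - p$ using the order relation of $\quot(A)$. The essential size estimate is that, by condition~(2) of the efficient-computability definition (applied to $\quot(A)$ via Proposition~\ref{prop:quot}), $q - p > 2^{-\poly(\size(p)+\size(q))}$, whereas $\beta$ is a fixed constant strictly less than $1$; hence the required $n$ is bounded by a polynomial in the input size, only $O(\log n)$ squarings occur, and by Lemma~\ref{lem:dot-prod} (applied to the product $\beta \times \cdots \times \beta$) the resulting $\beta^n$ has polynomial bit-complexity. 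The second step is to push an integer multiple of $\beta^n$ into the open interval: set
\[
k := \lfloor p/\beta^n \rfloor + 1,
\qquad
a := k \beta^n,
\]
so that $p < k\beta^n \leq p + \beta^n < q$ and $a \in A$. Here $p/\beta^n \in \quot(A)$ by Proposition~\ref{prop:quot}, and its floor $k$ is obtained by binary searching on integers in $[-N, N]$ for some $N$ of size polynomial in $\size(p)+\size(q)$, again using only polynomial-time comparisons.

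The main obstacle is not the algorithm itself — it is essentially forced by the axioms — but the bit-complexity bookkeeping: one must check that at every intermediate step (the repeated squaring, the division $p / \beta^n$ in $\quot(A)$, the binary search locating $k$, and the final product $k \beta^n$) the operands retain polynomial size. All of these verifications follow in a routine manner from the four axioms of an efficiently computable ring, Lemma~\ref{lem:dot-prod}, and Proposition~\ref{prop:quot}, yielding an algorithm that runs in time polynomial in $\size_{\quot(A)}(p) + \size_{\quot(A)}(q)$, which suffices for the lemma.
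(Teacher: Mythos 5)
Your proposal is correct, but it assembles the element of $(p,q)$ differently from the paper. The paper first reduces to $0\le p<q\le 1$, fixes a constant $\alpha_0\in A\cap(1/2,2/3)$, and greedily accumulates a sum of distinct powers $\alpha_0^i$, maintaining the invariant that the running total stays in $(q-\alpha_0^i,q)$ until it exceeds $p$; this needs the constraint $\alpha_0\ge 1-\alpha_0$ (hence the interval $(1/2,2/3)$) and a side remark about reassociating the additions so the representation size stays polynomial. You instead take any $\beta\in A\cap(0,1)$, find by repeated squaring a single power $\beta^n<q-p$, and output the integer multiple $\lfloor p/\beta^n\rfloor\cdot\beta^n+\beta^n$, which lands in $(p,q)$ by a one-line inequality. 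Your route is cleaner in that it imposes no condition on the constant $\beta$, needs no loop invariant, and avoids the accumulation/reassociation caveat; the price is that you must divide in $\quot(A)$ and locate an integer floor by binary search over a range of magnitude $2^{\poly}$, whereas the paper's procedure works entirely with additions and comparisons inside $A$ after the initial reduction. Both arguments hinge on the same two facts --- the existence of a fixed element of $A\cap(0,1)$ and the lower bound $q-p>2^{-\poly(\size(p)+\size(q))}$ from condition (2) --- and both deliver polynomial rather than linear time, matching what the paper's own proof actually establishes.
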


\begin{proof}
  If $0 \in (p, q)$ we can output $0$. If $p, q < 0$, we can find $a \in (-q, -p)$ and then output $-a$. Thus, we may assume without loss of generality $p, q > 0$.
  
  Since $p < 2^{\poly(\size_A(p))}$ (property 2 of an efficiently computable subring), we can compute $\lfloor p \rfloor$ in polynomial time via binary search (property 1 guarantees that integers can be efficiently computed). Thus, by subtracting out by this floor (and adding it back at the end), we may assume that $p \in [0, 1)$. If $q > 1$, then we can output $a = 1$. Thus, we may assume that $q \in (0, 1]$.

  Now, fix a constant $\alpha_0 \in A\cap (1/2, 2/3).$ (This must exist since every subring of $\mathbb R$ except for $\mathbb Z$ is dense in $\mathbb R$.) Now run the following binary-search like procedure:
  \begin{enumerate}
  \item[--] Input: $p, q \in \quot(A)$, $0 \le p < q \le 1$.
  \item[--] Output: $a \in A \cap (p, q)$.
  \item Initialize $a \ot 0$.
  \item For $i = 1, \hdots$
    \begin{enumerate}
    \item Compute $\alpha^i_0$.
    \item If $a + \alpha^i_0 < q$, set $a \ot a + \alpha^i_0$.
    \item If $p < a$, \textbf{Output} $a$.
    \end{enumerate}
  \end{enumerate}
  
  We claim that this for loop stops by $\log_{\alpha_0}(q-p) + 1$ steps. It suffices to prove by induction that after step $i$, $a \in (q - \alpha^i_0, q)$. After step 0 (i.e., before step 1), clearly $a = 0 \in (q - 1, q)$. If after step $i$, we know by the induction hypothesis that $a \in (q - \alpha^i_0, q)$, we then have two cases to consider for what happens in step $i+1$. First, if $a + \alpha^{i+1}_0 < q$, then \[a + \alpha^{i+1}_0 \in (q - \alpha^i_0 + \alpha^{i+1}_0, q) = (q - (1-\alpha_0)\alpha^i_0, q) \subset (q - \alpha^{i+1}_0, q)\]
  since $\alpha_0 \ge 1 - \alpha_0$. Otherwise, if $a + \alpha^{i+1}_0 > q$, this implies that $a > q - \alpha^{i+1}_0$, so $a \in (q - \alpha^{i+1}_0, q)$

  Thus, $a > p$ in at most $\log_{\alpha_0}(q - p) + 1$ steps. By property 2 of an efficiently computable subring, $\log_{\alpha_0}(q-p)$ is bounded by a polynomial in $O(\size_{\mathbb Q}(p) + \size_{\mathbb Q}(q))$, and thus the algorithm runs in polynomial time.
\end{proof}

\begin{rem}
  Note that we are iteratively adding to a variable, so it may be the case that the representation size of $a$ grows exponentially quickly. Lemma~\ref{lem:dot-prod} shows that the additions can be reordered so that the representation complexity of $a$ stays polynomially bounded. Thus, if the $+$ operation is not associative (in the sense of representations), one needs to add an extra step to recompute $a$ from scratch (keeping note of which $i$'s were used) whenever $\size(a)$ gets too large.
\end{rem}

\subsubsection{The algorithm and analysis}

Now, we state the general algorithm for finding solutions to LPs in efficiently computable subrings $\mathbb Z \subsetneq A \subset \mathbb R$.

\begin{framed}
   \begin{enumerate}
   \item[--] Input: $n, m, M \in \mathbb Q^{m \times n}, b \in \mathbb Q^m$.
   \item[--] Output: $x \in A^n$ such that $x \in K := \{y \in \mathbb R^n : My \le b\}$ or no solution exists.
   \item Compute $(M', b')$ such that $\Aff(K)$ is the solution set of $M'x = b'$.
   \item Compute $y_0 \in \mathbb Q^n$ in the relative interior of $K$. If no such $y_0$ exists, \textbf{Reject}.
   \item Compute whether $M'x = b'$ has a solution $x_0 \in A^n$. If no solution exists, \textbf{Reject}.
   \item Find an orthogonal basis $\{q_i \in \mathbb Z^n\}$ of the vector space $\Aff_{\mathbb R}(K) - y_0$. 
   \item Compute $\alpha_i \in \quot(A)$ such that $y_0 - x_0 =\sum_i \alpha_i q_i$. Note that $\alpha_i = \frac{(y_0 - x_0)\cdot q_i}{\|q_i\|^2}$ as $q_i$'s orthogonal.
   \item Compute $\epsilon := 2^{-\poly(n + m + \size_{\mathbb Q}(M) + \size_{\mathbb Q}(b))}$ and for all $i$, find $\beta_i \in (\alpha_i - \epsilon, \alpha_i + \epsilon) \cap A$.
   \item Set $z_0 = x_0 + \sum_i \beta_iq_i$. \textbf{Accept} and output $z_0$.
   \end{enumerate}
  
   \centering{\textbf{Algorithm \thesection.1.} Algorithm which demonstrates that $A$ is LP-solvable.}
\end{framed}

\begin{proof}[Proof of Correctness.]
  First, we establish that every step can be done efficiently. The efficiency of steps 1 and 2 follow from Lemma~\ref{lem:aff+int}. The efficiency of step 3 follows from the fact that $A$ is LE-solvable. Step 4 is efficient by the Gram-Schmidt algorithm (e.g., \cite{grotschel2012geometric}) and then multiplying each obtained basis vector by the least common denominator. Step 5 is efficient since $\quot(A)$ is efficiently computable (Proposition~\ref{prop:quot}) and dot products can be computed in polynomial time (Lemma~\ref{lem:dot-prod}). Step 6 can be computed efficiently by Lemma~\ref{lem:dense}. Step 7 can be computed efficiently by Lemma~\ref{lem:dot-prod}.

  Now that we know the algorithm is efficient, we show that it outputs the correct answer. The accuracy of Steps 1 and 2 is guaranteed by Lemma~\ref{lem:aff+int}. If there is no rational solution to the LP, then there cannot be a real solution, so the rejection is Step 2 is valid. In Step 3, any solution to the linear program in $A^n$ must also belong to the (real) affine hull of $K$, so the rejection in Step 3 is valid. For steps 4-6, note that since $x_0 \in \Aff_{\mathbb R}(K)$ and $\{q_i\}$ are integral vectors in the vector space $\Aff_{\mathbb R}(K) - y_0 = \Aff_{\mathbb R}(K) - x_0$, and combination of the form $\sum_i \beta_iq_i + x_0$ must be in $\Aff_{\mathbb R}(K) \cap A^n$. Since $y_0$ is in the relative interior and the boundary of $K$ is described by linear equations of bounded complexity, there must exist $\delta > 0$ of polynomial complexity such that $B_{L^2}(y_0, \delta) \cap \Aff(K) \subset K$ (see, e.g., the methods in Chapter 6 of \cite{grotschel2012geometric}). Then, if we set $\epsilon = \frac{\delta}{n\max_i\|q_i\|^2}$, any $\beta_i \in (\alpha_i - \epsilon, \alpha_i + \epsilon)$ will satisfy $z_0 := x_0 + \sum_i \beta_iq_i \in B_{L^2}(y_0, \delta) \cap \Aff_{\mathbb R}(K) \subset K$. Thus, since we can select these $\beta_i$ to be in our ring $A$, we have that the $z_0 \in A^n \cap K$, as desired.
\end{proof}

\subsection{Application to \texorpdfstring{$\mathbb Z[\sqrt{q}]$}{Z[sqrt(q)]}}\label{subsec:sqrt}

In this section, we show that $\mathbb Z[\sqrt{q}]$ is LP-solvable, whenever $q$ is a non-square.

We represent as $a + b\sqrt{q}$ as the pair of integers $(a, b)$ written in binary. It is not hard to then show that this representation makes $\mathbb Z[\sqrt{q}]$ an efficiently computable subring. The only tricky operation is $<$ which can be done efficiently by noting that
\[
  a + b\sqrt{q} > c + d\sqrt{q} \iff (a - c) > (d - b)\sqrt{q},
\]
which can then be checked by squaring and looking at the signs of $a - c$ and $d - b$.

The only bit left to prove is that $\mathbb Z[\sqrt{q}]$ is LE-solvable, which follows from the following lemma.

\begin{lem}
  Let $q$ be a non-square positive integer. Then, $\mathbb Z[\sqrt{q}]$ is LE-solvable.
\end{lem}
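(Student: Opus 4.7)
The plan is to reduce linear equation solving over $\mathbb Z[\sqrt{q}]$ to linear equation solving over $\mathbb Z$ via a restriction-of-scalars argument, since $\mathbb Z[\sqrt{q}]$ is a free $\mathbb Z$-module of rank $2$ with basis $\{1, \sqrt{q}\}$. Linear equations over $\mathbb Z$ are LE-solvable in polynomial time via the Hermite Normal Form (as already cited in the excerpt).

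Concretely, I would proceed as follows. Given a system $Ax = b$ with $A \in \mathbb Z[\sqrt{q}]^{m \times n}$ and $b \in \mathbb Z[\sqrt{q}]^m$, decompose each entry as $A_{ij} = A_{ij}^{(0)} + A_{ij}^{(1)}\sqrt{q}$ and $b_i = b_i^{(0)} + b_i^{(1)}\sqrt{q}$, where all superscripted quantities lie in $\mathbb Z$. Similarly, write each unknown $x_j = x_j^{(0)} + x_j^{(1)}\sqrt{q}$. Expanding a single equation
\[
\sum_j (A_{ij}^{(0)} + A_{ij}^{(1)}\sqrt{q})(x_j^{(0)} + x_j^{(1)}\sqrt{q}) = b_i^{(0)} + b_i^{(1)}\sqrt{q}
\]
and using the fact that $q$ is non-square (so $\{1,\sqrt{q}\}$ is $\mathbb Q$-linearly independent, and in particular $\mathbb Z$-linearly independent), we can equate the $1$-component and the $\sqrt{q}$-component to get two equations over $\mathbb Z$:
\[
\sum_j \bigl(A_{ij}^{(0)} x_j^{(0)} + q\,A_{ij}^{(1)} x_j^{(1)}\bigr) = b_i^{(0)}, \qquad \sum_j \bigl(A_{ij}^{(0)} x_j^{(1)} + A_{ij}^{(1)} x_j^{(0)}\bigr) = b_i^{(1)}.
\]
This yields a system $\tilde A \tilde x = \tilde b$ with $\tilde A \in \mathbb Z^{2m \times 2n}$ and $\tilde b \in \mathbb Z^{2m}$, whose bit-size is linear in the input size.

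Now I would invoke the classical result that systems of linear equations over $\mathbb Z$ can be solved in (weakly) polynomial time using Hermite Normal Form algorithms, as cited in the excerpt. From an integer solution $\tilde x \in \mathbb Z^{2n}$, reconstruct $x_j = x_j^{(0)} + x_j^{(1)}\sqrt{q} \in \mathbb Z[\sqrt{q}]$. Conversely, any solution over $\mathbb Z[\sqrt{q}]$ gives, via its coordinate decomposition in the basis $\{1,\sqrt{q}\}$, a solution to the expanded $\mathbb Z$-system, so the algorithm correctly decides feasibility.

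There is essentially no obstacle: the only conceptual point is the linear independence of $\{1, \sqrt{q}\}$ over $\mathbb Q$, which is immediate from $q$ being non-square, and the only technical point is that the bit-complexity of each entry $\tilde A_{ij}$, $\tilde b_i$ is bounded by a constant times the original bit-complexity (the worst case is the factor $q$ appearing in the coefficient of $x_j^{(1)}$ in the $1$-component equation, which contributes only $O(\log q)$ extra bits). So the reduction is polynomial, and LE-solvability of $\mathbb Z[\sqrt{q}]$ follows. Combined with Theorem~\ref{thm:ecds} and the fact, noted just above, that $\mathbb Z[\sqrt{q}]$ is an efficiently computable subring, this immediately yields Theorem~\ref{thm:Z2}.
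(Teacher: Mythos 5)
Your proposal is correct and is essentially identical to the paper's proof: both decompose $M = M_1 + M_2\sqrt{q}$, $b = b_1 + b_2\sqrt{q}$, $x = y + z\sqrt{q}$, use the $\mathbb Q$-linear independence of $\{1,\sqrt{q}\}$ (from $q$ being non-square) to split into the integer system $M_1y + qM_2z = b_1$, $M_2y + M_1z = b_2$, and invoke LE-solvability of $\mathbb Z$ via Hermite Normal Form. Your additional remarks on bit-complexity are a minor elaboration the paper leaves implicit.
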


This result is almost certainly in the literature as $\mathbb Z[\sqrt{q}]$ is a finitely generated $\mathbb Z$-module (e.g., \cite{DBLP:books/lib/Cohen93}), but we provide a more elementary argument for completeness.

\begin{proof}
  Consider a system of equations $Mx = b$. with $M \in (\mathbb Z[\sqrt{q}])^{m \times n}$ and $b \in (\mathbb Z[\sqrt{q}])^m$. Then, we can express $M = M_1 + M_2\sqrt{q}$ where $M_1, M_2 \in \mathbb Z^{m \times n}$ and $b = b_1 + b_2\sqrt{q}$ where $b_1, b_2 \in \mathbb Z^m$.

  Thus, $Mx = b$ has a solution if and only if there exist $y, z \in \mathbb Z^n$ such that
  \[
    (M_1 + M_2\sqrt{q})(y + z\sqrt{q}) = b_1 + b_2\sqrt{q}.
  \]
  Since $\sqrt{q}$ and $1$ are linearly independent over the rationals, the above system has a solution if and only if the below system of equations with coefficients in $\mathbb Z$ has a solution
  \begin{align*}
    M_1y + qM_2z &= b_1\\
    M_2y + M_1z &= b_2.
  \end{align*}
   As $\mathbb Z$ is LE-solvable (\cite{DBLP:journals/siamcomp/KannanB79}), we must have that $\mathbb Z[\sqrt{q}]$ is LE-solvable.
\end{proof}

Thus, $\mathbb Z[\sqrt{q}]$ is LP-solvable for all nonsquare $q$, establishing Theorem~\ref{thm:Z2}. This is sufficient to obtain the results in the subsequent section.

\section{Threshold-Periodic Polymorphisms}\label{sec:boolean}

In this section and the subsequent one, we assume that our promise domain $(D, E, \phi)$ satisfies $D = \{0, 1\}$ and $E$ is any finite domain with any inclusion map $\phi : D \to E$. Restricting $D$ to be Boolean allows for a simplified presentation, the results of Sections~\ref{sec:boolean} and \ref{sec:Boolean-block} can be extended to larger domains, as described in Section~\ref{sec:larger}.

\subsection{Threshold Polymorphisms}\label{subsec:threshold}

Many polymorphisms which are considered in classical CSP theory, such as the $\OR$, $\AND$, and $\MAJ$ functions, can be thought of as \emph{threshold functions}. That is, the value of each of these polymorphisms only depends on whether the Hamming weight of the input is above a certain threshold. In this subsection, we consider a generalization of such functions to multiple thresholds.

\begin{df}
  A \emph{threshold sequence} is a finite sequence of rational\footnote{These could also be real numbers under suitable computational assumptions, but for simplicity of exposition we assume all thresholds are rational} numbers $\tau_0 = 0 < \tau_1 < \cdots < \tau_k = 1$.
\end{df}

For $x \in \{0, 1\}^L$, we let $\Ham(x)$ be the Hamming weight of $x$, i.e., the number of bits of $x$ set to $1$.

\begin{df}
  Let $T = \{\tau_0, \tau_1, \hdots, \tau_k\}$ be a threshold sequence and $\eta : \{0, 1, \hdots, k+1\} \to E$ be any map. Let $L$ be a positive integer such that $L\tau_i$ is not an integer for any $i \in \{1, \hdots, k - 1\}$. Then, define $\THR_{T,\eta,L} : \{0, 1\}^L \to \{0, 1\}$ to be the following polymorphism.
  \[\THR_{T,\eta,L}(x) = \begin{cases}
      \eta(0) & \Ham(x) = 0\\
      \eta(i) & L\tau_{i-1} < \Ham(x) < L\tau_i, 1 \le i \le k\\
      \eta(k+1) & \Ham(x) = L.
  \end{cases}\]
\end{df}

The function $\eta$ is closely connected to the rounding function $h$ from the definition of a homomorphic sandwich (Section~\ref{subsec:sandwich}). In essence, $\eta$ is finite description or discretization of $h$.

To get intuition, here are examples of common polymorphisms and their corresponding parameters as threshold functions. 
\begin{center}
\begin{tabular}{c|ccc}
  & $\MAJ_L$ & $\OR_L$ & $\AND_L$\\\hline
  $T$ & $\{0, 1/2, 1\}$ & $\{0, 1\}$ & $\{0, 1\}$\\
  $\eta$ & $(0, 0, 1, 1)$ & $(0, 1, 1)$ & $(0, 0, 1)$
\end{tabular}
\end{center}

This now leads to our first main result.

\begin{thm}\label{thm:boolean-threshold}
  Let $T = \{\tau_0, \hdots, \tau_k\}$ be a threshold sequence with a corresponding map $\eta : \{0, \hdots, k+1\} \to E$. Let $\Gamma = (\Gamma_P = \{P_i \in D^{\ar_i} \in I\}, \Gamma_Q=\{Q_i \in E^{\ar_i} : i \in I\})$ be a promise template such that $\THR_{T,\eta,L} \in \Pol(\Gamma)$ for infinitely many $L$. Then, $\PCSP(\Gamma) \in \mathsf{P}$.
\end{thm}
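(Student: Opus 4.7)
The plan is to invoke the homomorphic sandwich framework of Theorem~\ref{thm:basiclp} with the LP-solvable ring $A = \mathbb Z[\sqrt{2}]$, whose LP-solvability is provided by Theorem~\ref{thm:Z2}. We take $g : \{0,1\} \hookrightarrow \mathbb Z$ to be the inclusion and define the rounding function $h : \mathbb Z[\sqrt{2}] \cap [0,1] \to E$ by $h(0) = \eta(0)$, $h(1) = \eta(k+1)$, and $h(v) = \eta(s)$ whenever $\tau_{s-1} < v < \tau_s$ for some $s \in \{1,\dots,k\}$. The observation making $h$ well-defined (and polynomial-time computable) is that $\mathbb Z[\sqrt{2}] \cap \mathbb Q = \mathbb Z$: every interior threshold $\tau_s$ lies in $\mathbb Q \cap (0,1)$ hence is not an integer, so no $v \in \mathbb Z[\sqrt{2}] \cap [0,1]$ can equal any interior $\tau_s$. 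This is exactly the rounding-boundary issue that Section~\ref{sec:LP} was built to sidestep.

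The bulk of the work is verifying the sandwich condition: for each pair $(P_i, Q_i)$ and each $v = (v_1, \dots, v_{\ar_i}) \in \Conv_{\mathbb Z[\sqrt{2}]}(P_i)$, we need $(h(v_1), \dots, h(v_{\ar_i})) \in Q_i$. We realize $h$ as a ``limit'' of the polymorphisms $\THR_{T,\eta,L}$ as follows. Decompose $v = \sum_j \alpha_j p^{(j)}$ with $p^{(j)} \in P_i$, $\alpha_j > 0$, and $\sum_j \alpha_j = 1$. Choose a sufficiently large $L$ from the infinite set of valid arities with also $L\tau_s \notin \mathbb Z$ for every $s$, and choose positive integers $m_j$ summing to $L$ with $|m_j/L - \alpha_j|$ small. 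Stacking $m_j$ copies of the row $p^{(j)}$ produces an $L \times \ar_i$ matrix whose rows all lie in $P_i$, so applying $\THR_{T,\eta,L}$ columnwise yields an element of $Q_i$.

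It remains to check that each column output equals $h(v_k)$. If $v_k = 0$, then $\sum_j \alpha_j p^{(j)}_k = 0$ with $\alpha_j > 0$ forces every $p^{(j)}_k = 0$, so the column is all zeros and the polymorphism returns $\eta(0) = h(0)$; the case $v_k = 1$ is symmetric. If $v_k \in (\tau_{s-1}, \tau_s)$, then $v_k \notin \mathbb Q$, so $v_k$ is at strictly positive distance from every rational $\tau_{s'}$, and for $L$ large enough the column weight $\sum_j m_j p^{(j)}_k$ lies strictly inside $(L\tau_{s-1}, L\tau_s)$, giving $\eta(s) = h(v_k)$. Since the instance has only finitely many constraints and coordinates while the valid arities are infinite, a single $L$ works for all of them simultaneously. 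Theorem~\ref{thm:basiclp} then delivers the polynomial-time algorithm.

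The main technical obstacle I anticipate is the rounding step producing the integer weights $m_j$: the $\alpha_j$'s need not lie in $\mathbb Q$ or even $\mathbb Z[\sqrt{2}]$ (they are real coefficients of a convex decomposition), yet we need $m_j$ positive integers summing to $L$ whose ratios $m_j/L$ approximate $\alpha_j$ to within the margin $\min_{k,s}|v_k - \tau_s|$ taken over the whole instance, while still keeping $m_j = 0$ for those $j$ we choose to omit. This is a routine Diophantine approximation argument once the margin is identified as a fixed positive quantity of the (finite) instance, and does not pose a real obstruction to the proof.
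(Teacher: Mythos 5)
Your proposal is correct and follows essentially the same route as the paper's proof: the same sandwich $(g = \mathrm{id}, h)$ into $\LP_{\mathbb Z[\sqrt{2}]}$, the same convex-decomposition-plus-integer-weights argument realizing $h$ as a limit of $\THR_{T,\eta,L}$, and the same coordinate-wise case analysis (the paper resolves your Diophantine worry exactly as you suggest, by taking $w_j = \lfloor \alpha_j L\rfloor$ and incrementing until the weights sum to $L$, giving error $O(m/L)$ per coordinate). Your explicit observation that $\mathbb Z[\sqrt{2}] \cap \mathbb Q = \mathbb Z$ is the right justification for why no LP value can hit an interior threshold, a point the paper leaves implicit.
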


The proof is essentially a direct generalization of the arguments in Section~3.2 of \cite{DBLP:conf/soda/BrakensiekG18}.

\begin{proof}
   Let $A = \mathbb Z[\sqrt{2}]$, which as previously stated is LP-solvable by a theorem of Adler and Beling~\cite{DBLP:journals/algorithmica/AdlerB94}. We claim that  $\Gamma$ sandwiches $\LP_{A}$ via the following maps\footnote{For type-theoretic reasons, we define $h$ on the full domain of $A$ rather than  $[0, 1] \cap A$.} $g : D \to A$ and $h : A \to E$:
  \begin{align*}
    g(d) &= d\\
    h(v) &= \begin{cases}
      \eta(0) & v \le 0\\
      \eta(i) & \tau_{i-1} < v < \tau_i, 1 \le i \le k\\
      \eta(k+1) & v \ge 1
    \end{cases}
  \end{align*}

  Define $\LP_{\Gamma} := \{R_i := \Conv_A(g(P_i)) : P_i \in \Gamma_P\}$. Since $g(P_i) \subset \Conv_A(g(P_i))$, we have that $g$ is a homomorphism from $\Gamma_P$ to $\LP_{\Gamma}$. We claim that $h$ is a homomorphism from $\LP_{\Gamma}$ to $\Gamma_Q$. In other words, we seek to show that $h(\Conv_A(g(P_i))) \subseteq Q_i$. For any $V \in \Conv_A(g(P_i))$, since $g(P_i)$ is finite, there exist elements $X^1, \hdots,X^m \in P_i$ and weights\footnote{Note that the weights might not be in $A$.} $\alpha_1, \hdots, \alpha_m \in (0, 1]$ summing to $1$ such that \[V = \alpha_1g(X^1) + \cdots + \alpha_mg(X^m).\]

  Fix $L$ to be sufficiently large (to be specified later). We can pick nonnegative integers $w_1, \hdots, w_m$ such that $w_1 + \cdots + w_m = L$ and $|w_i - \alpha_i L| \le 1$ (start with $w_i = \lfloor \alpha_i L\rfloor$ for all $i$ and then increase weights one-by-one until the sum is $L$). Now compute
  \[
    Y := \THR_{T, \eta, L}(\underbrace{X^1, \hdots, X^1}_{w_1\text{ copies}}, \hdots, \underbrace{X^m, \hdots, X^m}_{w_m\text{ copies}}) \in Q_i.
  \]
  Define for each coordinate $j \in \{1, \hdots, \ar_i\}$
  \[
    s_j := \frac{1}{L}\Ham(\underbrace{X^1_j, \hdots, X^1_j}_{w_1\text{ copies}}, \hdots, \underbrace{X^m_j, \hdots, X^m_j}_{w_m\text{ copies}}).
  \]
  Then, by design, $Y_j = h(s_j)$. We also know that
  \[
    \frac{1}{L}|s_j - V_j|= \sum_{\substack{a = 1\\X^a_j = 1}}^m \left|\frac{w_a}{L} - \alpha_a\right| \le \frac{m}{L}
  \]
  Since $V \in A^{\ar_i}$, we have three cases
  \begin{enumerate}
  \item If $V_j \le 0$, then $X^a_j = 0$ for all $j$. Then, $V_j = s_j = 0$ so $Y_j = h(V_j)$.
  \item If $V_j \ge 1$, then $X^a_j = 1$ for all $j$. Then, $V_j = s_j = 1$ so $Y_j = h(V_j)$.
  \item If $V_j \in (0, 1)$, then $\tau_{i-1} < V_j < \tau_i$ for some $i \in \{1, \hdots, k\}$. Thus, as $L \to \infty$, $s_j$ will get sufficiently close to $V_j$ that $\tau_{i-1} < s_j < \tau_i$. Thus,  $Y_j = h(s_j) = h(V_j)$.
  \end{enumerate}
  Thus, since $Y \in Q_i$, we have that $h(V) \in Q_i$, establishing the homomorphic sandwich is valid.

  By Theorem~\ref{thm:basiclp}, we have that $\PCSP(\Gamma) \in \mathsf{P}^h$. Note that $h$ is polynomial-time computable, since computing $h$ involves checking a constant number of inequalities in $A$. Thus, $\PCSP(\Gamma) \in \mathsf{P}$, as desired.
\end{proof}

\subsection{Periodic Polymorphisms}\label{subsec:periodic}

Instead of having our threshold functions be piece-wise constant, we can consider periodic polymorphisms.

\begin{df}
  Let $M$ be a positive integer, and let $\eta : \mathbb Z/M\mathbb Z \to E$ be any map. Let $L$ be a positive integer. Define $\PER_{M, \eta, L}$ to be the following function
  \[
    \PER_{M,\eta,L}(x) = \eta(k)\text{ if }\Ham(x) \equiv k \mod M.
  \]
\end{df}

As stated earlier, Example 4 from Section~\ref{subsec:PCSP} is a periodic polymorphism.

\begin{thm}\label{thm:boolean-periodic}
  Let $M$ be a positive integer, and let $\eta : \mathbb Z/M\mathbb Z \to E$ be any function. Let $\Gamma = (\Gamma_P = \{P_i \subseteq D^{\ar_i}\}, \Gamma_Q = \{Q_i \subseteq E^{\ar_i}\})$ be a promise template on the Boolean domain such that $\PER_{M,\eta,L} \in \Pol(\Gamma)$ for infinitely many $L$. Then, $\PCSP(\Gamma) \in \mathsf{P}$.
\end{thm}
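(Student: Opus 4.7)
The plan is to mirror the proof of Theorem~\ref{thm:boolean-threshold}, replacing the Basic LP relaxation with the affine relaxation (Theorem~\ref{thm:affine}). Whereas threshold polymorphisms lift to convex combinations, the $\bmod\, M$ periodicity of $\PER_{M,\eta,L}$ naturally matches the ring structure of $R := \mathbb{Z}/M\mathbb{Z}$, which is LE-solvable as a finite ring. Concretely, I would show that $\Gamma$ sandwiches $\LE_R$ via an appropriate pair $(g,h)$ and then invoke Theorem~\ref{thm:affine}.

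First I would pin down a single residue class. By pigeonhole on the infinite set of valid arities, some $c \in \{0, 1, \ldots, M-1\}$ has $L \equiv c \pmod{M}$ for infinitely many valid $L$. A minor argument upgrades this to: every $L \ge 1$ with $L \equiv c \pmod M$ is valid. Given a valid $L' \ge L$ in the same residue class, the surjection $\pi\colon [L']\to[L]$ with $|\pi^{-1}(1)| = 1 + (L'-L)$ and $|\pi^{-1}(i)| = 1$ for $i > 1$ makes every preimage size $\equiv 1 \pmod M$, so $\sum_i |\pi^{-1}(i)|\, x_i \equiv \sum_i x_i \pmod{M}$, which means the minor $(\PER_{M,\eta,L'})^{\pi}$ coincides with $\PER_{M,\eta,L}$, and hence $\PER_{M,\eta,L}$ is a polymorphism.

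With $c$ fixed, I would set up the sandwich via $g(d) = d \in R$ and $h(r) = \eta(c\, r \bmod M)$. That $g$ is a homomorphism into $\LE_R = \{\Aff_R(g(P_i))\}$ is immediate. For the reverse direction, given $V \in \Aff_R(g(P_i))$, write $V = \sum_a r_a X^a$ with $r_a \in R$ and $\sum_a r_a = 1$. Lift each $c r_a \bmod M$ to a non-negative integer representative, then add multiples of $M$ to obtain $w_a \ge 0$ with $w_a \equiv c r_a \pmod M$ summing to a sufficiently large valid $L \equiv c \pmod M$; this is consistent since $\sum_a w_a \equiv c\sum_a r_a \equiv c \pmod M$. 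Applying $\PER_{M,\eta,L}$ to the list in which each $X^a$ appears $w_a$ times then yields an element of $Q_i$ whose $j$-th coordinate equals $\eta\bigl(\sum_a w_a X^a_j \bmod M\bigr) = \eta(c V_j \bmod M) = h(V_j)$, confirming $h(V) \in Q_i$.

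Theorem~\ref{thm:affine} then gives $\PCSP(\Gamma) \in \mathsf{P}^h$, and $h$ is clearly polynomial-time computable (modular reduction plus constant-size $\eta$ lookup), so $\PCSP(\Gamma) \in \mathsf P$. The part I expect to require the most care is the residue-class bookkeeping when lifting $r_a \in R$ to non-negative integer multiplicities $w_a$ that sum to a valid $L$, and separately verifying the degenerate case $c = 0$ (there $h \equiv \eta(0)$ and the sandwich is trivial, but still correct, since applying $\PER_{M,\eta,M}$ to $M$ copies of any $X \in P_i$ shows that the constant tuple $(\eta(0),\dots,\eta(0)) \in Q_i$).
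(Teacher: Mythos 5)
Your proposal is correct and follows essentially the same route as the paper: pigeonhole on the residue of $L$ modulo $M$, sandwich $\LE_{\mathbb Z/M\mathbb Z}$, lift the affine combination to nonnegative integer multiplicities in the right cosets, and apply $\PER_{M,\eta,L}$. The only (harmless) differences are cosmetic: the paper absorbs the residue into $g$ (setting $g(1)=r$ and $h=\eta$) where you place it in $h$, and your minor-based upgrade to ``every $L\equiv c$ is valid'' and the separate $c=0$ check are correct but not needed, since choosing a sufficiently large valid $L$ in the residue class already suffices.
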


\begin{proof}
  For these infinitely many $L$, consider the remainders when they are divided by $M$. Since there are only finitely many remainders, there exists $r \in \{0, \hdots, M-1\}$ such that $L \equiv r \mod M$ infinitely often. 
  
  Consider the ring $R = \mathbb Z/M\mathbb Z$. We seek to show that $\Gamma$ sandwiches $\LE_R$ via the maps $g : \{0, 1\} \to R$ and $h : R \to E$ where
  \begin{align*}
    g(x) &= \begin{cases}
      0 & x = 0\\
      r & x = 1
    \end{cases}\\
    h &= \eta.
  \end{align*}

  Note that $\eta$ is a ``discretization'' of $h$, but since $R$ is a finite domain, $\eta$ can be used for $h$.

  Consider $\LE_{\Gamma} = \{R_i := \Aff(g(P_i)) : P_i \in \Gamma_P\}$. Since $g(P_i) \subseteq \Aff(g(P_i))$, we have that $g$ is a homomorphism from $\Gamma_P$ to $\LE_{\Gamma}$.  We claim that $h$ is a homomorphism from $\LE_{\Gamma}$ to $\Gamma_Q$. In other words, for all $(P_i, Q_i) \in \Gamma$, we seek to show that $h(\Aff(g(P_i))) \subseteq Q_i$. For any $V \in \Aff(g(P_i))$, we have that there exist $X^1, \hdots, X^k \in P_i$ as well as ring elements $r_1, \hdots, r_k \in R$ such that $r_1 + \cdots + r_k = 1$ and
  \[
    V = r_1g(X^1) + \hdots + r_kg(X^k).
  \]
  For some sufficiently large $L \equiv r \mod M$ for which $\PER_{M, \eta, L} \in \Pol(\Gamma)$, pick nonnegative integers $w_1, \hdots, w_k$ such that $w_i \equiv r_ir \mod M$ and $w_1 + \cdots + w_k = L$. By starting with the $w_i$'s as small as possible and then increment by $M$, this is possible as long as $L \ge Mk$. Now, since $\PER_{M, \eta, L} \in \Pol(\Gamma)$, we have that
  \[
    Y := \PER_{M, \eta, L}(\underbrace{X^1, \hdots, X^1}_{w_1\text{ copies}}, \hdots, \underbrace{X^k, \hdots, X^k}_{w_k\text{ copies}}) \in Q_i.
  \]
  For each coordinate $i \in \{1, \hdots, \ar_i\}$, we have that by definition of $\PER$,
  \[Y_i = \eta\left(\sum_{j=1}^k w_jX^j_i\!\!\!\!\!\mod M\right) = \eta\left(\sum_{j=1}^k r_j(rX^j_i)\!\!\!\!\!\mod M\right) = h\left(\sum_{j=1}^k r_jg(X^j)\right) = h(V_i).\]
  Since $h(V) = Y \in Q_i$, we know that $h$ is a homomorphism from $\LE_{\Gamma}$ to $\Gamma_Q$, as desired.
  
  Since $R$ is a finite commutative ring, we have that $R$ is LE-solvable. Thus, by Theorem~\ref{thm:affine}, we have that $\PCSP(\Gamma) \in \mathsf{P}^h$. Since $h = \eta$ is a constant-sized function, $\PCSP(\Gamma) \in \mathsf{P}$, as desired.
\end{proof}

\subsection{Threshold-periodic Polymorphisms}\label{subsec:threshold-periodic}

It turns out that threshold polymorphisms and periodic polymorphisms can be combined in nontrivial ways

\begin{df}
  Let $T = \{\tau_0 = 0, \tau_1, \hdots, \tau_k = 1\}$ be a threshold sequence, $M = (M_0, \hdots, M_k)$ be a sequence of positive integers, and $H = (\eta_1, \hdots, \eta_{k})$ be a sequence of maps $\eta_i : \mathbb Z/M_i\mathbb Z \to E$. Let $L$ be a positive integer such that $L\tau_i$ is not an integer for any $i \in \{1, \hdots, k - 1\}$. Then, define $\THRPER_{T, M, H, L} : \{0, 1\}^L \to E$ to be the following polymorphism.
  \[\THRPER_{T, M, H, L}(x) = \begin{cases}
      \eta_1(0) & \Ham(x) = 0\\
      \eta_i(\Ham(x)\!\!\!\!\!\!\mod M_i) & L\tau_{i-1} < \Ham(x) < L\tau_i, 1 \le i \le k\\
      \eta_k(L) & \Ham(x) = L.
  \end{cases}\]
\end{df}

For technical reasons, we have to have that values at Hamming weights $0$ and $L$ be consistent with the periodic patterns in the intervals $(0, \tau_1)$ and $(\tau_{k-1}, 1)$, respectively.

\begin{thm}\label{thm:boolean-threshold-periodic}
  Let $T, M, H$ be defined as above.  Let $\Gamma$ be a promise template on the Boolean domain such that $\THRPER_{T, M, H, L} \in \Pol(\Gamma)$ for infinitely many $L$. Then, $\PCSP(\Gamma) \in \mathsf{P}$. 
\end{thm}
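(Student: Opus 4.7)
The plan is to establish a homomorphic sandwich of $\Gamma$ into the product template $\LP_{\mathbb Z[\sqrt 2]} \times \LE_{\mathbb Z/M\mathbb Z}$, where $M := \lcm(M_1, \ldots, M_k)$, and then invoke Theorem~\ref{thm:basiclp+affine}. Intuitively, the two factors decouple the threshold and the periodic behavior of $\THRPER$: the LP coordinate will tell us which interval $(\tau_{i-1}, \tau_i)$ the (relaxed) Hamming fraction of a variable lies in, and the affine coordinate will supply the residue of the Hamming weight modulo $M_i$ that $\eta_i$ needs to read.

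To build the sandwich, first apply pigeonhole to the infinite family of valid $L$'s modulo $M$ to extract a common residue $r \in \mathbb Z/M\mathbb Z$ such that $\THRPER_{T,M,H,L} \in \Pol(\Gamma)$ for infinitely many $L$ with $L \equiv r \pmod{M}$. Define $g : \{0,1\} \to \mathbb Z[\sqrt 2] \times \mathbb Z/M\mathbb Z$ by $g(0)=(0,0)$ and $g(1)=(1,r)$, and define the rounding function
\[
h(v,w) = \begin{cases} \eta_1(w \bmod M_1) & v \in [0,\tau_1), \\ \eta_i(w \bmod M_i) & v \in (\tau_{i-1},\tau_i),\ 2 \le i \le k-1, \\ \eta_k(w \bmod M_k) & v \in (\tau_{k-1},1]. \end{cases}
\]
Exactly as in the proof of Theorem~\ref{thm:boolean-threshold}, the key structural point is that the non-integer rationals $\tau_1,\ldots,\tau_{k-1}$ do not lie in $\mathbb Z[\sqrt 2]$, so any LP output $v \in [0,1] \cap \mathbb Z[\sqrt 2]$ is unambiguously placed in one of the half-open regions above.

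The main content is verifying that $h$ maps the direct-product relaxation into $\Gamma_Q$. Given $V \in \Conv_{\mathbb Z[\sqrt 2]}(\pi_1(g(P_i)))$ and $W \in \Aff_{\mathbb Z/M\mathbb Z}(\pi_2(g(P_i)))$, unify the two witnesses over a common tuple list $X^1,\ldots,X^p \in P_i$ with LP weights $\alpha_a$ (possibly $0$) and affine weights $\rho_a \in \mathbb Z/M\mathbb Z$ (possibly $0$), so that $V = \sum_a \alpha_a X^a$ and $W = r\sum_a \rho_a X^a$. For $L \equiv r \pmod{M}$ sufficiently large in the family, pick nonnegative integer weights $w_a$ with $w_a \equiv \rho_a r \pmod{M}$, $\sum_a w_a = L$, and $w_a = \alpha_a L + O(pM)$. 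Applying the polymorphism $\THRPER_{T,M,H,L}$ to the $L \times \ar_i$ matrix containing $w_a$ copies of $X^a$ yields $Y \in Q_i$. For each coordinate $j$, the column Hamming weight $s_j = \sum_a w_a X^a_j$ satisfies $s_j \equiv W_j \pmod{M}$ (by construction of the $w_a$ and the definition of $W$) and $s_j/L = V_j + O(pM/L)$. Case analysis on the region of $V_j$---with the boundary values $V_j \in \{0,1\}$ handled by the stated compatibility $\eta_1(0) = \THRPER(\mathbf 0)$ and $\eta_k(L) = \eta_k(r \bmod M_k)$ between the periodic interior and the Hamming-weight endpoints---shows that $Y_j = h(V_j, W_j)$. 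Hence $h(V,W) \in Q_i$, the sandwich is valid, and Theorem~\ref{thm:basiclp+affine} gives $\PCSP(\Gamma) \in \mathsf P^h$; since $h$ only performs a constant number of comparisons in $\mathbb Z[\sqrt 2]$ and a constant-sized table lookup, $h$ is polynomial-time computable, so $\PCSP(\Gamma) \in \mathsf P$.

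The main obstacle I expect is the simultaneous construction of the $w_a$: they must be nonnegative integers, congruent to $\rho_a r$ modulo $M$, sum to exactly $L$, and approximate $\alpha_a L$ well enough that $s_j/L$ lands in the correct interval around $V_j$. The way to handle this is to first round each $\alpha_a L$ to the nearest nonnegative integer in the right residue class (error at most $M$), and then fix the total by redistributing multiples of $M$ into an index $a^\star$ with $\alpha_{a^\star} \ge 1/p$, which exists because $\sum_a \alpha_a = 1$ and satisfies $\alpha_{a^\star}L \gg pM$ once $L$ is large enough. This bookkeeping, together with the boundary analysis for $V_j \in \{0,1\}$ (where one must verify that whatever small $s_j = O(pM)$ appears still gives $s_j \equiv W_j \pmod{M_1}$ or $\pmod{M_k}$), is the only delicate technical step; everything else reduces mechanically to the already-proven sandwiching framework.
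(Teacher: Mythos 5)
Your proposal follows essentially the same route as the paper's proof: pigeonhole on $L$ modulo the lcm of the periods to fix a residue $r$, sandwich $\Gamma$ into $\LP_{A}\times\LE_{\mathbb Z/M\mathbb Z}$ via $g(0)=(0,0)$, $g(1)=(1,r)$ with the same threshold-plus-residue rounding function $h$, construct the integer weights $w_a$ by rounding $\alpha_a L$ into the prescribed residue classes and repairing the total in multiples of $M$, and conclude via Theorem~\ref{thm:basiclp+affine}. The only differences are cosmetic (you fix $A=\mathbb Z[\sqrt 2]$, which is indeed always valid since the interior thresholds are non-integer rationals, where the paper picks a generic LP-solvable ring avoiding the $\tau_i$), and your treatment of the boundary cases $V_j\in\{0,1\}$ and of the weight-repair bookkeeping is, if anything, slightly more careful than the paper's.
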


\begin{proof}
  Let $M_{\lcm} = \lcm(M_0, \hdots, M_k)$. Like in the periodic case, there must be some $r \in \mathbb Z/M_{\lcm}\mathbb Z$ such that $L \equiv r \mod M_{\lcm}$ for infinitely many $L$ for which $\THRPER_{T, M, H, L} \in \Pol(\Gamma)$. 

  Pick an LP-solvable ring $A$ such that $\tau_i \not\in A$ for all $i \in \{1, \hdots, k-1\}$. Let $R = \mathbb Z/M_{\lcm}\mathbb Z$. We claim that  $\Gamma$ sandwiches $\LPLE_{A, R}$ via $(g, h)$ where
  \begin{align*}
    g(0) &= (0, 0) \in A \times R\\
    g(1) &= (1, r) \in A \times R\\
    h(x, y) &= \begin{cases}
      \eta_1(y\!\!\!\!\!\!\mod M_0) & x \le 0\\
      \eta_i(y\!\!\!\!\!\!\mod M_i) & \tau_{i-1} < x < \tau_i, 1 \le i \le k\\
      \eta_k(y\!\!\!\!\!\!\mod M_k) & x \ge 1
    \end{cases}
  \end{align*}  
  The justification of this sandwich is a merging of the methods of Theorem~\ref{thm:boolean-threshold} and Theorem~\ref{thm:boolean-periodic}.     Since we desire "access" to each coordinate of $g$, we let $g_A$ be the first coordinate and $g_R$ be the second coordinate.

  Consider $\LPLE_{\Gamma} := \{R_i := (\Conv_A(g_A(P_i)_1), \Aff_R(g_R(P_i))) : P_i \in \Gamma_P\}$ note that $\LPLE_{\Gamma} \subset \LPLE_{A, R}$. By design, $g$ is a homomorphism from $\Gamma_P$ to $\LPLE_{\Gamma}$. Thus, it suffices to show that for any $(V, W) \in R_i$, we have that $h(V, W) \in Q_i$.

  By definition, if $(V, W) \in R_i$, there exists $X^1, \hdots, X^m \in P_i$ as well as $\alpha_1, \hdots, \alpha_m \in [0, 1]$ summing to $1$ and $r_1, \hdots, r_m \in R$ summing to $1$ such that\footnote{The reason these can be simultaneously true is that we can set some $\alpha_i$'s and $r_i$'s to $0$.}
  \begin{align*}
    V &= \alpha_1g_A(X^1) + \cdots + \alpha_mg_A(X^m)\\
    W &= r_1g_R(X^1) + \cdots + r_mg_R(X^m).
  \end{align*}
  Pick $L$ sufficiently larger (to be specified) such that $\THRPER_{T, M, H, L} \in \Pol(\Gamma)$ with $L \equiv r \mod M_{\lcm}$. We now need to delicately find integer weights $w_1, \hdots, w_m$ such that the following properties hold
  \begin{align*}
     \sum_{i=1}^m w_i &= L\\
    w_i &\equiv r_ir\mod M\text{ for all }i\\
    \left|w_i - \alpha_i L\right| &\le M\text{ for all }i.
  \end{align*}
  Note that the first two conditions are consistent because $\sum_{i=1}^m r_ir \equiv r \equiv L \mod M$. Such $w_i$'s can be constructed by first setting each $w_i$ to be the greatest integer at most $\alpha_i L$ which is equivalent to $r_ir \mod M$. Then, one can increase the $w_i$'s by $M$ one-by-one until they sum to $L$.

  With these in hand, consider
  \[Y := \THRPER_{T, M, H, L}(\underbrace{X^1, \hdots, X^1}_{w_1\text{ copies}}, \hdots, \underbrace{X^m, \hdots, X^m}_{w_m\text{ copies}}) \in Q_i.\]
  Define for each coordinate $j \in \{1, \hdots, \ar_i\}$
  \begin{align*}
    s_j^A &:= \frac{1}{L}\Ham(\underbrace{X^1_j, \hdots, X^1_j}_{w_1\text{ copies}}, \hdots, \underbrace{X^m_j, \hdots, X^m_j}_{w_m\text{ copies}})\\
    s_j^R &:= \sum_{a=1}^m w_aX^a_j \mod M
  \end{align*}
  Then, by design, $Y_j = h(s_j^A, s_j^R)$. We also know that
  \[
    \frac{1}{L}|s_j^A - V_j|= \sum_{\substack{a = 1\\X^a_j = 1}}^k \left|\frac{w_a}{L} - \alpha_a\right| \le \frac{Mm}{L}
  \]
  as well as
  \[
    s_j^R = \sum_{a=1}^m r_j(rX^j_i) \mod M = \sum_{a=1}^m r_jg_R(X^j_i) = W_j.
  \]
  Since $V \in A^{\ar_i}$, we have that $\tau_{i-1} < V_j < \tau_i$ for $i \in \{2, \hdots, k-1\}$ or $\tau_0 \le V_j < \tau_1$ or $\tau_{k-1} < V_j \le \tau_k$. In any case, as $L \to \infty$, $s_j^A$ will get sufficiently close to $V_j$ so that it falls into the same interval as $V_j$. Thus, $Y_j = h(s_j^A, s_j^R) = h(V_j, W_j)$. Therefore, $h(V, W) = Y \in Q_i$, establishing the sandwiching is valid.
  
  Since $A$ is LP-solvable and $R$ is LE-solvable, by Theorem~\ref{thm:basiclp+affine}, we have that $\PCSP(\Gamma) \in \mathsf{P}^h$. Since $h$ only needs to check thresholds and then use a finite lookup table, $h$ can be computed in polynomial time in the description of the input. Thus, $\PCSP(\Gamma) \in \mathsf{P}$, as desired.
\end{proof}

\section{Regional Boolean polymorphisms}\label{sec:Boolean-block}

So far, all of the families of polymorphisms we have studied are Boolean, symmetric; that is, they only depend on the Hamming weight of the input vector. This section describes how these results can be extended to special kinds of \emph{block symmetric} functions. Like in the previous section, our promise domain is always $(D = \{0, 1\}, E, \phi)$. 

\begin{df}
  Let $b$ and $L$ be positive integers. A function $f : D^L \to E$ is $b$-block symmetric, if there is a partition $[L] = B_1 \cup B_2 \cup \cdots \cup B_b$ such that for all $(x_1, \hdots, x_L) \in D^L$ and any permutation $\pi : [L] \to [L]$ such that $\pi(B_i) = B_i$ for all $i$.
  \[
    f(x_1, \hdots, x_L) = f(x_{\pi(1)}, \hdots, x_{\pi(L)}).
  \]
\end{df}

In other words, $f$ is $b$-block symmetric with the corresponding partition $B_1 \cup \cdots \cup B_b$, then, $f(x)$ depends only on $(\Ham_{B_1}(x), \hdots, \Ham_{B_b}(x))$, where $\Ham_{B_i}(x)$ is the sum of the coordinates with indices in $B_i$. Analogous to how a symmetric function can be thought of as a function on the real interval $[0, 1]$, a $b$-block symmetric function can be thought of as a function on $[0, 1]^b$.

\subsection{Regional Polymorphisms}\label{subsec:opwp}

\newcommand{\Part}{\operatorname{Part}}

Even going from $[0, 1]$ to $[0, 1]^2$, the ways of splitting up space can become rather complex. Thus, instead of giving an explicit description like for threshold polymorphisms, we discuss a generalization which we call \emph{open partition polymorphisms}. First, we need to define what an \emph{open partition} is.

\begin{df}
  Let $A_1, \hdots, A_b \subset \mathbb R$ be dense commutative rings. Let $\mathfrak A := (A_1 \times A_2 \times \cdots \times A_b) \cap [0, 1]^b$. Let $E$ be a set. A function $\Part : \mathfrak A \to E$ is an \emph{open partition} if for all $x \in \mathfrak A$, there exists $\epsilon > 0$, such that for all $y \in \mathfrak A$ with $|x - y| < \epsilon$, we have $\Part(y) = \Part(x)$. In other words, for all $e \in E$, $\Part^{-1}(e)$is  open in the Euclidean topology induced by $\mathfrak A$.
\end{df}

We also have a slightly more general notion called an \emph{integer open partition} which allows for arbitrary values to be set at the corners of the hypercube $[0, 1]^b$.

\begin{df}
  Let $A_1, \hdots, A_b \subset \mathbb R$ be dense commutative rings. Let $\mathfrak A := (A_1 \times A_2 \times \cdots \times A_b) \cap [0, 1]^b$. Let $E$ be a set. A function $\Part : \mathfrak A \to E$ is an \emph{integer open partition} if for all $x \in \mathfrak A \setminus \{0, 1\}^b$, there exists $\epsilon > 0$, such that for all $y \in \mathfrak A$ with $|x - y| < \epsilon$, we have $\Part(y) = \Part(x)$. In other words,for all $e \in E$,  $\Part^{-1}(e) \setminus \mathbb \{0, 1\}^b$ is open in the Euclidean topology induced by $\mathfrak A$.
\end{df}

Going back to the $1$-dimensional case, consider $A_1 = \mathbb Z[\sqrt{2}]$ so that $\mathfrak A = A_1 \cap [0, 1]$. Also, let our range be $E = \{0, 1\}$. The partition corresponding to the $\MAJ$ polymorphism is then
\[
  \Part_{\MAJ}(x) = \begin{cases}
    0 & x < 1/2\\
    1 & x > 1/2.
  \end{cases}
\]
This function is an open partition because the apparent boundary element $1/2$ does not exist in $\mathbb Z[\sqrt{2}]$. On the other hand, the partition corresponding to the $\AND$ polymorphism.
\[
  \Part_{\AND}(x) = \begin{cases}
    0 & x < 1\\
    1 & x = 1
  \end{cases}
\]
is not an open partition because $\Part^{-1}(1)$ has boundary. Yet, it is an integer open partition because the only boundary term has integer coordinates.

A more complex example in two dimensions is as follows. Let $\mathfrak A = (\mathbb Z[\sqrt{2}] \times \mathbb Z[\sqrt{3}]) \cap [0, 1]^2$ and let
\[
  \Part_{\AT}(x, y) = \begin{cases}
    0 & x < y\\
    1 & x > y\\
    0 & x = y = 0\\
    1 & x = y = 1
  \end{cases}
\]
See Figure~\ref{fig:region-example}. Note that since the two coordinates are in the rings $\mathbb Z[\sqrt{2}]$ and $\mathbb Z[\sqrt{3}]$, $x = y$ if and only if $x$ and $y$ are both integers. Thus, the only boundary terms have integer coordinates, so $\Part_{\AT}$ is an integer open partition. As hinted by the name, $\Part_{\AT}$ is connected to the family of polymorphisms $\AT_L$. This connection is made more explicit soon.

For a more nontrivial example, consider $E = \{0, 1, 2, 3, 4\}$ and $\mathfrak A = (\mathbb Z[\sqrt{2})^2 \cap [0, 1]^2$. Let
\[
  \Part_{\text{circle}}(x, y) = \begin{cases}
    0 & x < 1/2\text{ and }y < 1/2\text{ and }(x - 1/2)^2 + (y - 1/2)^2 > 1/13\\
    1 & x < 1/2\text{ and }y > 1/2\text{ and }(x - 1/2)^2 + (y - 1/2)^2 > 1/13\\
    2 & x > 1/2\text{ and }y < 1/2\text{ and }(x - 1/2)^2 + (y - 1/2)^2 > 1/13\\
    3 & x > 1/2\text{ and }y > 1/2\text{ and }(x - 1/2)^2 + (y - 1/2)^2 > 1/13\\
    4 & (x - 1/2)^2 + (y - 1/2)^2 < 1/13\\
  \end{cases}
\]

In this case, $\Part_{\text{circle}}$ is an open partition, since the equations $ x= 1/2$, $y = 1/2$ and $(x - 1/2)^2 + (y - 1/2)^2 = 1/13$ have no solutions in $(\mathbb Z[\sqrt{2}])^2$.

\begin{center}
  \begin{figure}
    \begin{center}
      \includegraphics[width=1.7in]{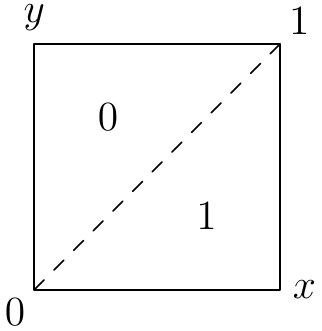}\hspace{1.5in}
      \includegraphics[width=1.7in]{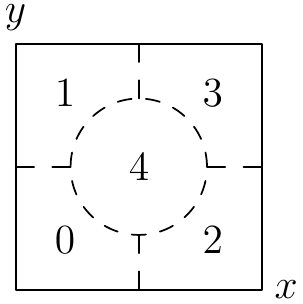}
    \end{center}
    \caption{Plots of $\Part_{\AT}(x, y)$ and $\Part_{\text{circle}}(x, y).$ The dashed lines represent the boundary between the regions. The $0$ and $1$ in the corners of the square for $\Part_{\AT}(x, y)$ represents the value chosen at those corners.}
    \label{fig:region-example}
  \end{figure}
\end{center}

Although an integer open partition $\Part$ is only defined in $\mathfrak A$, we can extend it to a substantial portion of $[0, 1]^b$. This is useful when we desire to discretize $\Part$ by wanting know its value at particular rational coordinates (which may not be in $\mathfrak A$).

\begin{df}
  Let $\Part : \mathfrak A \to E$ be an integer open partition. Define $\overline{\Part} : [0,1]^b \to E\cup \{\perp\}$ to be the partial function for which
  \[
    \overline{\Part}(x) = \begin{cases}
      \Part(x) & x \in \{0, 1\}^n\\
      e \in E & \exists \epsilon > 0, \forall y \in \mathfrak A, |x - y| < \epsilon\text{ implies } \Part(x) = e\\
      \perp & \text{otherwise}.
    \end{cases}
  \]
\end{df}
Note that since $\Part$ is an integer open partition, $\Part(x) = \overline{\Part}(x)$ for all $x \in \mathfrak A$. Since $\mathfrak A$ is dense in $[0, 1]^b$, and $\mathfrak A \subset (\overline{\Part})^{-1}(E)$ is open, we have that $(\overline{\Part})^{-1}(\perp) = [0, 1]^b \setminus \mathbb (\overline{\Part})^{-1}(E)$ is nowhere dense, although it may have positive Lebesgue measure.

As alluded to earlier, these partitions $\Part$, which will end up being our rounding functions, are discretized to form a collection of polymorphisms. Recall our domain $D = \{0, 1\}$ is Boolean for this section.

\begin{df}
  Let $A_1, \hdots, A_b \subset \mathbb R$ be dense commutative rings. Let $\mathfrak A = A_1 \times \cdots \times A_b \cap [0, 1]^b$. Let $\Part : \mathfrak A \to E$ be an integer open partition. Let $L_1, \hdots, L_b$ be positive integers such that for all $k_i \in \{0, 1, \hdots, L_i\}$ for all $k_i \in [b]$, we have that $\overline{\Part}\left(\frac{k_1}{L_1}, \hdots, \frac{k_b}{L_b}\right) \neq \perp$. Let $L = \sum_{i=1}^b L_i$ and let $\mathcal B = (B_1, \hdots, B_b)$ be a partition of $[L]$ such that $|B_i| = L_i$ for all $i \in \{1, \hdots, b\}$. Define the \emph{regional polymorphism} $\REG_{\Part, \mathcal B} : D^{B_1} \times \cdots \times D^{B_b} \to E$ to be
  \[
    \REG_{\Part, \mathcal B}(x) = \overline{\Part}\left(\frac{\Ham_{B_1}(x)}{L_1}, \cdots \frac{\Ham_{B_b}(x)}{L_b}\right).
  \]
\end{df}

For example, $\REG_{\Part_{\AT}, (\{1, 3, \hdots, 2k+1\}, \{2, 4, \hdots, 2k\})}$ is the same as $\AT_{2k+1}$ up to a permutation of the coordinates.

Now, we can prove that having an infinite collection of regional polymorphisms implies tractability as long as $\Part$ is efficiently computable.

\begin{thm}\label{thm:regional}
  Let $A_1, \hdots, A_b \subset \mathbb R$ be LP-solvable subrings. Let $\mathfrak A = A_1 \times \cdots \times A_b \cap [0, 1]^b$. Let $\Part : \mathfrak A \to E$ be an integer open partition. Let $\Gamma = (\Gamma_P = \{P_i \in D^{\ar_i} : i \in I\}, \Gamma_Q = \{Q_i \in E^{\ar_i}\})$ be a promise template. Assume that for all positive integers $\ell$, there exists $\REG_{\Part, \mathcal B} \in \Pol(\Gamma)$ such that $|B_i| \ge \ell$ for all $B_i \in \mathcal B$. Then, $\PCSP(\Gamma) \in \mathsf{P}^{\Part}$. 
\end{thm}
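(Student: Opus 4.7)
The plan is to follow the same sandwich paradigm used in Theorems \ref{thm:boolean-threshold}, \ref{thm:boolean-periodic}, and \ref{thm:boolean-threshold-periodic}, but now using a direct product of $b$ LP-solvable rings, and invoking Theorem~\ref{thm:basiclp+affine} with $m=0$ LE-solvable pairs. Concretely, I would show that $\Gamma$ sandwiches $\LP_{A_1} \times \cdots \times \LP_{A_b}$ via the maps
\[
  g \colon D \to \mathbb Z^b, \qquad g(d) = (d, d, \dots, d),
\]
\[
  h \colon A_1 \times \cdots \times A_b \to E, \qquad h(v_1, \dots, v_b) = \overline{\Part}(v_1, \dots, v_b),
\]
(with out-of-range coordinates clamped to $[0,1]$). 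With $g(d)=(d,\dots,d)$ the sandwich relation attached to $P_i$ is $\Conv_{A_1}(P_i) \times \cdots \times \Conv_{A_b}(P_i)$, which manifestly contains $g(P_i)$, so $g$ is a homomorphism from $\Gamma_P$ into this template.

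The crux is to verify that $h$ is a homomorphism from the sandwich template into $\Gamma_Q$. Fix a constraint $(P_i,Q_i)$ and a tuple $(v^1,\dots,v^{\ar_i})$ in the sandwich relation, where each $v^j=(v^j_1,\dots,v^j_b) \in A_1\times\cdots\times A_b$. For each block $k$, the projection $(v^1_k,\dots,v^{\ar_i}_k) \in \Conv_{A_k}(P_i)$ expresses as a convex combination $\sum_a \alpha_{a,k}X^a$ with $X^1,\dots,X^m \in P_i$ (by padding with zero weights I can use the same $X^a$'s across all $k$). Choose $L_1,\dots,L_b$ large enough that a regional polymorphism $\REG_{\Part,\mathcal B}$ with $|B_k|=L_k$ lies in $\Pol(\Gamma)$ (hypothesized to exist for any lower bound on block sizes), and set integer block weights $w_{a,k}$ with $\sum_a w_{a,k}=L_k$ and $|w_{a,k}/L_k - \alpha_{a,k}|\le 1/L_k$. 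Build an $L \times \ar_i$ matrix whose block $k$ consists of $L_k$ rows listing $X^a$ with multiplicity $w_{a,k}$. Every row is in $P_i$, so applying $\REG_{\Part,\mathcal B}$ column-wise produces an element of $Q_i$ whose $j$-th coordinate equals $\overline{\Part}(s^1_j,\dots,s^b_j)$, where $s^k_j := \Ham_{B_k}(\text{col }j)/L_k = \sum_a (w_{a,k}/L_k)X^a_j$.

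The final step is to argue $\overline{\Part}(s^1_j,\dots,s^b_j)=h(v^j)$ for large enough $L_k$. If a coordinate $v^j_k \in \{0,1\}$, then all $X^a_j$ with $\alpha_{a,k}>0$ agree, forcing $s^k_j=v^j_k$ exactly; otherwise $v^j_k \in (0,1)$ and $s^k_j \to v^j_k$ as $L_k \to \infty$. If $v^j \notin \{0,1\}^b$, the integer open partition property of $\Part$ gives an $\epsilon$-ball in $\mathfrak A$ around $v^j$ on which $\Part$ is constantly $\Part(v^j)$; once every $s^k_j$ is within $\epsilon/2$ of $v^j_k$, the $\epsilon/2$-ball around $(s^1_j,\dots,s^b_j)$ sits inside that ball, so by definition $\overline{\Part}(s^1_j,\dots,s^b_j)=\Part(v^j)$. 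If $v^j \in \{0,1\}^b$, then $s_j=v^j$ exactly and $\overline{\Part}$ agrees with $\Part$ at corners by definition. This confirms the sandwich, so Theorem~\ref{thm:basiclp+affine} yields $\PCSP(\Gamma) \in \mathsf{P}^h$, i.e.\ $\mathsf{P}^{\Part}$.

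The main obstacle is the case analysis at boundary/face points of $[0,1]^b$: the partition is only guaranteed to be open on $\mathfrak A\setminus\{0,1\}^b$, so I have to argue that face points (with some integer and some non-integer coordinates) still behave correctly, which works precisely because the integer coordinates of $v^j$ are matched exactly by the block averages $s^k_j$, and the non-integer coordinates live in an open region of constancy of $\Part$. A secondary point to justify is the simultaneous enlargement of $L_1,\dots,L_b$, which is granted by the hypothesis that arbitrarily balanced-large block partitions exist.
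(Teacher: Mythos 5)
Your plan is correct and follows essentially the same route as the paper's proof: the same sandwich $\Conv_{A_1}(g_1(P_i))\times\cdots\times\Conv_{A_b}(g_b(P_i))$ with $g(d)=(d,\dots,d)$ and $h=\Part$, the same integer-weight discretization of the per-block convex combinations, and the same dichotomy between corner points of $\{0,1\}^b$ (handled by exact agreement) and all other points (handled by openness of the partition away from the corners). The one detail to add is that the weights must be chosen so that $w_{a,k}=0$ whenever $\alpha_{a,k}=0$ --- the bound $|w_{a,k}/L_k-\alpha_{a,k}|\le 1/L_k$ alone does not force this --- since your exactness claim $s^k_j=v^j_k$ at integer coordinates of $v^j$ relies on it; the paper imposes this condition explicitly.
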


\begin{proof}
  Let $\mathcal A = (A_1, \hdots, A_b)$. By Theorem~\ref{thm:basiclp+affine}, it suffices to show that $\Gamma$ sandwiches $\LPLE_{\mathcal A}$. The map $g : D \to \mathfrak A$ is just
  \[
    g(d) = \begin{cases}
      (0, \hdots, 0) & d = 0\\
      (1, \hdots, 1) & d = 1.
    \end{cases}
  \]
  As suggested by the theorem statement, the rounding map is precisely\footnote{Technically, the domain of $h$ is $A_1 \times \cdots \times A_b$, whereas the domain of $\Part$ is $[0, 1]^b$. This can be ``fixed'' by having $h$ return a default value (e.g., $0$) when the input is outside $[0, 1]^b$.} the integer open partition $h = \Part$.

  Now, let $\LPLE_{\Gamma} = \{R_i := \Conv_{A_1}(g_1(P_i)) \times \cdots \times \Conv_{A_b}(g_b(P_i)) \in \mathfrak A^{\ar_i}: i \in I\}$. By design, $g$ is a homomorphism from $\Gamma_P$ to $\LPLE_{\Gamma}$. The heart of the argument is to show that $h$ is a polymorphism from $\LPLE_{\Gamma}$ to $\Gamma_Q$. In other words, we need to show for all $i \in I$, that $h(R_i) \subseteq Q_i$. Fix $V \in R_i$ and view $V = (V_1, \hdots, V_b) \in A_1^{\ar_i} \times \cdots \times A_b^{\ar_i}$. With $V_a = (V_{a, 1}, \hdots, V_{a, \ar_i}) \in A_a^{\ar_i}$.

  List the elements $X^1, \hdots, X^m \in P_i$. For all $a \in \{1, 2, \hdots, b\}$, because $V_a \in \Conv_{A_a}(g_1(P_i))$, we have that there exists weights $\alpha_{a,j} \in [0, 1]$ with $j \in \{1, \hdots, m\}$ such that
  \[
    V_a = \sum_{j=1}^m \alpha_{a,j}X^j.
  \]
  (Note that $g$ can be omitted, since it is the identity map on each coordinate.) Pick $\ell$ sufficiently large (to be determine later), such that $\REG_{h, \mathcal B} \in \Pol(\Gamma)$ and $|B_a| \ge \ell$ for all $B_a \in \mathcal B$. Then, using a nearly identical argument as the one in Theorem~\ref{thm:boolean-threshold}, we can find integer weights $w_{a, j}$ such that $\sum_{j=1}^m w_{a, j} = |B_a|$ for all $a \in \{1, 2, \hdots, b\}$ and
  \[
    \left|\frac{w_{a, j}}{|B_a|} - \alpha_{a,j}\right| \le \frac{1}{|B_a|} \le \frac{1}{\ell}.
  \]
  Furthermore, we can ensure that $w_{a,j} = 0$ whenever $\alpha_{a,j} = 0$
  Fix $k \in \{1, \hdots, \ar_i\}$. There are essentially two cases to consider
  \begin{itemize}
  \item If $W_k := (V_{1,k}, \hdots, V_{b,k}) \in [0, 1]^b \setminus \{0, 1\}^b$, consider $\epsilon > 0$ such that $\overline{\Part}(x) = \overline{\Part}(W_k)$ for all $|x - W_k| < \epsilon$. Then, if $\ell$ is chosen such that $\frac{b}{\ell} < \epsilon$, then for each $k \in \{1, \hdots, \ar_i\}$
    \begin{align*}
      \REG_{\Part,\mathcal B}(\underbrace{X^1_k, \hdots, X^1_k}_{w_{1, 1}\text{ copies}}, \hdots, \underbrace{X^m_k, \hdots, X^m_k}_{w_{1,m}\text{ copies}},&
                              \underbrace{X^1, \hdots, X^1}_{w_{2,1}\text{ copies}}, \hdots, \underbrace{X^m_k, \hdots, X^m_k}_{w_{2, m}\text{ copies}},
                              \hdots
                               \underbrace{X^1_k, \hdots, X^1_k}_{w_{b,1}\text{ copies}}, \hdots, \underbrace{X^m_k, \hdots, X^m_k}_{w_{b, m}\text{ copies}})\\
                                                                                                                                                            &= \overline{\Part}\left(\frac{\sum_{j=1}^m w_{1,j}X^j_k}{|B_1|}, \hdots, \frac{\sum_{j=1}^m w_{b,j}X^j_k}{|B_b|}\right)\\
                                                                                                                                                            &= \overline{\Part}\left(\sum_{j=1}^m \alpha_{1,j}X^j_k, \hdots, \sum_{j=1}^m \alpha_{b,j}X^j_k\right)\text{ (within $\epsilon$)}\\
                                                                                                                                                            &= \overline{\Part}(W_k)\\
                                                                                                                                                            &= \Part(W_k).
    \end{align*}
  \item Otherwise, if $W_k \in \{0, 1\}^b$, whenever $\alpha_{a,j} \neq 0$, we must have that $V_{a,k} = X^j_k$. Since $\alpha_{a,j} = 0$ implies $w_{a,j} = 0$, we have that
    \begin{align*}
      \REG_{\Part,\mathcal B}(&\text{same as above})\\
      &= \overline{\Part}\left(\frac{\sum_{j=1}^m w_{1,j}X^j_k}{|B_1|}, \hdots, \frac{\sum_{j=1}^m w_{1,j}}{|B_b|}\right)\\
      &= \overline{\Part}\left(\frac{\sum_{j=1}^m w_{1,j}V^j_k}{|B_1|}, \hdots, \frac{\sum_{j=1}^m w_{1,j}}{|B_b|}\right)\\
      &= \overline{\Part}(W_k)\\
      &= \Part(W_k)\text{ (because $W_k \in \{0, 1\}^b$}.
    \end{align*}
  \end{itemize}
  In either case, we have that $\Part(V)$ is the output of $\REG_{\Part,\mathcal B}(P_i) \subseteq Q_i.$ Thus, we have the aforementioned sandwich.
\end{proof}

\subsection{Regional-periodic Polymorphisms}\label{subsec:reg-per}

Just as threshold polymorphisms can be generalized to threshold-periodic polymorphisms, we have that regional polymorphisms can be generalized to \emph{regional-periodic} polymorphisms.

Recall that if $\mathfrak A := A_1 \times \cdots \times A_b \cap [0, 1]$, where the $A_i$'s are dense commutative rings, then $\Part : \mathfrak A \to E$ is an open partition if for all $e \in E$, $\Part^{-1}(e)$ is relatively open with respect to the Euclidean topology induced by $\mathfrak A$. In other words, for all $e \in E$, there exists $\Omega_e \subset \mathbb R^b$ open such that $\Part^{-1}(e) = \Omega_e \cap \mathfrak A$. We call $\Omega_e$ a \emph{region} of $\Part$. Note that for all $x \in \Omega_e \cap [0, 1]^b$, $\overline{\Part}(x) = e$. Given this, we can now define \emph{regional-periodic polymorphisms}.

\begin{df}
  Let $A_1, \hdots, A_b \subset \mathbb R$ be dense commutative rings. Let $\mathfrak A = A_1 \times \cdots \times A_b$ be a product of subrings of $\mathbb R$. Let $S$ be a finite set, and let $\Part : \mathfrak A \to S$ be an open partition. Let $L_1, \hdots, L_b$ be positive integers such that for all $k_i \in \{0, 1, \hdots, L_i\}$ for all $i \in [b]$, we have that $\overline{\Part}\left(\frac{k_1}{L_1}, \hdots, \frac{k_b}{L_b}\right) \neq \perp$. Let $L = \sum_{i=1}^b L_i$ and let $\mathcal B = (B_1, \hdots, B_b)$ be a partition of $[L]$ such that $|B_i| = L_i$ for all $i \in [b]$. For each $k \in S$, let $J_k$ be an ideal of $\mathbb Z^b$ such that $\mathbb Z^b/J_k$ is finite. Let $\mathcal M  = \{M_k: \mathbb Z^b/J_k \to E \mid k \in S\}$ be a collection of maps. Define the \emph{regional-periodic polymorphism} $\REGPER_{\Part, \mathcal B, \mathcal M} : D^{B_1} \times \cdots \times D^{B_b} \to E$ to be
  \[
    \REGPER_{\Part, \mathcal B, \mathcal M}(x) = M_k((\Ham_{B_1}(x), \hdots, \Ham_{B_b}(x)) \!\!\!\!\!\!\!\mod J_k)\text{ where }k = \overline{\Part}\left(\frac{\Ham_{B_1}(x)}{L_1}, \cdots \frac{\Ham_{B_b}(x)}{L_b}\right).
  \]
\end{df}

Figure~\ref{fig:regper-example} shows an example of a partition with periodic functions added.  Now, we can prove a more general result.

\begin{center}
  \begin{figure}
    \begin{center}
      \includegraphics[width=2.2in]{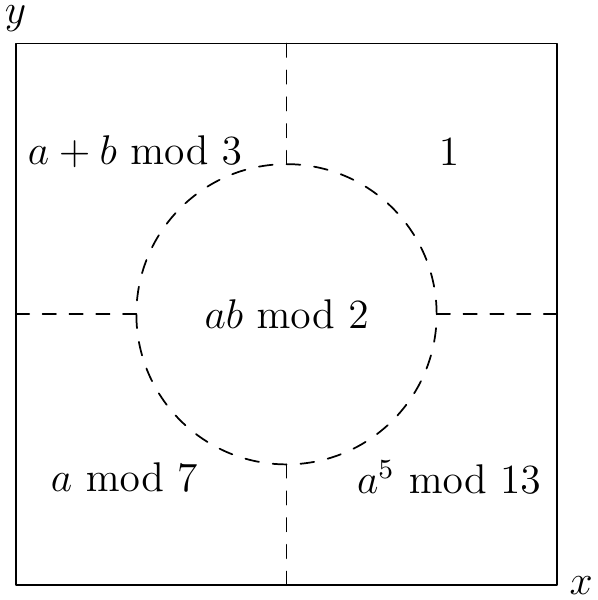}
    \end{center}
    \caption{Plot of $\Part_{\text{circle}}(x, y)$ in the same style as Figure~\ref{fig:region-example}. Within each region is a function $(a, b) \mapsto M_i(a, b)$ whose domain is some quotient of $\mathbb Z^2$, where $a$ and $b$ represent the Hamming weights of each block in the corresponding regional-periodic polymorphism.}
    \label{fig:regper-example}
  \end{figure}
\end{center}

\begin{thm}\label{thm:regional-periodic}
  Let $A_1, \hdots, A_b \subset \mathbb R$ be LP-solvable rings. Let $\mathfrak A = A_1 \times \cdots \times A_b \cap [0, 1]^b$. Let $\Part : \mathfrak A \to S$ be an open partition. Let $\mathcal M  = \{M_k : \mathbb Z^b/J_k \to E \mid k \in S\}$ be a collection of maps. Let $\Gamma = (\Gamma_P = \{P_i \in D^{\ar_i} : i \in I\}, \Gamma_Q = \{Q_i \in E^{\ar_i}\})$ be a promise template. Assume that for all positive integers $\ell$, there exists $\REGPER_{\Part, \mathcal B, \mathcal M} \in \Pol(\Gamma)$ such that $|B_i| \ge \ell$ for all $B_i \in \mathcal B$. Then, $\PCSP(\Gamma) \in \mathsf{P}^{\Part}$. 
\end{thm}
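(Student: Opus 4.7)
The proof plan synthesizes the two constructions of the previous sections: LP-based rounding via an open partition (Theorem~\ref{thm:regional}) and affine relaxation with residue control (Theorem~\ref{thm:boolean-threshold-periodic}). Set $J := \bigcap_{k \in S} J_k$; since $S$ is finite and each $\mathbb Z^b/J_k$ is finite, $J$ has finite index, so $R := \mathbb Z^b/J$ is a finite (hence LE-solvable) commutative ring equipped with canonical projections $R \twoheadrightarrow \mathbb Z^b/J_k$ for every $k \in S$. I will exhibit a homomorphic sandwich of $\Gamma$ into $\LPLE_{\mathcal A, (\mathbb Z \subset R)}$ with $\mathcal A = (A_1, \ldots, A_b)$, and then appeal to Theorem~\ref{thm:basiclp+affine}.

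Let $M$ denote the exponent of the finite abelian group $R$, so $M\mathbb Z^b \subseteq J$. The hypothesis furnishes infinitely many valid block-size vectors $\vec L = (L_1, \ldots, L_b)$ with each $L_a$ unbounded, so pigeonhole on $R$ picks a residue $r \in R$ such that $\vec L \equiv r \pmod J$ for infinitely many such $\vec L$. Define $g : D \to \mathfrak A \times R$ by $g(0) = (\vec 0, \vec 0)$, $g(1) = (\vec 1, r)$, and the rounding map $h : \mathfrak A \times R \to E$ by $h(v, w) := M_{\overline{\Part}(v)}\bigl(w \bmod J_{\overline{\Part}(v)}\bigr)$. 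The map $g$ is a homomorphism into the product relaxation by construction; the substantive task is to verify that $h$ is a homomorphism from this relaxation back to $\Gamma_Q$.

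Fix a relaxed assignment to a constraint $P_i$: $V = (V_1, \ldots, V_b)$ with per-block decompositions $V_a = \sum_j \alpha_{a,j} X^j \in \Conv_{A_a}(P_i)$, and $W = \sum_j r_j\, g_R(X^j) \in \Aff_{\mathbb Z}(g_R(P_i))$ with $r_j \in \mathbb Z$, $\sum_j r_j = 1$, and $X^j \in P_i$. Choose a valid $\vec L \equiv r \pmod J$ with $L_a \gg Mm$ for every $a$. For each $(a, j)$, take $w_{a,j}$ to be a non-negative integer approximating $\alpha_{a,j} L_a$ within $M$ and satisfying $w_{a,j} \equiv r_j L_a \pmod M$; then increment individual $w_{a,j}$'s by $M$ until $\sum_j w_{a,j} = L_a$, which is consistent modulo $M$ since $\sum_j r_j L_a = L_a$. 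A direct computation then shows, for every constraint coordinate $k \in [\ar_i]$, the block-Hamming-weight vector $\vec H_k := (\sum_j w_{a,j} X^j_k)_{a=1}^b$ satisfies $\vec H_k \equiv R_k \vec L \equiv R_k\, r = W_k \pmod J$, where $R_k := \sum_{j : X^j_k = 1} r_j$.

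Applying $\REGPER_{\Part, \mathcal B, \mathcal M}$ to the matrix with $w_{a,j}$ copies of $X^j$ in block $a$ yields an element of $Q_i$. At each coordinate $k$, the ratio vector $(H_{a,k}/L_a)_{a=1}^b$ differs from $V_k$ by $O(Mm/L_{\min})$; since $\Part$ is an open partition and $V_k \in \mathfrak A$, $V_k$ lies in an open region $\Omega_e \subseteq \mathbb R^b$ with $e = \Part(V_k)$, and for large enough $L_{\min}$ the ratio vector also lies in $\Omega_e$, forcing $\overline{\Part}$ of it to equal $e$. Combined with $\vec H_k \equiv W_k \pmod{J_e}$, the polymorphism's output at coordinate $k$ equals $M_e(W_k \bmod J_e) = h(V, W)_k$, establishing the sandwich. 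The main technical obstacle is reconciling the real-valued LP decomposition with the $\mathbb Z$-valued affine decomposition; this is resolved by the residue-class trick, which leaves $\Omega(L_a/M)$ admissible integer values per $w_{a,j}$, comfortably enough to absorb both kinds of constraints simultaneously. Since $h$ is computable with one oracle call to $\Part$ followed by a constant-time lookup among the maps $M_k$, Theorem~\ref{thm:basiclp+affine} yields $\PCSP(\Gamma) \in \mathsf P^{\Part}$.
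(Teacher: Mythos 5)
Your proposal is correct and follows essentially the same route as the paper's proof: pigeonhole on the residue of the block-size vector modulo $J = \bigcap_{k} J_k$, sandwich into a product of the Basic LPs over the $A_a$'s with an affine relaxation over the finite ring $\mathbb Z^b/J$, construct integer weights satisfying the same cardinality/coset/approximation conditions, and use openness of the regions to conclude. The only (harmless) deviations are that you restrict the affine coefficients to the subring generated by the identity (i.e., $\Aff_{\mathbb Z}$ rather than $\Aff_{\mathbb Z^b/J}$, which is exactly what the paper does in its Section~\ref{sec:larger} variant) and control the weights modulo the exponent $M$ of $\mathbb Z^b/J$ rather than modulo $J$ directly; both yield a sandwich that is at least as tight and remain LE-solvable.
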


The proof has similar structure to the proof of the threshold-periodic case, Theorem~\ref{thm:boolean-threshold-periodic}.

\begin{proof}
  Given a sequence of blocks $\mathcal B$, we can define its \emph{residue} with respect to $\mathcal M$ to be the sequence
  \[
    \Res_{\mathcal M}(\mathcal B) = (\mathcal B \mod J_k : k \in S).
  \]
  Note that since $\mathbb Z^b/J_k$ is a finite quotient for all $k \in S$, the set of all possible residues is finite. Thus, there exists a residue $\hat{r} := (\hat{r}_k : k \in S)$ such that $\hat{r} = \Res_{\mathcal M}(\mathcal B)$ for infinitely many $\mathcal B$ such that $\REGPER_{\Part, \mathcal B, \mathcal M} \in \Pol(\Gamma)$ and $\min\{|B_i|\}$ is arbitrarily large.

  We now apply the ring-theoretic Chinese Remainder Theorem to our quotient rings. Let $J = \bigcap_{k \in S} J_k$ be the intersection of the ideals of $\mathbb Z^b$. Note that $J$ is also an ideal of $\mathbb Z^b$. Furthermore, $\mathbb Z^b / J$ is finite, as any two elements of $x, y \in \mathbb Z^b$ with the same residue satisfy $x - y \in J_k$ for all $k \in S$, so $x - y \in J$, implying a finite number of cosets. This also implies that we can identify $\hat{r}$ with an element of $\mathbb Z^b/J$.

  Now, let $\mathcal A = (A_1, \hdots, A_b)$ and $\mathcal R = (\mathbb Z^b/J)$. Recall that $\LPLE_{\mathcal A, \mathcal R}$ is the direct product
  \[
    \LPLE_{\mathcal A, \mathcal R} = \left(\bigtimes_{j=1}^b \LP_{A_j}\right) \times \LE_{\mathbb Z^b/J}.
  \]
  We claim that $\Gamma$ sandwiches $\LPLE_{\mathcal A, \mathcal R}$ via the maps $(g, h)$ where
  \begin{align}
    g(0) &= (\underbrace{0, \hdots, 0}_{b\text{ terms}}, 0)\nonumber\\
    g(1) &= (\underbrace{1, \hdots, 1}_{b\text{ terms}}, \hat{r})\nonumber\\
    h(x_1, \hdots, x_b, r) &= M_k(r)\text{ where $k = \Part(x_1, \hdots, x_b)$.}\label{eq:h}
  \end{align}

  Analogous to the proof of Theorem~\ref{thm:boolean-threshold-periodic}, define
  \[
    \LPLE_{\Gamma} := \{R_i := (\Conv_{A_1}(g_{1}(P_i)), \hdots, \Conv_{A_b}(g_{b}(P_i)), \Aff_{\mathbb Z^b/J}(g_{b+1}(P_i))): P_i \in \Gamma_P\}.
  \]
  As before, by definition, $g$ is a homomorphism from $\Gamma_P$ to $\LPLE_{\Gamma}$. Thus, it suffices to show
  \[\text{for all }(V_1, \hdots, V_b, W) \in R_i,\text{ we have that }h(V_1, \hdots, V_b, W) \in Q_i.\]
  
  Let $X^1, \hdots, X^m$ be the elements of $P_i$.  For all $j \in [b]$, since $V_j  \in \Conv_{A_j}(g_j(P_i))$, we have that there exists $\alpha_{j, 1}, \hdots, \alpha_{j, m} \in [0, 1]$ summing to $1$ such that 
  \[
    V_j = \alpha_{j,1}g_{a}(X^1) + \cdots + \alpha_{j,m}g_{a}(X^m).
  \]
  Likewise, since $W \in \Aff_{\mathbb Z^b/J}(g_{b+1}(P_i))$, we have that there exist $r_1, \hdots, r_m \in \mathbb Z^b/J$ which sum to the identity $(1, \hdots, 1) \in \mathbb Z^b/J$ such that
  \[
    W = r_1g_{b+1}(X^1) + \cdots + r_mg_{b+1}(X^m).
  \]
  Fix $\ell$ sufficiently large (to be specified later) and $\mathcal B = (B_1, \hdots, B_b)$ with $|B_j| \ge \ell$ for all $j \in [b]$ such that $\Res_{\mathcal M}(\mathcal B) = \hat{r} \in \mathbb Z^b/J$. For any such $j \in [b]$ find weights $w_{j,1}, \hdots, w_{j,m}$ satisfying the following conditions:
  \begin{align*}
    \sum_{k=1}^m w_{j,k} = |B_j|&\text{ for all $j \in [b]$}&\text{(cardinality condition)}\\
    (w_{1,k}, \hdots, w_{b,k}) \in r_k\hat{r} + J&\text{ for all $k \in [m]$}&\text{(coset condition)}\\
    \left|w_{j,k} - \alpha_{j,k}|B_{j,k}|\right| \le 2|\mathbb Z^b/J|bm&\text{ for all $j \in [b], k \in [m].$} &\text{(approximation condition)}
  \end{align*}
  This can be done by first estimating $\hat{w}_{j,k} = \lfloor \alpha_{j,k}|B_j|\rfloor$ and then adjusting each as little as possible (at most $|\mathbb Z^b/J|$)  so that $(\hat{w}_{1,k}, \hdots, \hat{w}_{b,k})$ are in the appropriate cosets. Then, it is not hard to check by the compatibility of the conditions that
    \[
      T := (|B_1|, \hdots, |B_b|) - \left(\sum_{k=1}^m w_{1,k}, \hdots, \sum_{k=1}^m w_{b,k}\right) \in J.
    \]
    If we add $T$ to $(\hat{w}_{1,1}, \hdots, \hat{w}_{b,1})$, then the cardinality and coset constraints are satisfied.  Note that the entries of $T$ are bounded by $|\mathbb Z^b/J|m$, so the approximation condition is also still satisfied.

  Now consider
  \begin{align*}
    Y := \REGPER_{\Part,\mathcal B, \mathcal M}(&\underbrace{X^1, \hdots, X^1}_{w_{1,1}\text{ copies}}, \hdots, \underbrace{X^m, \hdots, X^m}_{w_{1,m}\text{ copies}},\\
    &\underbrace{X^1, \hdots, X^1}_{w_{2,1}\text{ copies}}, \hdots, \underbrace{X^m, \hdots, X^m}_{w_{2,m}\text{ copies}}, \hdots\\
    &\underbrace{X^1, \hdots, X^1}_{w_{b,1}\text{ copies}}, \hdots, \underbrace{X^m, \hdots, X^m}_{w_{b,m}\text{ copies}}) \in Q_i.
  \end{align*}

  We seek to prove that $h(V_1, \hdots, V_b, W) = Y$, showing that $h(V_1, \hdots, V_b, W) \in Q_i$.

  For each $j \in [b]$ and each coordinate $k \in \{1, \hdots, \ar_i\}$ (recall $\ar_i$ is the arity of $P_i$, $Q_i$ and $R_i$) we can define
  \begin{align*}
    s^{A_j}_k &:= \frac{1}{|B_j|}\Ham(\underbrace{X_k^1, \hdots, X_k^1}_{w_{j,1}\text{ copies}}, \hdots, \underbrace{X_k^m, \hdots, X_k^m}_{w_{j,m}\text{ copies}}).\\
              &= \frac{1}{|B_j|}\sum_{\beta = 1}^m w_{j,\beta}X_k^\beta\\
              &\approx \sum_{\beta = 1}^m \alpha_{j,\beta}X_k^\beta = V_{j,k},
  \end{align*}
  where $\approx$ means $O(\frac{1}{\ell})$ error. Thus, if $\ell$ is sufficiently large, for all $k \in [\ar_i]$, $(s^{A_1}_k, \hdots, s^{A_b}_k)$ will be in the same region of $\Part$ as $(V_{1,k}, \hdots, V_{b,k})$ because the regions are relatively open.

  Furthermore, for all $k \in [\ar_i]$ define
  \begin{align*}
    s^{\mathcal R}_k &= \sum_{\beta=1}^m (w_{1,\beta}, \hdots, w_{b,\beta})X^\beta_k\\
                     &\in J + \sum_{\beta=1}^m r_\beta(\hat r X^\beta_k)\\
                     &= J + \sum_{\beta =1}^m r_\beta g_{b+1}(X^\beta_k) = W_k.
  \end{align*}

  Thus, for all $k \in [\ar_i]$,
  \begin{align*}
    Y_k &= M_{\overline{\Part}(s^{A_1}_k, \hdots, s^{A_b}_k)}(s^{\mathcal R}_k)\\
        &= M_{\Part(V_{1,k}, \hdots, V_{b,k})}(W_k)\text{ (by above discussion)}\\
        &= h_k(V_1, \hdots, V_b, W)\text{ (by (\ref{eq:h}))}.
  \end{align*}
  Thus, $Y = h(V_1, \hdots, V_b, W)$, so we have established the homomorphic sandwich. Since each $A_i$ is LP-solvable, and $\mathbb Z^b/J$ is a finite commutative ring (and so is LE-solvable), by Theorem~\ref{thm:basiclp+affine}, we have that $\PCSP(\Gamma) \in \mathsf{P}^{\Part}$.  
\end{proof}

\section{Extending to Larger Domains}\label{sec:larger}

Given the established framework, the extension from Boolean to non-Boolean domains is not much more difficult. The main change is that instead of relaxing the domain $D$ to the interval $[0, 1]$, we map to the \emph{standard $D$-simplex}:

\[\Delta^D := \{x \in \mathbb R^D : x_d \ge 0\text{ for all }d \in D, \sum_{d \in D} x_d = 1\}.\]

We let $e^d \in \mathbb R^D$ denote the standard basis and further assume without loss of generality that $D = \{1, 2, \hdots, |D|\}$. We also  need to discuss a generalization of Hamming weight. Given a vector $x \in D^L$ and an element $d \in D$, we define
\[
  \Ham_d(x) = |\{i \in [L] : x_i = d\}|.
\]

\newcommand{\HamVec}{\operatorname{HamVec}}

In this section, we let $\HamVec(x)$ denote the vector in $\mathbb Z^{D}$ such that the $d$th coordinate is $\Ham_d(x)$. For example, in the Boolean case (with $D = \{1, 2\}$), $\HamVec(x) = (L - \Ham(x), \Ham(x)).$

\subsection{Simplex polymorphisms}

\newcommand{\SIMPLEX}{\operatorname{SIMPLEX}}

In this section, we immediately generalize regional-periodic polymorphisms (as opposed to starting with just a generalization of regional). This leads directly to the definition of \emph{simplex polymorphisms}. These are very similar to regional-periodic polymorphisms, except they correspond to partitions of simplices instead of hypercubes. One key difference is that since we no longer have the product set structure, we need to use the same subring for every coordinate.

\begin{df}
  Let $A \subset \mathbb R$ be a dense subring. Let $\mathfrak A = A^{D}$, $S$ be a finite set, and $\Part : \mathfrak A \to S$ be an open partition. Let $L$ be a positive integer such that for all nonnegative $\{\ell_d : d \in D\}$ summing to $L$, we have that
  \[\overline{\Part}\left(\frac{\ell_1}{L}, \hdots, \frac{\ell_{|D|}}{L}\right) \neq \perp.\]
  For each $k \in S$, let $J_k$ be an ideal of $\mathbb Z^{D}$ such that $\mathbb Z^{D}/J_k$ is finite. Let $\mathcal M  = \{M_k: \mathbb Z^{D}/J_k \to E \mid k \in S\}$ be a collection of maps. Define the \emph{simplex polymorphism} $\SIMPLEX_{\Part, L, \mathcal M} : D^{L} \to E$ to be
  \[
    \SIMPLEX_{\Part, L, \mathcal M}(x) = M_k(\HamVec(x) \!\!\!\!\!\!\!\mod J_k)\text{ where }k = \overline{\Part}\left(\frac{\HamVec(x)}{L}\right).
  \]
\end{df}

\begin{thm}
  Let $A \subset \mathbb R$ be an LP-solvable ring. Let $\mathfrak A = A^D$. Let $\Part : \mathfrak A \to S$ be an open partition. Let $\mathcal M  = \{M_k : \mathbb Z^D/J_k \to E \mid k \in S\}$ be a collection of maps. Let $\Gamma = (\Gamma_P = \{P_i \in D^{\ar_i} : i \in I\}, \Gamma_Q = \{Q_i \in E^{\ar_i}\})$ be a promise template. Assume that there exists $\SIMPLEX_{\Part, L, \mathcal M} \in \Pol(\Gamma)$ for arbitrarily large $L$. Then, $\PCSP(\Gamma) \in \mathsf{P}^{\Part}$. 
\end{thm}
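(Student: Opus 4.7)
The strategy mirrors the proof of Theorem~\ref{thm:regional-periodic}, with the $b$-fold product structure replaced by a single block of size $L$ and with $\mathbb{Z}^b$ replaced by $\mathbb{Z}^D$. A welcome simplification is that $\Part$ is here a genuine open partition rather than an integer open partition, so each simplex corner $e^d$ lies in the interior of its region and no special-case analysis at the corners is needed.

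First I would consolidate ideals via $J := \bigcap_{k \in S} J_k$; since each $\mathbb{Z}^D/J_k$ is finite, so is $\mathbb{Z}^D/J$, and every $M_k$ factors through the projection $\mathbb{Z}^D/J \to \mathbb{Z}^D/J_k$. By pigeonhole on the infinite admissible family of $L$'s, some joint residue $(\hat r^{(d)})_{d \in D} \in (\mathbb{Z}^D/J)^{|D|}$ satisfies $L e^d \equiv \hat r^{(d)} \pmod{J}$ for every $d \in D$ and for infinitely many $L$; restrict attention to this subsequence. Then build a homomorphic sandwich with $\mathcal{A} = (A)$ and $\mathcal{R} = (\mathbb{Z} \subset \mathbb{Z}^D/J)$ (LE-solvable since $\mathbb{Z}$ is LE-solvable via Hermite normal form and $\mathbb{Z}^D/J$ is finite), taking
\[
  g(d) = (e^d,\, \hat r^{(d)}) \in A^D \times \mathbb{Z}^D/J, \qquad h(V, W) = M_{\overline{\Part}(V)}(W).
\]

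The crux is checking that $h$ carries the relaxed template into $\Gamma_Q$. Given a pair $(V, W)$ in the LP-affine relaxation of $P_i$, expand $V_k = \sum_{j=1}^m \alpha_j e^{X^j_k}$ and $W_k = \sum_{j=1}^m r_j \hat r^{(X^j_k)}$ coordinatewise, with $X^1, \dots, X^m$ enumerating $P_i$ and $\sum_j \alpha_j = \sum_j r_j = 1$. Pick $N \ge 1$ with $N\mathbb{Z}^D \subseteq J$ (which exists since $\mathbb{Z}^D/J$ is finite), and choose nonnegative integers $w_j$ summing to $L$ with $w_j \equiv L r_j \pmod{N}$ and $|w_j - \alpha_j L| = O(mN)$; this system is consistent because $\sum_j L r_j = L$, and is achievable by rounding each $\alpha_j L$ to the nearest integer in the correct residue class mod $N$ and then redistributing an $O(mN)$ deficit across the weights (the same maneuver used in the proof of Theorem~\ref{thm:regional-periodic}). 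Feeding the concatenation with multiplicities $w_j$ into $\SIMPLEX_{\Part, L, \mathcal{M}}$ yields some $Y \in Q_i$. For each coordinate $k$, the fractional vector $\HamVec(\text{column } k)/L$ lies within $O(mN/L)$ of $V_k$, so for $L$ large enough the openness of the regions gives $\overline{\Part}(\HamVec/L) = \overline{\Part}(V_k)$; and the chosen congruence yields $\HamVec(\text{column } k) \equiv L\sum_j r_j e^{X^j_k} \equiv \sum_j r_j \hat r^{(X^j_k)} = W_k \pmod{J}$. Therefore $Y_k = M_{\overline{\Part}(V_k)}(W_k) = h(V, W)_k$, so $h(V, W) = Y \in Q_i$, and Theorem~\ref{thm:basiclp+affine} concludes $\PCSP(\Gamma) \in \mathsf{P}^{\Part}$.

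The main obstacle is the bookkeeping on the $w_j$: simultaneously enforcing the cardinality, congruence, nonnegativity, and approximation constraints. Because each $r_j$ is an arbitrary integer rather than a bounded fraction, one has to reason modulo $N$ throughout, but this adds no essential difficulty beyond the adjustment already carried out in the regional-periodic case.
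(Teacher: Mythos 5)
Your proposal is correct and follows essentially the same route as the paper: pigeonhole on the residue of $L$ modulo the intersection ideal $J$, the sandwich $g(d)=(e^d,\hat r e^d)$ with $h(V,W)=M_{\Part(V)}(W)$, integer weights satisfying cardinality, coset, and approximation conditions, and openness of the regions to finish. The only caveat is your consistency claim ``$\sum_j L r_j = L$'': since the $r_j$ sum to $1$ only in $\mathbb Z^D/J$, the congruences $w_j \equiv L r_j \pmod N$ are compatible with $\sum_j w_j = L$ only when $N$ is taken to be the exponent of $\mathbb Z^D/J$ (equivalently, when the coset condition is imposed on the diagonal vectors $(w_j,\dots,w_j)$ modulo $J$, as the paper does), which is the minimal choice your construction permits.
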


The proof has similar structure to the proofs of Theorem~\ref{thm:boolean-threshold-periodic} and Theorem~\ref{thm:regional-periodic}.

\begin{proof}
  In order to control the periodic components, we again need a notion of \emph{residue}. For a given integer $L$, we define its residue with respect to $\mathcal M$ to be the indexed list
  \[
    \Res_{\mathcal M}(L) = ((L, \hdots, L) \mod J_k : k \in S).
  \]
  Since $\mathbb Z^D/J_k$ is a finite quotient for all $k \in S$, the set of all possible residues is finite. Thus, there exists a residue $\hat{r} := (\hat{r}_{k}^d : k \in S, d \in D)$ such that $\hat{r} = \Res_{\mathcal M}(L)$ for infinitely many $L$ such that $\SIMPLEX_{\Part, L, \mathcal M} \in \Pol(\Gamma)$.

  As in the regional-periodic case, we apply the ring-theoretic Chinese Remainder Theorem to our quotient rings. Let $J = \bigcap_{k \in S} J_k$ be the intersection of the ideals of $\mathbb Z^D$. Note that $J$ is also an ideal of $\mathbb Z^D$. Furthermore, $\mathbb Z^D/J$ is finite, as any two elements of $x, y \in \mathbb Z^D$ with the same residue satisfy $x - y \in J_k$ for all $k \in S$, so $x - y \in J$, implying a finite number of cosets. This implies that we can identify $\hat{r}$, with an element of $\mathbb Z^D/J$.

  Now, let $\mathcal A = A^D$ and $\mathcal R = \mathbb Z^D/J$. Let $\mathcal R'$ be the subring of $\mathbb Z^D/J$ generated by $(1, \hdots, 1)$. It is not hard to see that $\hat{r} \in \mathcal R'$.  Recall that $\LPLE_{\mathcal A, \mathcal R' \subset \mathcal R}$ is the direct product
  \[
    \LPLE_{\mathcal A, \mathcal R'\subset \mathcal R} = \LP_{A^D} \times \LE_{\mathcal R' \subset \mathbb Z^D/J}.
  \]
  We claim that $\Gamma$ sandwiches $\LPLE_{\mathcal A, \mathcal R}$ via the maps $(g, h)$ where
  \begin{align}
    g(d) &= (e^d, \hat{r}e^d), d \in D\nonumber\\
    h((x_1, \hdots, x_{|D|}), r) &= M_k(r)\text{ where $k = \Part(x_1, \hdots, x_{|D|})$.}\label{eq:hh}
  \end{align}

  Since  $A$ is LP-solvable, and $\mathcal R' \subset \mathbb Z^D/J$ is a pair of commutative rings (and so is LE-solvable), by Theorem~\ref{thm:basiclp+affine}, it suffices to demonstrate  that $\Gamma$ sandwiches the following predicate
  \[
    \LPLE_{\Gamma} := \{R_i := (\Conv_{A^D}(g_1(P_i)), \Aff_{\mathcal R'}(g_{2}(P_i))): P_i \in \Gamma_P\}.
  \]
  As before, by definition, $g$ is a homomorphism from $\Gamma_P$ to $\LPLE_{\Gamma}$. Thus, it suffices to show
  \[\text{for all }(V, W) \in R_i,\text{ we have that }h(V, W) \in Q_i.\]
  
  Let $X^1, \hdots, X^m$ be the elements of $P_i$. Since $V \in \Conv_{A^D}(g_1(P_i))$, we have that there exists $\alpha_{1}, \hdots, \alpha_{m} \in [0, 1]$ summing to $1$ such that 
  \[
    V = \alpha_1g_{1}(X^1) + \cdots + \alpha_{m}g_1(X^m).
  \]
  
  Likewise, since $W \in \Aff_{\mathcal R'}(g_2(P_i))$, we have that there exist $r_1, \hdots, r_m \in \mathcal R'$ which sum to the identity $(1, \hdots, 1) \in \mathbb Z^D/J$ such that
  \[
    W = r_1g_{2}(X^1) + \cdots + r_mg_{2}(X^m).
  \]
  
  Consider $L$ sufficiently large (to be specified later) such that $\Res_{\mathcal M}(L) = \hat{r} \in R'$. Find nonnegative integer weights $w_{1}, \hdots, w_{m}$ satisfying the following conditions:
  \begin{align*}
    \sum_{k=1}^m w_{k} = L&&\text{(cardinality condition)}\\
    (w_k, \hdots, w_k) \in r_k\hat{r} + J&\text{ for all $k \in [m]$}&\text{(coset condition)}\\
    \left|w_{k} - \alpha_{k}L\right| \le 2|\mathbb Z^D/J|m&\text{ for all $k \in [m].$} &\text{(approximation condition)}
  \end{align*}
  This can be done by first estimating $\hat{w}_{k} = \lfloor \alpha_{k}L\rfloor$ and then adjusting each as little as possible (at most $|\mathbb Z^D/J|$)  so that $w_k$ are in the appropriate cosets. Then, it is not hard to check by the compatibility of the conditions that
    \[
      T := (L, \hdots, L) - \sum_{k=1}^m (w_k, \hdots, w_k) \in J
    \]
    If we then add the repeated element of $T$ to $w_1$, then the cardinality and coset constraints are satisfied.  Note that the entries of $T$ are bounded by $|\mathbb Z^D/J|m$, so the approximation condition is also still satisfied.

  Now consider
  \begin{align*}
    Y := \SIMPLEX_{\Part, L, \mathcal M}(&\underbrace{X^1, \hdots, X^1}_{w_{1}\text{ copies}}, \hdots, \underbrace{X^m, \hdots, X^m}_{w_{m}\text{ copies}}) \in Q_i.
  \end{align*}

  We seek to prove that $h(V, W) = Y$, showing that $h(V, W) \in Q_i$. For each coordinate  $k \in \{1, \hdots, \ar_i\}$ (recall $\ar_i$ is the arity of $P_i$, $Q_i$ and $R_i$) and $d \in D$ we can define
  \begin{align*}
    s_k^d &:= \frac{1}{L}\HamVec(\underbrace{X_k^1, \hdots, X_k^1}_{w_{1}\text{ copies}}, \hdots, \underbrace{X_k^m, \hdots, X_k^m}_{w_m\text{ copies}})_d.\\
              &= \frac{1}{L}\sum_{\beta = 1}^m w_{\beta} 1[X_k^\beta = d]\\
              &\approx \sum_{\beta = 1}^m \alpha_{\beta} 1[X_k^\beta = d] = V_{d, k},
  \end{align*}
  where $\approx$ means $O(\frac{1}{L})$ error. Thus, if $L$ is sufficiently large, for all $k \in [\ar_i]$, $(s^1_k, \hdots, s^{|D|}_k)$ will be in the same region of $\Part$ as $(V_{1,k}, \hdots, V_{|D|,k})$ because the regions are relatively open.

  Furthermore, for all $k \in [\ar_i]$ define
  \begin{align*}
    s^{\mathcal R}_k &= \sum_{\beta=1}^m (w_{\beta}, \hdots, w_{\beta})e^{X^\beta_k}\\
                     &\in J + \sum_{\beta=1}^m r_\beta(\hat r e^{X^\beta_k})\\
                     &= J + \sum_{\beta =1}^m r_\beta g_{b+1}(X^\beta_k) = W_k.
  \end{align*}

  Thus, for all $k \in [\ar_i]$,
  \begin{align*}
    Y_k &= M_{\overline{\Part}(s^{1}_k, \hdots, s^{|D|}_k)}(s^{\mathcal R}_k)\\
        &= M_{\Part(V_{1,k}, \hdots, V_{|D|,k})}(W_k)\text{ (by above discussion)}\\
        &= h_k(V, W)\text{ (by (\ref{eq:hh}))}.
  \end{align*}
  Thus, $Y = h(V, W)$, so we have established the homomorphic sandwich, completing the proof.
\end{proof}

\section{Conclusions}\label{sec:conclusion}

Our algorithms show how rich and diverse algorithms can be for promise CSPs as in comparison to classical CSP theory. In particular, finite promise CSPs can often demand algorithms which require infinite domains! There are many challenges for extending these algorithmic results to wider classes of polymorphisms. These challenges range from more topological inquiries to fundamental questions about infinite-domain CSPs.

One aspect of promise CSPs that was not utilized in this paper is that when the template $\Gamma$ is finite, $\Pol(\Gamma)$ is ``finitizable'' (c.f., \cite{DBLP:conf/soda/BrakensiekG18}), which means that there exists a constant $R_{\Gamma} > 0$, such that $f \in \Pol(\Gamma)$ if and only if all of its projections of arity $R_{\Gamma}$ are in $\Pol(\Gamma)$. Such a property may give a topological foothold (e.g., compactness) which could allow for more general classification. For instance, it is certainly possible that if a $\Gamma$ is finite, and $\Pol(\Gamma)$ contains (block) symmetric polynomials for arbitrarily large arities, then $\Pol(\Gamma)$ contains an infinite family of regional or regional-periodic polymorphisms (or some slight variant thereof) with consistent parameters. To prove such a result, a topological theory of polymorphisms needs to be developed.

Another important question is whether generalizations of the Basic LP, such as the Sherali-Adams or Sum-of-Squares hierarchies, correspond to classes of infinite CSPs that can be sandwiched by finite Promise-CSPs. Semidefinite programming may be especially useful for non-Boolean domains, as there is an algorithm known for Example 7 of Section~\ref{subsec:PCSP}, using SDPs [folklore].

Probably the most important--and daunting--question regarding promise CSPs is identifying the polymorphic dividing line, if it exists at all, between tractable and intractable promise CSPs. In the case of CSPs, having a single nontrivial cyclic polymorphism--$f(x_1, \hdots, x_n) = f(x_2, \hdots, x_n, x_1)$--is enough to imply tractability \cite{DBLP:conf/dagstuhl/BartoKW17}. One plausible conjecture for promise CSPs consistent with current knowledge is having infinitely many \emph{block transitive} polymorphisms suffices for tractability. that is $f : D^{B_1} \times \cdots \times D^{B_d} \to E$ which have the property that for all $i, j \in B_k$ for every $k$ there is a permutation $\pi$ of the coordinates such that $\pi(i) = j$. Even so, it is rather likely that there are tractable promise CSPs without an infinite family of polymorphisms of this form.

Note that there is still much work that needs to be done on the hardness side of the dichotomy. As shown by the struggles of the hardness of approximation community to solve the approximate graph coloring problem, stronger versions of the PCP theorem are desired. The recent breakthrough on the 2--to--2 conjecture \cite{DBLP:journals/eccc/DinurKKMS16,DBLP:conf/stoc/KhotMS17,DBLP:journals/eccc/DinurKKMS17,DBLP:journals/eccc/KhotMS18} is an encouraging step in this direction, although its impact on promise CSPs such as the approximate graph coloring problem is limited due to the fact that the current version lacks perfect completeness.

Another exciting direction for future exploration is understanding, for both CSPs and promise CSPs, what insight these polymorphisms shed on the existence of "fast" exponential-time algorithms for $\mathsf{NP}$-hard constraint templates. Such questions have been investigated for CSPs, most notably the work of \cite{DBLP:conf/soda/LagerkvistJNZ13}, which showed that the fundamental universal algebraic object are \emph{partial polymorphisms}, maps $f : D^L \to D \cup \{\perp\}$ for which the tuples mapping to $\perp$ are ignored. They also identified the ``easiest'' $\mathsf{NP}$-hard templates, but indicated that an exhaustive classification is currently out of reach. The perspective given in this work of considering threshold-periodic and regional-periodic polymorphisms can be easily extended to partial polymorphisms by adding $\perp$ as an extra element of the domain. The study of these families of partial polymorphisms and their utility in designing algorithms beating brute force is the subject of work in preparation.

\section*{Acknowledgments}
The authors thank Anupam Gupta and Ryan O'Donnell for helpful discussions about linear programming.

The authors also thank Libor Barto, Andrei Krokhin and Jakub Opr{\v s}al for a myriad of helpful comments on this paper at Dagstuhl Seminar 18231 on "The Constraint Satisfaction Problem: Complexity and Approximability."

\appendix

\section{Proofs of the Promise Homomorphism Theorems}\label{app:hom-proofs}

\basiclp*

\begin{proof}[Proof of Theorem~\ref{thm:basiclp}]
  We give an algorithm for both the decision and search version.
  \begin{framed}
    \begin{itemize}
    \item Write the Basic LP relaxation of $\Psi_P(x_1, \hdots, x_n)$.
    \item Solve the Basic LP over the ring $A$ to get a solution $(v_i \in A^k)_{i \in [n]}$. \textbf{Reject} if no solution.
    \item For all $i \in [n]$, set $y_i := h(v_i).$ \textbf{Accept} and output $(y_1, \hdots, y_n).$
    \end{itemize}
    \centering{\textbf{Algorithm \thesection.1.} Solving and rounding a Basic LP.}
  \end{framed}

  First we explain why this algorithm is correct. Assume $\Psi_P$ has a satisfying assignment, then the Basic LP must also have a satisfying assignment. Let $\LP_{\Gamma} = \{R_i := \Conv_{A^k}(S_i) : i \in I, S_i \subset \mathbb Z^{k\ar_i}\}$. Since $g$ is a homomorphism from $\Gamma_P$ to $\LP_{\Gamma}$, we have that $g(P_i) \subset R_i$ for all $i \in I$. In particular, this implies that $\Conv_{A^k}(g(P_i)) \subset R_i$. Thus, any solution to the Basic LP is a satisfying assignment of
  \[
    \Psi_R(x_1, \hdots, x_n) := \bigwedge_{j \in J} R_{i_j}(x_{j_1}, \hdots, x_{j_{\ar_{i_j}}}).
  \]
  Now, since $h$ is a homomorphism is a from $\LP_{\Gamma}$ to $\Gamma_Q$, any satisfying assignment to $\Psi_R$ (and thus to the Basic LP) maps via $h$ to a satisfying assignment to $\Psi_Q$. Thus, the algorithm correctly solves the search problem, and thus it also solves the decision problem.

  Finally, we explain why this algorithm lies in $\mathsf{P}^h$. Note that that Basic LP can be computed in linear time in the size of $\Psi_P$, and thus the instance can be solved in polynomial time since $A$ is LP-solvable (note that we need the range of $g$ to be in $\mathbb Z^k$ to ensure that the LP has integer coefficients). The ``rounding'' step uses an oracle to $h$, so $\PCSP(\Gamma) \in \mathsf{P}^h$.
\end{proof}

\affine*

\begin{proof}[Proof of Theorem~\ref{thm:affine}]
  Consider the following algorithm.
  \begin{framed}
    \begin{itemize}
    \item Write the affine relaxation of $\Psi_P(x_1, \hdots, x_n)$.
    \item Solve the affine relaxation over the pair $(R', R)$ to get a solution $r_1, \hdots, r_n \in R$. \textbf{Reject} if no solution.
    \item For all $i \in [n]$, set $y_i := h(r_i).$ \textbf{Accept} and output $(y_1, \hdots, y_n).$
    \end{itemize}
        \centering{\textbf{Algorithm \thesection.2.} Solving and rounding an affine relaxation.}
  \end{framed}

  First we explain why this algorithm is correct. Assume $\Psi_P$ has a satisfying assignment, then the affine relaxation must also have a satisfying assignment. Let $\LE_{\Gamma} = \{R_i := \Aff_{R'}S_i) : i \in I, S_i \subset R^{\ar_i}\}$. Since $g$ is a homomorphism from $\Gamma_P$ to $\LE_{\Gamma}$, we have that $g(P_i) \subset R_i$ for all $i \in I$. In particular, this implies that $\Aff_R(g(P_i)) \subset R_i$. Thus, any solution to the Basic LP is a satisfying assignment of
  \[
    \Psi_R(x_1, \hdots, x_n) := \bigwedge_{j \in J} R_{i_j}(x_{j_1}, \hdots, x_{j_{\ar_{i_j}}}).
  \]
  Now, since $h$ is a homomorphism is a from $\LE_{\Gamma}$ to $\Gamma_Q$, any satisfying assignment to $\Psi_R$ (and thus to the affine relaxation) maps via $h$ to a satisfying assignment to $\Psi_Q$. Thus, the algorithm correctly solves the search problem, and thus it also solves the decision problem.

  Like in the previous proof, the relaxation has size linear in the input. Since $(R', R)$ is LE-solvable, the relaxation can be solved in polynomial time. The last step uses an oracle to $h$, so $\PCSP(\Gamma) \in \mathsf{P}^h$.
\end{proof}

\basiclpplusaffine*

\begin{proof}[Proof of Theorem~\ref{thm:basiclp+affine}]
  We use an algorithm which is a combination of the techniques in Theorem~\ref{thm:basiclp} and Theorem~\ref{thm:affine}.
  \begin{framed}
    \begin{itemize}
    \item For each $A_j \in \mathcal A$
      \begin{itemize}
      \item Write the Basic LP relaxation of $\Psi_P(x_1, \hdots, x_n)$.
      \item Solve the Basic LP over the ring $A_j$ to get a solution $(v_{j,i} \in A_j^{k_j})_{i \in [n]}$. \textbf{Reject} if no solution.
      \end{itemize}
    \item For each $R'_j \subset R_j \in \mathcal R$
      \begin{itemize}
      \item Write the affine relaxation of $\Psi_P(x_1, \hdots, x_n)$.
            \item Solve the affine relaxation over the pair $R'_j \subset R_j$ to get a solution $r_{j,1}, \hdots, r_{j,n} \in R_j$. \textbf{Reject} if no solution.
      \end{itemize}
    \item For all $i \in [n]$, set $y_i := h(v_{1,i}, \hdots, v_{\ell,i}, r_{1,i}, \hdots, r_{m,i}).$ \textbf{Accept} and output $(y_1, \hdots, y_n).$
    \end{itemize}
        \centering{\textbf{Algorithm \thesection.3.} Solving multiple Basic LPs and affine relaxations with simultaneous rounding.}
  \end{framed}

  Let $\LPLE_{\Gamma} = \{R_i : i \in I\}$ be the particular CSP with signature the same signature as $\Gamma$ such that $g$ is a homomorphism from $\Gamma_P$ to $\LPLE_{\Gamma}$ and $h$ is a homomorphism from $\LPLE_{\Gamma}$ to $\Gamma_Q$. Assume that $\Psi_P$ has a satisfying assignment, then each Basic LP and affine relaxation is satisfiable. Then, by the same logic as the previous two proofs, the solutions to all the linear programs and linear systems put together satisfies the corresponding instance of $\LPLE_{\Gamma}$:
  \[
    \Psi_R(x_1, \hdots, x_n) := \bigwedge_{j \in J} R_{i_j}(x_{j_1}, \hdots, x_{j_{\ar_{i_j}}}).
  \]
  Finally, since $h$ is a homomorphism from $\LPLE_{\Gamma}$ to $\Gamma_Q$, any satisfying assignment to $\Psi_R$ maps to a satisfying assignment to $\Psi_Q$, so the algorithm is correct for the search version and thus also for the decision version.
  
  Like in the previous proof, the relaxation has size linear in the input. Since each $A_i$ is LP-solvable (and the map $g$ ensures the original LPs have integer coefficients) and each $R'_i\subset R_i$ is LE-solvable, the relaxation can be solved in polynomial time. The last step uses an oracle to $h$, so $\PCSP(\Gamma) \in \mathsf{P}^h$.
\end{proof}

\bibliographystyle{alpha}
\bibliography{../../bib/master_clean}

\newcommand{\etalchar}[1]{$^{#1}$}
\begin{thebibliography}{KKMO07}

\bibitem[AB92]{DBLP:journals/mp/AdlerB92}
Ilan Adler and Peter~A. Beling.
\newblock Polynomial algorithms for {LP} over a subring of the algebraic
  integers with applications to {LP} with circulant matrices.
\newblock {\em Math. Program.}, 57:121--143, 1992.

\bibitem[AB94]{DBLP:journals/algorithmica/AdlerB94}
Ilan Adler and Peter~A. Beling.
\newblock Polynomial algorithms for linear programming over the algebraic
  numbers.
\newblock {\em Algorithmica}, 12(6):436--457, 1994.

\bibitem[ADFS04]{Alon}
N.~Alon, I.~Dinur, E.~Friedgut, and B.~Sudakov.
\newblock Graph {Products}, {Fourier Analysis} and {Spectral Techniques}.
\newblock {\em Geometric \& Functional Analysis GAFA}, 14(5):913--940, 2004.

\bibitem[AGH17]{DBLP:journals/siamcomp/AustrinGH17}
Per Austrin, Venkatesan Guruswami, and Johan H{\aa}stad.
\newblock (2+{\(\epsilon\)})-sat is np-hard.
\newblock {\em {SIAM} J. Comput.}, 46(5):1554--1573, 2017.

\bibitem[Bel01]{DBLP:journals/algorithmica/Beling01}
Peter~A. Beling.
\newblock Exact algorithms for linear programming over algebraic extensions.
\newblock {\em Algorithmica}, 31(4):459--478, 2001.

\bibitem[BG16]{DBLP:conf/coco/BrakensiekG16}
Joshua Brakensiek and Venkatesan Guruswami.
\newblock New hardness results for graph and hypergraph colorings.
\newblock In Ran Raz, editor, {\em 31st Conference on Computational Complexity,
  {CCC} 2016, May 29 to June 1, 2016, Tokyo, Japan}, volume~50 of {\em LIPIcs},
  pages 14:1--14:27. Schloss Dagstuhl - Leibniz-Zentrum fuer Informatik, 2016.

\bibitem[BG18]{DBLP:conf/soda/BrakensiekG18}
Joshua Brakensiek and Venkatesan Guruswami.
\newblock Promise constraint satisfaction: Structure theory and a symmetric
  boolean dichotomy.
\newblock In Artur Czumaj, editor, {\em Proceedings of the Twenty-Ninth Annual
  {ACM-SIAM} Symposium on Discrete Algorithms, {SODA} 2018, New Orleans, LA,
  USA, January 7-10, 2018}, pages 1782--1801. {SIAM}, 2018.
\newblock Full version available as ECCC TR16-183.

\bibitem[BJK05]{Bulatov2005}
A.~Bulatov, P.~Jeavons, and A.~Krokhin.
\newblock Classifying the {Complexity} of {Constraints} {Using} {Finite}
  {Algebras}.
\newblock {\em SIAM Journal on Computing}, 34(3):720--742, January 2005.

\bibitem[BKO18]{BKO18}
Jakub Bul{\' i}n, Andrei Krokhin, and Jakub Opr{\v s}al.
\newblock Algebraic view on promise constraint satisfaction and hardness of
  coloring a $d$-colorable graph with $2d-1$ colors.
\newblock unpublished, 2018.

\bibitem[BKW17]{DBLP:conf/dagstuhl/BartoKW17}
Libor Barto, Andrei~A. Krokhin, and Ross Willard.
\newblock Polymorphisms, and how to use them.
\newblock In Andrei~A. Krokhin and Stanislav Zivny, editors, {\em The
  Constraint Satisfaction Problem: Complexity and Approximability}, volume~7 of
  {\em Dagstuhl Follow-Ups}, pages 1--44. Schloss Dagstuhl - Leibniz-Zentrum
  fuer Informatik, 2017.

\bibitem[BR15]{DBLP:journals/corr/Brown-CohenR15}
Jonah Brown{-}Cohen and Prasad Raghavendra.
\newblock Combinatorial optimization algorithms via polymorphisms.
\newblock {\em CoRR}, abs/1501.01598, 2015.

\bibitem[Bul06]{DBLP:journals/jacm/Bulatov06}
Andrei~A. Bulatov.
\newblock A dichotomy theorem for constraint satisfaction problems on a
  3-element set.
\newblock {\em J. {ACM}}, 53(1):66--120, 2006.

\bibitem[Bul17]{DBLP:conf/focs/Bulatov17}
Andrei~A. Bulatov.
\newblock A dichotomy theorem for nonuniform csps.
\newblock In Umans \cite{DBLP:conf/focs/2017}, pages 319--330.

\bibitem[Che09]{Chen2009}
Hubie Chen.
\newblock A {Rendezvous} of {Logic}, {Complexity}, and {Algebra}.
\newblock {\em ACM Comput. Surv.}, 42(1):2:1--2:32, December 2009.

\bibitem[Coh93]{DBLP:books/lib/Cohen93}
Henri Cohen.
\newblock {\em A course in computational algebraic number theory}, volume 138
  of {\em Graduate texts in mathematics}.
\newblock Springer, 1993.

\bibitem[DKK{\etalchar{+}}16]{DBLP:journals/eccc/DinurKKMS16}
Irit Dinur, Subhash Khot, Guy Kindler, Dor Minzer, and Muli Safra.
\newblock Towards a proof of the 2-to-1 games conjecture?
\newblock {\em Electronic Colloquium on Computational Complexity {(ECCC)}},
  23:198, 2016.

\bibitem[DKK{\etalchar{+}}17]{DBLP:journals/eccc/DinurKKMS17}
Irit Dinur, Subhash Khot, Guy Kindler, Dor Minzer, and Muli Safra.
\newblock On non-optimally expanding sets in grassmann graphs.
\newblock {\em Electronic Colloquium on Computational Complexity {(ECCC)}},
  24:94, 2017.

\bibitem[Fru76]{frumkin1976algorithm}
MA~Frumkin.
\newblock An algorithm for the reduction of a matrix of integers to triangular
  form with power complexity of the computations.
\newblock {\em Ekonomika i Matematicheskie Metody}, 12(1):173--178, 1976.

\bibitem[FV98]{DBLP:journals/siamcomp/FederV98}
Tom{\'{a}}s Feder and Moshe~Y. Vardi.
\newblock The computational structure of monotone monadic {SNP} and constraint
  satisfaction: {A} study through datalog and group theory.
\newblock {\em {SIAM} J. Comput.}, 28(1):57--104, 1998.

\bibitem[GK04]{DBLP:journals/siamdm/GuruswamiK04}
Venkatesan Guruswami and Sanjeev Khanna.
\newblock On the hardness of 4-coloring a 3-colorable graph.
\newblock {\em {SIAM} J. Discrete Math.}, 18(1):30--40, 2004.

\bibitem[GLS93]{grotschel2012geometric}
Martin Gr{\"o}tschel, L{\'a}szl{\'o} Lov{\'a}sz, and Alexander Schrijver.
\newblock {\em Geometric {Algorithms} and {Combinatorial} {Optimization}},
  volume~2.
\newblock Springer Science \& Business Media, 1993.

\bibitem[Hua13]{DBLP:conf/approx/Huang13}
Sangxia Huang.
\newblock Improved hardness of approximating chromatic number.
\newblock In Prasad Raghavendra, Sofya Raskhodnikova, Klaus Jansen, and
  Jos{\'{e}} D.~P. Rolim, editors, {\em Approximation, Randomization, and
  Combinatorial Optimization. Algorithms and Techniques - 16th International
  Workshop, {APPROX} 2013, and 17th International Workshop, {RANDOM} 2013,
  Berkeley, CA, USA, August 21-23, 2013. Proceedings}, volume 8096 of {\em
  Lecture Notes in Computer Science}, pages 233--243. Springer, 2013.

\bibitem[JLNZ13]{DBLP:conf/soda/LagerkvistJNZ13}
Peter Jonsson, Victor Lagerkvist, Gustav Nordh, and Bruno Zanuttini.
\newblock Complexity of {SAT} problems, clone theory and the exponential time
  hypothesis.
\newblock In Sanjeev Khanna, editor, {\em Proceedings of the Twenty-Fourth
  Annual {ACM-SIAM} Symposium on Discrete Algorithms, {SODA} 2013, New Orleans,
  Louisiana, USA, January 6-8, 2013}, pages 1264--1277. {SIAM}, 2013.

\bibitem[KB79]{DBLP:journals/siamcomp/KannanB79}
Ravindran Kannan and Achim Bachem.
\newblock Polynomial algorithms for computing the smith and hermite normal
  forms of an integer matrix.
\newblock {\em {SIAM} J. Comput.}, 8(4):499--507, 1979.

\bibitem[KKMO07]{DBLP:journals/siamcomp/KhotKMO07}
Subhash Khot, Guy Kindler, Elchanan Mossel, and Ryan O'Donnell.
\newblock Optimal inapproximability results for {MAX-CUT} and other 2-variable
  csps?
\newblock {\em {SIAM} J. Comput.}, 37(1):319--357, 2007.

\bibitem[KLS00]{DBLP:journals/combinatorica/KhannaLS00}
Sanjeev Khanna, Nathan Linial, and Shmuel Safra.
\newblock On the hardness of approximating the chromatic number.
\newblock {\em Combinatorica}, 20(3):393--415, 2000.

\bibitem[KMS17]{DBLP:conf/stoc/KhotMS17}
Subhash Khot, Dor Minzer, and Muli Safra.
\newblock On independent sets, 2-to-2 games, and grassmann graphs.
\newblock In Hamed Hatami, Pierre McKenzie, and Valerie King, editors, {\em
  Proceedings of the 49th Annual {ACM} {SIGACT} Symposium on Theory of
  Computing, {STOC} 2017, Montreal, QC, Canada, June 19-23, 2017}, pages
  576--589. {ACM}, 2017.

\bibitem[KMS18]{DBLP:journals/eccc/KhotMS18}
Subhash Khot, Dor Minzer, and Muli Safra.
\newblock Pseudorandom sets in grassmann graph have near-perfect expansion.
\newblock {\em Electronic Colloquium on Computational Complexity {(ECCC)}},
  25:6, 2018.

\bibitem[KOT{\etalchar{+}}12]{DBLP:conf/innovations/KunOTYZ12}
G{\'{a}}bor Kun, Ryan O'Donnell, Suguru Tamaki, Yuichi Yoshida, and Yuan Zhou.
\newblock Linear programming, width-1 csps, and robust satisfaction.
\newblock In Shafi Goldwasser, editor, {\em Innovations in Theoretical Computer
  Science 2012, Cambridge, MA, USA, January 8-10, 2012}, pages 484--495. {ACM},
  2012.

\bibitem[McD93]{McDiarmid1993}
Colin McDiarmid.
\newblock A {Random} {Recolouring} {Method} for {Graphs} and {Hypergraphs}.
\newblock {\em Combinatorics, Probability and Computing}, 2(3):363--365,
  September 1993.

\bibitem[Pip02]{Pippenger2002}
Nicholas Pippenger.
\newblock Galois theory for minors of finite functions.
\newblock {\em Discrete Mathematics}, 254(1):405--419, June 2002.

\bibitem[Rag08]{prasad}
Prasad Raghavendra.
\newblock Optimal algorithms and inapproximability results for every {CSP}?
\newblock In {\em Proceedings of the 40th Annual {ACM} Symposium on Theory of
  Computing}, pages 245--254, 2008.

\bibitem[Rag09]{RaghavendraThesis}
Prasad Raghavendra.
\newblock {\em Approximating NP-hard problems: Efficient algorithms and their
  limits}.
\newblock PhD thesis, University of Washington, 2009.

\bibitem[Sch78]{Schaefer:1978}
Thomas~J. Schaefer.
\newblock The complexity of satisfiability problems.
\newblock In {\em Proceedings of the Tenth Annual ACM Symposium on Theory of
  Computing}, STOC '78, pages 216--226, New York, NY, USA, 1978. ACM.

\bibitem[TZ16]{DBLP:journals/jacm/ThapperZ16}
Johan Thapper and Stanislav Zivny.
\newblock The complexity of finite-valued csps.
\newblock {\em J. {ACM}}, 63(4):37:1--37:33, 2016.

\bibitem[Uma17]{DBLP:conf/focs/2017}
Chris Umans, editor.
\newblock {\em 58th {IEEE} Annual Symposium on Foundations of Computer Science,
  {FOCS} 2017, Berkeley, CA, USA, October 15-17, 2017}. {IEEE} Computer
  Society, 2017.

\bibitem[Zhu17]{DBLP:conf/focs/Zhuk17}
Dmitriy Zhuk.
\newblock A proof of {CSP} dichotomy conjecture.
\newblock In Umans \cite{DBLP:conf/focs/2017}, pages 331--342.

\end{thebibliography}

\end{document}